\DeclareMathOperator*{\argmax}{arg\,max}
\newtheorem{theorem}{Theorem}
\newtheorem{lemma}{Lemma}
\newtheorem{claim}{Claim}
\newtheorem{corollary}{Corollary}
\newcommand{\bE}{\mathbb{E}}
\newcommand{\bR}{\mathbb{R}}
\newcommand{\bU}{\mathbb{U}}
\newcommand{\cD}{\mathcal{D}}
\newcommand{\cI}{\mathcal{I}}
\newcommand{\cM}{\mathcal{M}}
\newcommand{\cP}{\mathcal{P}}
\title{Information Elicitation for Bayesian Auctions \thanks{The first author thanks Matt Weinberg for reading a draft of this paper and for helpful discussions. The authors thank Constantinos Daskalakis, J\'{a}nos Flesch, Hu Fu, Pinyan Lu, Silvio Micali, Rafael Pass, Andr\'{e}s Perea, Elias Tsakas, several anonymous reviewers, and the participants of seminars at Stony Brook University, Shanghai Jiaotong University, Shanghai University of Finance and Economics, Maastricht University, MIT, and IBM Thomas J. Watson Research Center for helpful comments.
This work is partially supported by NSF CAREER Award No. 1553385.}}
\author{Jing Chen \hspace{30pt} Bo Li \hspace{30pt} Yingkai Li  \hspace{30pt} \\
Department of Computer Science, Stony Brook University\\
Stony Brook, NY 11794, USA\\
\texttt{\{jingchen, boli2, yingkli\}@cs.stonybrook.edu}}
\date{}
\begin{document}

\maketitle

\begin{abstract}
In this paper we design information elicitation mechanisms for Bayesian auctions.
While in Bayesian mechanism design the distributions
of the players' private types are often assumed to be common knowledge,
information elicitation considers the situation
where the players know the distributions better than the decision maker.
To weaken the information assumption in Bayesian auctions,
we consider an information structure where
the knowledge about the distributions is
{\em arbitrarily scattered} among the players.
%and use information elicitation approach to design mechanisms for Bayesian auctions.
In such an unstructured information setting, we
design mechanisms for unit-demand auctions and additive auctions
that {\em aggregate} the players' knowledge,
generating revenue that are constant approximations to
the optimal Bayesian mechanisms with a common prior.
Our mechanisms are 2-step dominant-strategy truthful and
the revenue increases gracefully with the amount of knowledge the players collectively have.

\medskip
\noindent
{\bf Keywords.} game theory, mechanism design, information elicitation, distributed knowledge, removing common prior

\end{abstract}

\thispagestyle{empty}
\newpage

%\tableofcontents
%\newpage

\setcounter{page}{1}

\section{Introduction}\label{sec:intro}
%In this paper we study the removal of the {\em common prior assumption} from Bayesian mechanisms.

Bayesian auction design has been extremely flourishing since the seminal work of \cite{myerson1981optimal}.
% and \cite{cremer1988full}.
One of the main focuses is to generate revenue, by selling $m$ heterogenous items to $n$ players.
Each player has a private valuation function describing how much he values each subset of the items,
and the valuations are drawn from prior distributions.
An important assumption in Bayesian mechanism design is that the distributions are commonly known by the seller and the players ---the {\em common prior} assumption.
However, as pointed out by another seminal work \cite{wilson1985game},
such common knowledge is ``rarely present in experiments and never in practice'',
and ``only by repeated weakening of common knowledge
assumptions will the theory approximate reality.''

In this paper, we weaken the information assumption about the seller and the players by adopting an information elicitation approach \cite{miller2005eliciting}.
We consider a framework for auctions
where the knowledge about the players' value distributions are
{\em arbitrarily scattered} among the players and the seller.
The seller must {\em aggregate} pieces of
information from all players to gain a good understanding about the distributions,
so as to decide how to sell the items.

As in information elicitation, the players get rewards for reporting their knowledge.
However, different from classic information elicitation
where a player's utility is exactly his reward,
in our model a player's utility comes not only from his knowledge,
but also from participating in the auction (i.e., from buying items). % and competing for items.
Moreover,
information elicitation usually assumes the prior distribution is correlated:
each player observes a private signal and reports the corresponding posterior distribution.
This means every player has information about every other player.
In our model, following the convention in multi-item auctions, the players' value distributions for individual items
are assumed to be independent. A player may be totally ignorant about some players and only partially knows some other players' distributions.
%In Bayesian auctions, however, players' valuation distributions are often assumed to be independent.

%However,
%it is still a demanding requirement that some agent,
%whether the seller or a player,
%individually possesses good knowledge about {\em all} players' value distributions.
%As our main conceptual contribution, in this paper we consider a framework
%for auctions where knowledge about the players' value distributions are
%{\em arbitrarily scattered} among the players and the seller.
%The seller can no longer base his mechanism
% on a single agent's knowledge, and must really {\em aggregate} pieces of
% information from all players to gain a good understanding about the distributions.

We focus on unit-demand auctions and additive auctions ---two valuation types widely studied
in the literature \cite{chawla2007algorithmic,hart2012approximate}.
In such auctions, a player's valuation function is specified by $m$ values, one for each item.
For each player~$i$ and item $j$, the value  $v_{ij}$ is independently drawn from a distribution~ $\cD_{ij}$.
Each player privately knows his own values and some (or none) of the distributions of some other players for some items,
like long-time competitors in the market.
There is no constraint about who knows which distributions.
%It is possible that nobody knows how the whole valuation profile is distributed,
% and some value distributions are not known by anybody.
The seller may also know some of the distributions, but he does not know which player knows what.
A player may or may not know his own value distributions. However, it is hard to elicit a player's knowledge about his own distribution, and we are not aware of any such study in information elicitation. Thus we do not consider the players' self-knowledge.

We introduce directed {\em knowledge graphs} to succinctly describe
the players' knowledge. Each player knows the distributions of his neighbors,
different items' knowledge graphs may be totally different, and the structures of the graphs
are {\em not known} by anybody.
Interestingly, the intuition behind such an information structure has long been considered by philosophers.
In \cite{james1979some},
the author discussed a world where ``everything in the world might be known by somebody, yet not everything by the same knower.''
%Under such an unstructured information setting, we
%are able to design mechanisms that aggregate all players' (and the seller's)
%knowledge and generate good revenue compared with the optimal Bayesian revenue when there is a common prior.
%As our mechanisms literally ``crowdsource'' information from every player to
%fill in the jigsaw puzzle of the value distributions,
%we continue referring to them as  {\em crowdsourced Bayesian mechanisms.}
%However, compared with the original notion put forward in \cite{azar2012crowdsourced},
%this notion now has a much broader and richer meaning.
%We formalize our model in Section~\ref{sec:model}.
Below we briefly state our main results.

\subsection{Main Results}\label{subsec:results}

\paragraph{Under arbitrary knowledge graphs.}
Our goal is to design {\em 2-step dominant strategy truthful} (2-DST)
information elicitation mechanisms whose expected revenue approximates
that of the optimal Bayesian incentive compatible (BIC) mechanism, denoted by~$OPT$.%
%We may also compare the revenue of our mechanisms with that of the optimal DST Bayesian mechanism, denoted by $OPT_D$.%
\footnote{A Bayesian mechanism is BIC if it is a Bayesian Nash equilibrium for all players to report their true values.}
%, and
%DST if it is dominant for each player to report his true values.}
In order for the seller to aggregate the players' knowledge about the distributions,
it is natural for the mechanism to ask each player to report his knowledge to the seller, together with his own values.
A 2-DST mechanism  \cite{azar2012crowdsourced} is such that,
{\em (1) no matter what knowledge the players may report about each other, it is {\em dominant} for each player to report his true values; and (2) given that all players report their true values,
it is {\em dominant} for each player to report his true knowledge about others.}

When the knowledge graphs are such that
some distributions are not known by anybody,
it is easy to see that no information elicitation mechanism
can be a bounded approximation to $OPT$.
Thus it is natural to consider the following benchmark:
the optimal BIC mechanism
applied to players and items for whom the distributions are indeed known by somebody, denoted by $OPT_K$.
This is a natural benchmark when considering players with limited knowledge
and, if every distribution is known by somebody, then it
is exactly $OPT$.
We have the following, formalized in Section~\ref{sec:k=0}.

\vspace{3pt}
\noindent
{\bf Theorems \ref{thm:unit} and \ref{thm:newbvcg}.} (sketched) {\em
For any knowledge graph,
there is a 2-DST information elicitation mechanism for unit-demand auctions
%$\cM'_{CSUD}$ that is 2-DST and
with revenue
$\geq \frac{OPT_K}{96}$,
and such a mechanism for
additive auctions
% instances
%there exists a crowdsourced mechanism $\cM'_{CSA}$ that is 2-DST and achieves
with revenue
$\geq \frac{OPT_K}{70}$.
%where $\cI'$ is the benchmark instance.
}

\smallskip
To prove Theorem~\ref{thm:unit},
we actually show a general result:
any Bayesian mechanism for unit-demand auctions that is a good approximation in the COPIES setting
(formally defined in Section \ref{subsec:unit})
can be converted to information elicitation mechanisms; see Theorem \ref{col:copies}.
This applies to a large class of Bayesian mechanisms,
including the ones in \cite{chawla2007algorithmic,kleinberg2012matroid,chawla2010multi}.
%
%%\vspace{5pt}
%%\noindent
%%{\bf Theorem \ref{col:copies}.} (informally stated) {\em Any Bayesian mechanism for unit-demand auctions that is a good approximation in the COPIES setting can be crowdsourced.}
%
%\smallskip

To prove Theorem~\ref{thm:newbvcg},
we have developed a novel approach for using the {\em adjusted revenue}~\cite{yao2015n}.
Although this concept is very useful in Bayesian auctions,
it was unexpected that
we found an interesting and highly non-trivial way of using it to
analyze information elicitation mechanisms.
%(See Lemma 6 in the full version \cite{chen2017bayesian}).

%and proved a useful technical lemma:
%the optimal adjusted revenue can be crowdsourced
%(See Lemma 6 in the full version \cite{chen2017bayesian}).

\vspace{-8pt}
\paragraph{When everything is known by somebody.}
When the knowledge graphs become denser, the amount of knowledge increases and the seller may
generate more revenue.
Indeed, if every distribution is known by somebody, $OPT_K=OPT$.
We show the revenue that can be generated by information elicitation mechanisms {\em increases} gracefully
together with the amount of knowledge.
More precisely, for any integer $k\geq 1$, let $\tau_k = \frac{k}{(k+1)^{\frac{k+1}{k}}}$.
Note $\tau_1 = \frac{1}{4}$ and $\tau_k \rightarrow 1$
when $k$ gets larger.
%Also, $k$ can be much smaller than $n$ for $\tau_k$ to be close to 1.
We have the following theorems, formalized in
Section~\ref{sec:partial}.

\smallskip
\noindent
{\bf Theorems \ref{thm:unit-k} and \ref{thm:additive}.} (sketched) {\em
$\forall k\in [n-1]$, when each distribution is known by at least $k$ players,
there is a 2-DST information elicitation mechanism for unit-demand auctions
with revenue $\geq \frac{\tau_k}{24}\cdot OPT$,
and such a mechanism for additive auctions
with revenue $\geq \max\{\frac{1}{11}, \frac{\tau_k}{6+2\tau_k}\} OPT$. %, where $\tau_k = \frac{k}{(k+1)^{\frac{k+1}{k}}}$.
}
\smallskip

%Note that $\tau_1 = \frac{1}{4}$ and $\tau_k \rightarrow 1$
%when $k$ gets larger. Also, $k$ can be much smaller than $n$ for $\tau_k$ to be close to 1.
%Furthermore,
%both the common prior assumption and the crowdsourced setting considered in \cite{azar2012crowdsourced} imply $k=n-1$
%---``everything is known by everybody''.
%%would be the information setting considered in \cite.
%
Finally, by exploring the knowledge graph's combinatorial structure,
%for single-item auctions, when the knowledge graph
%shows more combinatorial structure, the revenue can be
we have the following for single-good auctions.

\smallskip
\noindent
{\bf Theorem \ref{thm:myerson}.} (sketched) {\em When the knowledge graph is 2-connected,%
\footnote{A directed graph is 2-connected if for any node $i$, the graph with $i$ and all adjacent edges removed is still strongly connected.
%For knowledge graphs, this means no player is an ``information hub'', without whom the players will split into two parts such that one part has no information about the other.
}
there is a 2-DST information elicitation mechanism for single-good auctions with revenue $\geq (1-\frac{1}{n})OPT$.
}

\vspace{-4pt}

%%%%%%%%%%%%%%%%%%%%%%%%%%%%%

\subsection{Discussions}
\vspace{-2pt}
\paragraph{The use of scoring rules.}
Since our mechanisms elicit the players' knowledge about each other's value distributions,
we will use {\em scoring rules} (see, e.g., \cite{brier1950verification}) to reward the players for their reported knowledge, as typical
in information elicitation. % \cite{miller2005eliciting,radanovic2013robust}.
However, the use of scoring rules does not solve the main problems in our auctions.
%, because they do not help solving the main difficulties in designing crowdsourced mechanisms.
Indeed,
because a player's utility comes both from the reward and from participating in the auction,
the difficulties
in designing information elicitation mechanisms are to guarantee that,
even without rewarding the players for their knowledge,
(1) it is dominant for each player to report his true values,
(2) reporting his true knowledge {\em never hurts him}, and
(3) the resulting revenue approximates the desired benchmark.
%We call mechanisms satisfying properties (1) and (2) {\em weakly 2-DST} mechanisms.
%Following the widely adopted convention that a player tells the truth
%as long as lying does not strictly benefit him, we are done when all three properties hold.

Accordingly, in Sections \ref{sec:k=0} and \ref{sec:partial}
we focus on designing information elicitation mechanisms without rewarding the players. % for their knowledge.
Scoring rules are used later solely to break the utility-ties and make it {\em strictly better} for a player to
report his true knowledge.
%, and they are used in a
%straightforward manner.
In Appendix \ref{sec:buyknowledge},
%\ref{sec:buyknowledge},
we show how to add scoring rules to our mechanisms.

\vspace{-10pt}

\paragraph{Extensions of our results.}
In our main results, the seller asks the players to report
the distributions in their entirety, without being concerned with the communication complexity for doing so.
This is common in information elicitation and allows us to focus on the main difficulties in aggregating
the players' knowledge.
In Appendix \ref{app:efficient},
we show how to modify our mechanisms
so that the players only report a small amount of information about the distributions.

Furthermore, in the main body of this paper we consider auction settings where a player
$i$'s knowledge about another player $i'$ for an item $j$ is exactly the prior distribution
$\cD_{i'j}$. This simplifies the description of the knowledge graphs.
In Appendix \ref{app:refine},
we consider settings where a player may observe private signals about
other players and can further refine the prior.

\vspace{-10pt}
\paragraph{Future directions.}
%In this paper we have strictly weakened the knowledge assumption about the seller and the players
%in Bayesian auctions,
%%and model the players' individual knowledge as knowledge graphs.
%and provided important insights about
%the relationship between the amount of knowledge in the system and
%the achievable  revenue.
As Bayesian auctions require the seller (and the players under common-prior assumption) has correct knowledge about {\em all} distributions,
%Although we strictly weaken this assumption,
in our main results
we do not consider scenarios where players have ``insider'' knowledge.
If the insider knowledge is correct (i.e., is a refinement of the prior),
then our mechanisms' revenue
increases; see Appendix \ref{app:refine}.
%~\ref{app:refine}.
%If the insider knowledge may be wrong, the problem is closely related to
%{\em robust mechanism design}~\cite{bergemann2012robust},
%which is a very important topic in game theory but not the focus of this paper.
Still, how to aggregate even the incorrect information that the players may have about each other %'s distributions
is a very interesting question for future studies.

Another important direction is to elicit players' information
for BIC mechanisms.
For example, the BIC mechanisms in \cite{cremer1988full,cai2012optimal}
are optimal in their own settings, and
it is unclear how to convert them to information elicitation mechanisms.
%As pointed out by \cite{cremer1988full}, the common prior assumption seems to be crucial for its mechanism.
%removing the common prior assumption.

\subsection{Related Work}\label{sec:related}

\paragraph{Information elicitation.}
Following \cite{miller2005eliciting},
information elicitation has become an important research area in the past decade \cite{radanovic2013robust,zhang2014elicitability,kong2016putting}.
A mechanism asks each player to report his private signal
and his private knowledge about the prior distribution.
The decision maker wants the mechanism to be BIC,
%to elicit such information truthfully at Bayesian Nash equilibrium,
and a player is rewarded based on his reported distribution and the other players' reported signals.
% and is rewarded based on everybody's report.
Different from auctions, there are no allocations or prices, and a player's utility equals his reward.
Proper scoring rules \cite{brier1950verification,cooke1991experts} are widely used.
In Appendix \ref{sec:buyknowledge}, we will formally define scoring rules and
use them to reward the players for their knowledge. % in our mechanisms.

%, a lot of effort has been devoted to.

Most studies on information elicitation require a common prior.
Mechanisms without this assumption are considered by \cite{witkowski2012peer},
and our work is information elicitation in auctions without a common prior.
Moreover, information elicitation does not consider the players to
have any cost for revealing their knowledge.
It would be interesting to include such costs in the general model as well as in ours, to see how the mechanisms will change accordingly.

\vspace{-10pt}
\paragraph {Bayesian auction design.}
In his seminal work \cite{myerson1981optimal}, Myerson introduced the first optimal Bayesian mechanism for single-good auctions,
which also applies to many single-parameter settings \cite{archer2001truthful}.
%Following
%Single good optimal auction \cite{myerson1981optimal}
%,riley1981optimal};
%Single minded \cite{lehmann2002truth};
%In \cite{cremer1988full}, the authors show that, in single-good auctions, when the players' values are highly correlated,
%a revenue equal to the highest true value can be extracted.
Since then, there has been a huge literature on designing (approximately)
optimal Bayesian mechanisms that are either
BIC or dominant-strategy truthful (DST); see \cite{hartline2007profit} for an introduction to this literature.
%In particular, \cite{ronen2001approximating} provided the 1-Lookahead mechanism which is a 2-approximation in single-good auctions even when the players' values are correlated.
Mechanisms for multi-parameter settings have been constructed recently.
%Approximate mechanism for single parameter setting \cite{, ronen2001approximating};
In \cite{cai2012optimal}, the authors characterize optimal BIC mechanisms for combinatorial auctions.
%Optimal mechanism characterization for multi-parameter auctions;
For unit-demand auctions, \cite{chawla2007algorithmic,chawla2010multi,kleinberg2012matroid,cai2016duality}
construct DST Bayesian mechanisms that are constant approximations.
% to the optimal mechanism.
For additive auctions,
\cite{hart2012approximate,li2013revenue,yao2015n,cai2016duality} provide logarithmic or constant approximations under different conditions.
Finally, 
for subadditive auctions,  
logarithmic or constant approximations 
are provided in \cite{babaioff2014simple,chawla2016mechanism,cai2017simple}.

\vspace{-10pt}
\paragraph{Removing the common prior assumption.}
Following \cite{wilson1985game},
 %and \cite{cremer1988full},
a lot of effort has been made
%in the literature trying
to remove the common prior assumption. % in game theory. %\cite{segal2003optimal,goldberg2006competitive}.
In DST Bayesian mechanisms it suffices to assume that the seller knows the prior distribution.
%\cite{kleinberg2012matroid,yao2015n}.
%Many works in this direction try to further
%weaken the assumption and
%consider settings where {\em the seller may be totally ignorant.}
%common prior assumption is very strong and
In prior-free mechanisms \cite{hartline2008optimal,devanur2009limited} the distribution is unknown and the seller learns it from the values of randomly selected players.
In \cite{cole2014sample,zhiyi2016side,morgenstern2016learning} the seller observes independent samples from the distribution before the auction begins.
In \cite{chen2013mechanism,chen2015tight} the players have arbitrary possibilistic belief hierarchies
about each other.
%In \cite{chen2017query}, the seller can only access the distributions via specific oracle queries.
In robust mechanism design \cite{bergemann2012robust}
the players have arbitrary probabilistic belief hierarchies.
 %and the studies have shown necessary and sufficient conditions for a social correspondence to be robustly implementable.
In crowdsourced Bayesian auctions \cite{azar2012crowdsourced} each player privately knows {\em all} the distributions (or their refinements), which is a special case of our model.
Indeed, all knowledge graphs will be complete graphs under their setting (that is, everybody knows everything),
while we allow arbitrary knowledge graphs.
%nformation structure here is truly distributed.
In Appendixes  \ref{app:refine} and \ref{sec:player-wise} we further discuss how to elicit the players' knowledge refinements,
and how to handle correlated distributions in a setting that is a special case of our model but is still more general than that of \cite{azar2012crowdsourced}.

\section{Preliminaries}\label{sec:model}

In this work, we focus on multi-item auctions with $n$ players (denoted by $N$) and $m$ items (denoted by $M$).
A player $i$'s value for an item~$j$, $v_{ij}$, is independently drawn from a distribution $\cD_{ij}$.
%For general combinatorial auctions, each $v_i$, player $i$'s valuation function, is independently drawn from $\cD_i$.
% where each player's valuation function is
Let $v_i = (v_{ij})_{j\in M}$, $\cD_i = \times_{j\in M} \cD_{ij}$ and $\cD =\times_{i\in N} \cD_i$.
Player $i$'s value for a subset $S$ of items is $\max_{j\in S} v_{ij}$ in {\em unit-demand} auctions, and is $\sum_{j\in S} v_{ij}$ in {\em additive} auctions.
%Our settings are downward closed \cite{hartline2009simple},
The players' utilities, denoted by $u_i$, are quasi-linear, and the players are risk-neutral.

\paragraph*{Knowledge graphs.}
It is illustrative to model the players' knowledge graphically. %
\footnote{We could have defined the players' knowledge
using the standard notion in epistemic game theory
\cite{harsanyi1967games,aumann1976agreeing,fhmvbook}:
roughly speaking, the state space consists of all possible distributions of the valuation profile, and
player~$i$ knows $\cD_{i'j}$ if he is in an information set
where all distributions have the $(i',j)$-th component equal to $\cD_{i'j}$.
However, the knowledge graph is a more succinct representation and is enough for the purpose of this work.}
We consider a vector of {\em knowledge graphs}, $G = (G_j)_{i\in M}$, one for each item. Each $G_j$ is a directed graph with
%In single-parameter settings, the {\em knowledge graph} has
$n$ nodes, one for each player.
For any $i\neq i'$, an edge $(i, i')$ is in $G_j$ if and only if player $i$ knows~$\cD_{i'j}$.
%A knowledge graph does not have self-loops.
%: a player $i$'s knowledge about his own
%value distributions is not considered.
%
%\footnote{Player $i$ may know his own distributions,
%but this knowledge is neither used by our mechanisms nor affecting $i$'s strategies.}
%there is an edge from $i$ to $j$ if and only if player $i$ (privately) knows $\cD_{j}$.%
%\footnote{We could have defined the players' knowledge
%using the standard notion
%\cite{harsanyi1967games, aumann1976agreeing, fhmvbook}:
%roughly speaking, the state space consists of all possible distributions of the valuation profile, and
%% all distributions in the same information set of player $i$ have the same $j$-th component.
%%player~$i$'s information sets form a finer partition of the partition according to the distribution of $j$'s value.
%player~$i$ knows $\cD_{i'j}$ if he is in an information set
%where all distributions have the $(i',j)$-th component equal to $\cD_{i'j}$.
%However, the knowledge graph is a more succinct representation and is enough for the purpose of this paper.}
There is no constraint about the knowledge graphs: the same player's distributions for different items may be known by
different players, different players' distributions for the same item may also be known by different players,
and some distributions may not be known by anybody.
Each player knows his own out-going edges, and neither the players nor the seller knows the whole graph.
%we distinguish two cases:
%{\em partial information},
%where for each player $i$ and item $j$, there exists a player $i'$ who knows $\cD_{ij}$; and {\em player-wise information}, where, for each player $i$, there exists a player $i'$ who knows the whole distribution $\cD_i$ of $i$'s valuation.
%Accordingly, the partial information setting corresponds to $m$ knowledge graphs, one for each item,
%and each player may only know part of $\cD_i$; while the player-wise information setting corresponds to a single knowledge graph.

We measure the amount of knowledge in the system by the number of players knowing each distribution.
For any $k\in \{0,1\dots,n-1\}$, a knowledge graph is {\em $k$-informed} if each node has in-degree at least $k$: a player's distribution is known by at least $k$ other players.
The vector $G$ is $k$-informed if all knowledge graphs are so.
%In partial information settings,
%there is no constraint about whom these $k$ players are for player $i$'s values for different items. It
%one set of $k$ players know $\cD_{i1}$, another set of $k$ players know $\cD_{i2}$, etc.
Note that every knowledge graph is $0$-informed,
%a partial information setting with
and ``everything is known by somebody'' when $k\geq 1$.
%, and $1$-informedness is a very weak assumption about the players' knowledge, and in fact the weakest when ``everything is known''.
%in the crowdsourced Bayesian model.
A common prior would imply all knowledge graphs
%(whether of partial information or player-wise information)
are complete directed graphs, or $(n-1)$-informed, which is the strongest condition in our model.
%We denote by $\cC_{n, m, k}$ the set of crowdsourced Bayesian auction instances with $n$ players and $m$ items, where the knowledge graphs are $k$-informed.
%In our setting, the information of distribution is not a common knowledge.
%We assume each player posses a belief about someone else's distribution.
%Let's use $K_{i} = \{(i',j) |$ player $i$ knows $D_{i'j}\}$ to denote the set of player-item pair whose distribution is known to player $i$.  Note that $K_{i}$ may be empty set, i.e., $i$ does not have any information at all.\footnote{When we consider correlated valuations, each player either has full information about another player or has no information. The elements in $K_i$ are simplified to players rather than player item pairs.}
%The belief $B_i: K_{i} \rightarrow \Delta(\Omega)$ of player $i$ is a mapping from each player-item in $K_i$ to its' corresponding distribution.
%We assume that each player has correct belief, i.e, $\forall (i',j) \in K_i, B_i(i',j) = D_{i'j}$. Let all players belief profile be $B=\times_{i\in N}B_{i}$.
The seller's knowledge can be naturally incorporated into the knowledge graphs by
considering him as a special ``player 0''.
%, whose out-going edges
%represent the distributions he knows.
All our mechanisms can easily utilize the seller's knowledge, and we will not further discuss this issue.
%The mechanism can ensure that no item is ever ``sold'' to the seller and his price is always 0.
 %receives any item and has a price 0.
%one can add a node for the seller, with the out-going edges representing the distributions he knows.
%This node has in-degree 0 as the seller does not have a value for the item.

\paragraph*{Information elicitation mechanisms.}
%Given the set $N = [n]$ of players, the set $M = [m]$ of items, and the distribution~$\cD$,
Let $\hat{\cI} = (N, M, \cD)$ be a Bayesian auction instance and
$\cI = (N, M, \cD, G)$ a corresponding information elicitation instance, where $G$ is a knowledge graph vector.
Different from Bayesian mechanisms, which has $\cD$ as input,
an information elicitation mechanism has neither $\cD$ nor $G$ as input.
Instead, it asks each player $i$ to report a valuation $b_i = (b_{ij})_{j\in M}$ and
a {\em knowledge} $K_i = \times_{i'\neq i, j\in M} \cD^i_{i'j}$ ---a distribution for the valuation subprofile $v_{-i}$. $K_i$ may contain ``$\bot$'' at some places,
indicating $i$ does not know the corresponding distributions.
$K_i$ is $i$'s {\em true knowledge} if $\cD^i_{i'j} = \cD_{i'j}$ whenever $(i, i')\in G_j$, and $\cD^i_{i'j} = \bot$ otherwise.
%The seller's knowledge, if any, is also given to the mechanism as an input.
%, while in the worst case this part is just $\bot$.
An information elicitation mechanism maps a strategy profile $(b_i, K_i)_{i\in N}$ to an allocation and a price profile, and may be randomized.
To distinguish whether a mechanism $\cM$ is a Bayesian or an information elicitation mechanism, we may explicitly write $\cM(\hat{\cI})$ or $\cM(\cI)$.
%Again note that the latter does not mean $\cM$ has $\cD$ or $G$ as input.
The (expected) revenue of $\cM$ is denoted by $Rev(\cM)$, and sometimes by $\bE_{\cD} Rev(\cM)$ to emphasize the distribution.

%\begin{definition}
%Given that the belief for each player is the true distribution for other players,
An information elicitation mechanism is {\em 2-step dominant strategy truthful} (2-DST) if
%the following two conditions hold:
 \begin{itemize}
\item[(1)] For any player $i$, true valuation $v_i$,
%fixing any reported knowledge $K_i$, it is {\em (weakly) dominant} for $i$ to report his true valuation $v_i$. That is, for
 valuation $b_i$, knowledge $K_i$, and strategy subprofile $s_{-i} = (b_j, K_j)_{j\neq i}$ of the other players,
$u_i((v_i, K_i), s_{-i}) \geq u_i((b_i, K_i), s_{-i})$.
% for any strategy subprofile $s_{-i}$,
%and the inequality is strict for some $s_{-i}$.

\item[(2)]
%Given that all players report their true valuations, it is {\em (weakly) dominant} for each player $i$ to report his true knowledge.
For any player $i$, true valuation $v_i$, true knowledge $K_i$,
% $i$'s true knowledge by $TK_i$, then
% = \times_{i'\neq i, j\in [m]} \cD^i_{i'j}$, where $\cD^i_{i'j} = \cD_{i'j}$ if edge $(i, i')$ is in the knowledge graph for item $j$ and $\cD^i_{i'j} = \bot$ otherwise.
%for any true valuation $v_i$,
knowledge $K'_i$, and knowledge subprofile $K'_{-i}(v_{-i}) = (K'_j(v_j))_{j\neq i}$ of the other players,
where each $K'_j(v_j)$ is a function of player $j$'s true valuation~$v_j$,
$\bE_{v_{-i}\sim \cD_{-i}} u_i((v_i, K_i), (v_{-i}, K'_{-i}(v_{-i}))) \geq \bE_{v_{-i}\sim \cD_{-i}} u_i((v_i, K'_i), (v_{-i}, K'_{-i}(v_{-i})))$.%
\footnote{In Appendix \ref{sec:buyknowledge}, we introduce scoring rules to our mechanisms
so that the inequality is strict whenever $K'_i\neq K_i$.}
% for any knowledge subprofile $K'_{-i}$,
%and
%the inequality is strict for some $K'_{-i}$.
 \end{itemize}

\section{Under Arbitrary Knowledge Graphs}
\label{sec:k=0}

\subsection{Knowledge-Based Revenue Benchmark}
When the knowledge graphs can be totally arbitrary,
some distributions may not be known by anybody.
It is not hard to see that in this case, no information elicitation mechanism can be a bounded approximation
to $OPT$.
Indeed, if all but one value distributions of the players are constantly 0, and
if the only non-zero distribution, denoted by $\cD_{ij}$, is unknown by anybody,
then a Bayesian mechanism can find the optimal reserve price based on $\cD_{ij}$,
while an information elicitation mechanism can only set the price for player $i$ based on the reported values
of the other players, which are all 0.
%The mechanisms here can be directly applied to previous settings.

%
%As we have said in Section \ref{subsec:results}, if for some player $i$ and item $j$,
%nobody knows $\cD_{ij}$,
%then no crowdsourced Bayesian mechanism can achieve a bounded approximation to $OPT$. Indeed,
%it is possible that all other values are constantly 0, and $v_{ij}$ is the only source for revenue.
%%---for example, if, then the instance is essentially a single-player single-good auction.
%Although a Bayesian mechanism can find the optimal reserve price given $\cD_{ij}$,
%a crowdsourced Bayesian mechanism can only set the same reserve price for all distributions
%%, since the seller has no knowledge about the true distribution,
%and the approximation ratio is unbounded.
% %can guarantee revenue close to the optimal reserve price as.

Thus, for arbitrary knowledge graphs,
we define a natural revenue benchmark: the optimal Bayesian revenue
{\em on players and items for which the
distributions are known in the information elicitation setting.}
More precisely,
let $\hat{\cI} = (N, M, \cD)$ be a Bayesian instance and $\cI = (N, M, \cD, G)$
 a corresponding information elicitation instance.
Let $\cD'=\times_{i\in N, j\in M} \cD'_{ij}$ be such that $\cD'_{ij} = \cD_{ij}$ if
there exists a player $i'$ with $(i', i)\in G_j$, and $\cD'_{ij}$ is
constantly 0 otherwise.
We refer to $\cD'$ as {\em $\cD$ projected on $G$}.
Letting $\cI' = (N, M, \cD')$ be the resulting Bayesian instance,
the {\em knowledge-based} revenue benchmark is $OPT_K(\cI) \triangleq OPT(\cI')$, the optimal BIC revenue 
on $\cI'$.
This is a demanding benchmark in information elicitation settings:
it takes into consideration the knowledge of {\em all} players, no matter who knows what.
When everything is known by somebody, even if $G$ is only 1-informed, we will have $\cI' = \hat{\cI}$ and $OPT_K(\cI) = OPT(\hat{\cI})$.
%Below we show how to approximate $OPT_K$ in unit-demand auctions and additive auctions, starting with the former.
%where our mechanisms are easier to describe and analyze.

%\vspace{-5pt}
\subsection{Unit-Demand Auctions}\label{subsec:unit}

%\vspace{-3pt}

%
%We start with the crowdsourced Bayesian mechanism $\cM_{CSUD}$ for
%see Mechanism~\ref{alg:generalud}.
%, which takes
%%. It uses Myerson's mechanism in a non-blackbox way and takes
%as input a positive number~$\epsilon$ that can be arbitrarily small
For unit-demand auctions, sequential post-price Bayesian mechanisms have been constructed by \cite{chawla2010multi,kleinberg2012matroid}.
For information elicitation, if the seller asks the players to report both their values and knowledge, and directly uses the reported distributions  in these mechanisms,
%to set the prices,
then a player may want to withhold his knowledge about the other players.
By doing so, a player may prevent
the seller from selling the items to the others, so that the items are still available when it is his turn to buy.
%For example, if $i$ is the only person know knowledge about the distributions

A simple idea is to partition the players into two groups: a set of {\em reporters}
who will not receive any item and is only asked to report their knowledge;
and a set of {\em potential buyers} whose knowledge is never used.
It is possible that the reported knowledge may not contain a potential buyer's value distributions on all items,
%and the seller will not sell to~$i$ the items for which his distributions are not reported.
thus the technical part is to prove that
the seller generates a good revenue even though the players' knowledge is only partially recovered.

Our mechanism $\cM_{IEUD}$ is simple and intuitive; see Mechanism~\ref{alg:generalud},
where $\cM_{UD}$ is the Bayesian mechanism of \cite{kleinberg2012matroid}.
%We have the following theorem.
%
% and \cite{cai2016duality}
%The mechanism randomly partitions the players into two groups,
%asks one group to report their knowledge about the other,
%and runs the DST Bayesian mechanism of \cite{kleinberg2012matroid},
%denoted by $\cM_{UD}$, on the latter.
%Mechanism $\cM_{UD}$ is 24-approximation to $OPT$ as shown by \cite{kleinberg2012matroid} and \cite{cai2016duality}.
%Therefore our mechanism is simple and easy to implement in reality.
It's worth pointing out that, although mechanism $\cM_{UD}$ is used as a black-box,
Mechanism \ref{alg:generalud} is not a reduction from arbitrary Bayesian mechanisms.
Instead, we will prove a {\em projection lemma}
that allows such a reduction from an important class of Bayesian mechanisms, where mechanism $\cM_{UD}$ is an important example.
%and we choose $\cM_{UD}$ for the best approximation ratio.
We have the following theorem,
whose proof is sketched below and formal proof is provided in Appendix \ref{app:unknown:unit}.
%~\ref{app:unknown:unit}.

\begin{algorithm}[htbp]
\floatname{algorithm}{Mechanism}
  \caption{\hspace{-2pt}{\bf .} $\cM_{IEUD}$}
  \label{alg:generalud}
  \begin{algorithmic}[1]
 \STATE\label{step1} Each player $i$ reports to the seller a valuation $b_i = (b_{ij})_{j\in M}$ and a knowledge $K_i = (\cD^i_{i'j})_{i'\neq i, j\in M}$.

 \STATE Randomly partition the players into two sets, $N_1$ and $N_2$,
 where each player is independently put in each set with probability $\frac{1}{2}$.
 % and $N_2$ with probability $1-q = \frac{1}{2}$.
 %The value $q$ will be decided in the analysis.\label{step2}

 \STATE Set $N_3 = \emptyset$.
 % and $p_i = 0$ for each player $i$.

 \FOR {players $i\in N_1$ lexicographically}
   \STATE\label{step5} For each player $i'\in N_2$ and item $j\in M$,
   if $\cD'_{i'j}$ has not been defined yet and $\cD^i_{i'j} \neq \bot$,
   then set $\cD'_{i'j} = \cD^i_{i'j}$
%   (2) reward player $i$ using Brier's scoring rule: that is,
%   $p_{i}=p_i -\frac{\epsilon}{2mn}BSR(\cD'_{i'j}, b_{i'j})$;
and add player $i'$ to $N_3$.
%\label{step:scoring rule:ud}
  \ENDFOR

 \STATE \label{step12} For each $i\in N_3$ and $j\in M$
  such that $\cD'_{ij}$ is not defined,
  set $\cD'_{ij}\equiv 0$ (i.e., 0 with probability 1) and $b_{ij}=0$.

 \STATE Run mechanism $\cM_{UD}$ on the unit-demand Bayesian auction $(N_3, M, (\cD'_{ij})_{i\in N_3, j\in M})$,
 with the players' values being $(b_{ij})_{i\in N_3, j\in M}$.
  Let $x' = (x'_{ij})_{i\in N_3,\, j\in M}$ be the resulting allocation where $x'_{ij}\in \{0, 1\}$, and let $p' = (p'_i)_{i\in N_3}$ be the prices.
Without loss of generality,
$x'_{ij} = 0$ if $\cD'_{ij}\equiv 0$.\label{step13_1}

 \STATE For each player $i\not\in N_3$, $i$ gets no item and his price is $p_i =0$.

 \STATE For each player $i\in N_3$, $i$ gets item $j$ if $x'_{ij}=1$, and his price is $p_i = p'_i$.

\end{algorithmic}
\end{algorithm}

\begin{theorem}\label{thm:unit}
Mechanism $\cM_{IEUD}$ for unit-demand auctions is 2-DST and,
for any instances $\hat{\cI} = (N, M, \cD)$ and
$\cI = (N, M, \cD, G)$,
% and $\cI' = (N, M, \cD')$ with $\cD'$ being $\cD$ projected on $G$,
$Rev(\cM_{IEUD}(\cI)) \geq \frac{OPT_K(\cI)}{96}$.
\end{theorem}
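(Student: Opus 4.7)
I would split the proof into two parts: establishing the 2-DST property directly from the structure of Mechanism~\ref{alg:generalud}, and bounding the revenue via the projection lemma (Theorem~\ref{col:copies}) together with the approximation guarantee of $\cM_{UD}$ from~\cite{kleinberg2012matroid}.

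\textbf{2-DST.} The random partition into $N_1$ and $N_2$ is independent of all reports. Conditioning on the partition, a player $i\in N_1$ receives no item and pays $0$, so neither their values nor their knowledge affects utility. The same holds for $i\in N_2\setminus N_3$. For $i\in N_3\subseteq N_2$, the Bayesian mechanism $\cM_{UD}$ is executed on the recovered distributions, which are constructed solely from $N_1$'s reported knowledge (Step~\ref{step5}) and are therefore independent of $i$'s own knowledge report; hence $i$'s reported knowledge cannot alter $i$'s allocation or price, while the DST property of $\cM_{UD}$ makes truthful value reporting dominant regardless of anyone else's knowledge reports. Averaging over the partition preserves both conditions of 2-DST; the strict preference for truthful knowledge reporting is supplied by the scoring rules deferred to Appendix~\ref{sec:buyknowledge}.

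\textbf{Revenue.} Let $\cI'=(N,M,\cD')$ be $\cD$ projected on $G$, as defined in the excerpt. By $\cM_{UD}$'s COPIES-setting guarantee, $\bE_{v\sim\cD'}Rev(\cM_{UD}(\cI'))\geq OPT(\cI')/24=OPT_K(\cI)/24$. The key coupling is that, for every pair $(i,j)$ with at least one in-neighbor in $G_j$ (equivalently $\cD'_{ij}=\cD_{ij}$), the probability over the random partition that $\cM_{CSUD}$ actually plugs $\cD_{ij}$ into the invocation of $\cM_{UD}$, rather than overwriting it with $0$ in Step~\ref{step12}, is at least $\Pr[i\in N_2]\cdot\Pr[\exists\, i'\in N_1: (i',i)\in G_j\mid i\in N_2]\geq \tfrac{1}{2}\cdot\tfrac{1}{2}=\tfrac{1}{4}$, using independence of the coin flips across players. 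Invoking the projection lemma, which asserts that $\cM_{UD}$'s expected revenue degrades by at most the marginal retention probability when (player, item) pairs are randomly retained, yields $\bE Rev(\cM_{CSUD}(\cI))\geq \tfrac{1}{4}\cdot OPT_K(\cI)/24=OPT_K(\cI)/96$.

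\textbf{Main obstacle.} The heart of the argument is the projection lemma (Theorem~\ref{col:copies}). In $\cM_{CSUD}$ the (player, item) activations are highly \emph{correlated}: all pairs of a fixed player $i$ deactivate together whenever $i\in N_1$, and pairs sharing an in-neighbor are correlated through that in-neighbor's placement. Consequently the lemma must handle arbitrary joint activation distributions with only a marginal coverage guarantee, rather than assuming independence. The proof should exploit the fact that $\cM_{UD}$ operates in the COPIES setting, where each (player, item) pair is effectively a single-parameter agent governed by its own posted price; deleting a pair simply forgoes its own contribution without raising anyone else's. Linearity of expectation over the marginal $\tfrac{1}{4}$ coverage should then translate into a $\tfrac{1}{4}$ loss factor against the full COPIES revenue, completing the chain from $OPT_K$ down to $OPT_K/96$.
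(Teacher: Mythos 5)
Your 2-DST argument is correct and matches the paper's Lemma~\ref{ud:truthful}: the random partition is strategy-independent, players in $N_1$ (and $N_2\setminus N_3$) get nothing and pay nothing, and for $i\in N_3$ the recovered distributions are built entirely from $N_1$'s reports, so value-truthfulness follows from the DST property of $\cM_{UD}$.

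The revenue chain, however, has a real gap. You apply the factor-$\tfrac14$ retention probability directly to $\cM_{UD}$'s revenue, writing
$\bE\,Rev(\cM_{CSUD}(\cI)) \geq \tfrac14 \cdot \bE_{v\sim\cD'}Rev(\cM_{UD}(\cI'))$,
and you attribute this step to ``the projection lemma.'' But the projection lemma (it is Lemma~\ref{lem:proj} in the paper, not Theorem~\ref{col:copies}) says nothing about $\cM_{UD}$. It says
$OPT(\hat{\cI}^{CP}_{NM}) \geq OPT(\hat{\cI}^{CP})_{NM}$,
a statement about the \emph{optimal} revenue of the \emph{COPIES} instance, and this is precisely what makes the linearity-of-expectation trick sound: the COPIES setting is single-parameter, so the optimal COPIES revenue decomposes exactly as a sum of per-pair payments $P_{ij}$, and $\bE_{NM_3}\sum_{(i,j)\in NM_3} P_{ij} = \sum_{(i,j)}\Pr[(i,j)\in NM_3]\,P_{ij}\geq \tfrac14 \sum_{(i,j)} P_{ij}$. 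There is no analogous decomposition for $Rev(\cM_{UD}(\cdot))$ itself: $\cM_{UD}$ is a sequential posted-price mechanism in a multi-parameter unit-demand auction, and removing some $(i,j)$ pairs changes which items remain available for the others, so its revenue is not a sum of independent per-pair contributions and need not degrade proportionally under correlated random retention. Your ``Main obstacle'' paragraph actually gestures at the right fix (single-parameter COPIES agents, per-pair payments, linearity of expectation), but you have inserted that idea at the wrong joint of the chain.

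The correct composition, as in the paper, is in a different order and never invokes $\cM_{UD}$ on the full $\cI'$: first
$\bE\,Rev(\cM_{CSUD}(\cI)) = \bE_{NM_3}\bE\,Rev(\cM_{UD}(\hat{\cI}_{NM_3})) \geq \tfrac16 \bE_{NM_3}\,\bE\,OPT(\hat{\cI}_{NM_3}^{CP})$
by the COPIES approximation guarantee of \cite{kleinberg2012matroid} \emph{on the realized subinstance}; then the projection lemma gives $\bE\,OPT(\hat{\cI}_{NM_3}^{CP}) \geq \bE\,OPT(\cI'^{CP})_{NM_3}$; then linearity of expectation over per-pair COPIES payments gives $\bE_{NM_3}\,OPT(\cI'^{CP})_{NM_3}\geq \tfrac14 OPT(\cI'^{CP})$; and finally $OPT(\cI'^{CP})\geq\tfrac14 OPT(\cI')$ by \cite{cai2016duality}, yielding $\tfrac16\cdot\tfrac14\cdot\tfrac14 = \tfrac1{96}$. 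Your ``$/24$'' collapses the first and last factors prematurely and leaves the $\tfrac14$ coverage factor acting on an object for which no per-pair decomposition is available.
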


%From now on we may simply say that $\cM_{CSUD}$ is a $\frac{1}{96}$-approximation.

%We first consider the 2-DSTness.
\begin{lemma}\label{ud:truthful}
Mechanism ${\cal M}_{IEUD}$ is 2-DST.
\end{lemma}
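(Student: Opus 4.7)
The plan is to verify the two conditions of 2-DST separately. The structural fact that drives both arguments is the separation between ``reporters'' and ``potential buyers'': the set $N_3$ from which allocations may be made is a subset of $N_2$, and is therefore disjoint from $N_1$, whose knowledge reports are the only ones consulted when building $\cD'$. So a player whose knowledge report is used never receives an item, and a player who may receive an item never has his own knowledge consulted. Condition~(1) will then follow from DST of $\cM_{UD}$, and condition~(2) will follow essentially for free.

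For condition~(1), I would fix a player $i$, an arbitrary knowledge report $K_i$, an arbitrary report profile $(b_j,K_j)_{j\neq i}$ of the others, and a realization of the random partition. Crucially, whether $i$ lands in $N_1$, in $N_2\setminus N_3$, or in $N_3$ is determined by the coin flips and by the knowledge reports of $N_1$; it does not depend on $b_i$. In the first two cases $i$ gets no item and pays $0$, so $b_i$ is irrelevant. In the third case ($i\in N_3$), the distributions $\cD'$ constructed in Step~\ref{step5} and the set of items on which $b_{ij}$ is overridden to $0$ in Step~\ref{step12} are both functions of $N_1$'s knowledge reports alone; on the remaining items, $\cM_{UD}$ consumes $b_{ij}$ as reported, and its DST property on the Bayesian instance $(N_3, M, \cD')$ guarantees that $b_i=v_i$ pointwise weakly dominates any other report against every profile of the others. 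Averaging over the random partition preserves this weak dominance, yielding condition~(1).

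For condition~(2), I would simply observe that $K_i$ enters the mechanism only at Step~\ref{step5}, which is executed only for $i\in N_1$. If $i\in N_1$, then because $N_1\cap N_3=\emptyset$, player $i$ receives no item and pays $0$ regardless of the knowledge submitted; if $i\in N_2$, the report $K_i$ is never read. Hence $i$'s utility is identically independent of $K_i$ under every realization of the partition and every report profile of the others, and truthful knowledge reporting satisfies the condition with equality. The only mildly delicate point in the whole argument is to verify that the override of $b_{ij}$ to $0$ in Step~\ref{step12} does not break the DST invocation in case~(iii); but since the override depends only on $\cD'$ (hence on $N_1$'s knowledge reports) and not on $b_i$, the standard DST conclusion for $\cM_{UD}$ transfers unchanged to the coordinates that are left untouched, which is exactly where $b_i$ can do any work.
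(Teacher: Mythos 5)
Your structural observations are exactly the paper's: the random partition disentangles the use of values and knowledge, $N_3\subseteq N_2$ is disjoint from $N_1$, so reporters never receive items and potential buyers' knowledge is never read; condition~(2) then holds trivially, and condition~(1) should reduce to DST of $\cM_{UD}$. That part matches.

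There is, however, a real gap in your treatment of the ``mildly delicate point'' you flagged in case $i\in N_3$. You argue that since the override of $b_{ij}$ to $0$ depends only on $N_1$'s reports, and not on $b_i$, ``the standard DST conclusion for $\cM_{UD}$ transfers unchanged to the coordinates that are left untouched.'' That inference is not automatic. DST of $\cM_{UD}$ says: for a player whose \emph{true} valuation is some $\hat v_i$, reporting $\hat v_i$ on \emph{all} coordinates weakly dominates. But player $i$'s true valuation is $v_i$, which may be strictly positive on overridden items, and the only reports $i$ can cause $\cM_{UD}$ to see have those coordinates forced to $0$. DST says nothing directly about optimizing over a restricted set of reports for an agent whose true valuation lies outside the set. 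Indeed, if $\cM_{UD}$ were allowed to allocate an overridden item $j$ to $i$ (at price $\le 0$ by IR, since the seen bid is $0$), one can cook up a DST $\cM_{UD}$ where $i$ would profit by misreporting a \emph{non-overridden} coordinate precisely so as to trigger that free allocation of $j$, whose true value to him is large. So the override being fixed is necessary but not sufficient.

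What closes the gap is the stipulation in Step~\ref{step13_1} that, w.l.o.g., $x'_{ij}=0$ whenever $\cD'_{ij}\equiv 0$: overridden items are simply never allocated. Given that, player $i$'s realized utility under $\cM_{CSUD}$ coincides, for every report, with the utility of an imaginary player $\hat i$ whose true value is $0$ on the overridden items and agrees with $v_i$ elsewhere. DST of $\cM_{UD}$ applied to $\hat i$ then says reporting $\hat v_i$ is optimal; since $\hat v_i$ is exactly what $\cM_{UD}$ sees when $i$ reports $v_i$, the conclusion transfers to $i$. You should make both the no-overridden-allocation fact and the imaginary-player coupling explicit; without them the appeal to DST does not go through.
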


\vspace{-8pt}

\begin{proof}[Proof sketch]
%{\it Proof Sketch of Lemma \ref{lem:2dst}.}
The key is that
%${\cal M}_{CSUD}$,
the use of the players' values and the use of their knowledge are disentangled:
for players in $N_1$, the mechanism only uses their knowledge but not their values; and the opposite holds for players in $N_2$.
If a player~$i$ ends up in $N_2$, then whether he is in $N_3$ or not does not depend on his own strategy.
As mechanism $\cM_{UD}$ is DST and player $i$ is assigned to $N_2$ with positive probability,
it is dominant for him to report his true values in  ${\cal M}_{IEUD}$, no matter what the reported knowledge is.
Moreover, if a player $i$ ends up in $N_1$, then he is guaranteed to get no item and pay 0, thus reporting his true knowledge never hurts him.
\end{proof}

\vspace{-5pt}

In Appendix \ref{sec:buyknowledge},
we add scoring rules to mechanism $\cM_{IEUD}$ to reward the players' knowledge,
so that a player's utility will be strictly larger when he reports his true knowledge than when he lies about it.

To analyze the revenue of $\cM_{IEUD}$, note that it runs the
Bayesian mechanism on a smaller (randomized) Bayesian instance: $\hat{\cI}$ projected to
the player-item pairs $(i, j)$ such that $i\in N_3$ and $\cD_{ij}$ has been reported.
To understand how much revenue is lost by the projection,
we consider
%
%The analysis of the revenue bound for Mechanism $\cM_{CSUD}$ relies on
the  COPIES instance~\cite{chawla2010multi}, $\hat{\cI}^{CP} = (N^{CP}, M^{CP}, \cD^{CP})$, as a bridge between
the original Bayesian instance and the information elicitation instance.
$\hat{\cI}^{CP}$ is obtained from $\hat{\cI}$ by replacing each player with $m$ copies and each item with $n$ copies,
where a player $i$'s copy $j$ only wants item $j$'s copy $i$, with the value distributed according to $\cD_{ij}$.
Thus $\hat{\cI}^{CP}$ is a single-parameter auction,
with $N^{CP} = N\times M$, $M^{CP}=M\times N$, and $\cD^{CP} = \times_{(i, j)\in N^{CP}} \cD_{ij}$.

%Our crowdsourced mechanism achieves at least $\frac{1}{6}$ fraction of the optimal revenue on the COPIES instance of the reported distributions
%by running $\cM_{UD}$ as a blackbox on reported distributions \cite{kleinberg2012matroid}.
We now lower-bound the optimal BIC revenue
in the {\em projected COPIES instance.}
%For now,
%
%connects the optimal revenue on the projected COPIES instance and the optimal revenue of the whole COPIES instance projected on the subset of players and item.
% the COPIES instance of the reported distributions with that of the original unit-demand Bayesian instance.
For any subset $NM\subseteq N\times M$,
%the instance $\hat{\cI}_{NM}$ is referred to as {\em $\hat{\cI}$ projected to $NM$}
let $\hat{\cI}^{CP}_{NM}$ be $\hat{\cI}^{CP}$ projected to $NM$.
By definition, $OPT(\hat{\cI}^{CP}_{NM})$ is the optimal BIC revenue for $\hat{\cI}^{CP}_{NM}$.
Moreover, let $OPT(\hat{\cI}^{CP})_{NM}$ be the revenue of the optimal BIC mechanism for $\hat{\cI}^{CP}$ {\em obtained from players in $NM$.}
% we have the following.

\begin{lemma} [{\em The projection lemma}]\label{lem:proj}
For any $\hat{\cI}$ and $NM\subseteq N\times M$,
$OPT(\hat{\cI}^{CP}_{NM}) \geq OPT(\hat{\cI}^{CP})_{NM}$.
\end{lemma}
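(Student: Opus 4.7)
The plan is to construct an explicit mechanism $\cM'$ for the projected COPIES instance $\hat{\cI}^{CP}_{NM}$ whose expected revenue is at least $OPT(\hat{\cI}^{CP})_{NM}$, from which the inequality follows immediately. Let $\cM^*$ denote the optimal Bayesian mechanism for the full COPIES instance $\hat{\cI}^{CP}$. The key observation is that because each item copy in $\hat{\cI}^{CP}$ has a unique interested buyer, we can ``hallucinate'' the copy-agents outside $NM$ without interfering with the $NM$-copies: $\cM'$ will invoke $\cM^*$ on a hybrid profile consisting of the real bids from $NM$ together with simulated samples for the rest, and then execute $\cM^*$'s decisions only for the $NM$-copies.

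More concretely, upon receiving bids $\{b_{ij}\}_{(i,j)\in NM}$, the mechanism $\cM'$ independently draws a virtual value $\tilde{v}_{i'j'}\sim \cD_{i'j'}$ for each $(i',j')\notin NM$ (where $\cD_{i'j'}$ is the original, un-projected distribution; sampling from any fixed distribution is just internal randomization of $\cM'$ and requires no input from the players). It then runs $\cM^*$ on the combined profile and gives each copy-agent $(i,j)\in NM$ the allocation and price prescribed by $\cM^*$, while copies outside $NM$ receive nothing and pay nothing.

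I would then verify three properties. (i) \emph{Feasibility:} since each item copy in $\hat{\cI}^{CP}$ has a unique interested buyer and the COPIES feasibility structure is downward closed, discarding the allocations to non-$NM$ copies still yields a feasible allocation for the $NM$-copies. (ii) \emph{Incentive compatibility:} for any $(i,j)\in NM$, the joint distribution over the other copy-agents' bids under $\cM'$ is identical to the distribution $\cM^*$ would face in $\hat{\cI}^{CP}$ under truthful reporting, since non-$NM$ values are drawn from the true $\cD_{i'j'}$ and the remaining $NM$-agents are inductively assumed truthful; hence the BIC constraints transfer from $\cM^*$ to $\cM'$. Non-$NM$ agents in $\hat{\cI}^{CP}_{NM}$ have value $0$ almost surely and are trivially truthful since $\cM'$ ignores their reports. (iii) \emph{Revenue:} by construction, the expected revenue $\cM'$ collects equals the expected revenue $\cM^*$ extracts from the $NM$-copies on $\hat{\cI}^{CP}$, which is precisely $OPT(\hat{\cI}^{CP})_{NM}$.

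The hard part will be handling the feasibility argument cleanly under whatever structure the COPIES setting inherits (e.g., a matroid-type constraint limiting how many copies of each original agent may be served). Once one establishes that dropping a subset of the allocated copy-agents preserves feasibility, the rest reduces to bookkeeping on expected revenue and to checking that the simulated values do not leak information that breaks truthfulness for the active $NM$-copies --- both of which follow immediately from the fact that the simulation uses the true marginals and is independent of the $NM$-agents' reports.
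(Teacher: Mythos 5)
Your proposal is correct and takes essentially the same approach as the paper: construct a mechanism for the projected instance that internally samples the missing values from the true marginals, runs the optimal mechanism for the full COPIES instance on the hybrid profile, and projects the outcome. The only cosmetic difference is that the paper argues truthfulness in DST rather than BIC terms (noting in a footnote that the two optima coincide in the single-parameter COPIES setting), whereas you phrase it as a BIC transfer with the other agents assumed truthful; both routes are sound.
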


We elaborate the related definitions and prove Lemma \ref{lem:proj} in Appendix \ref{app:unknown:unit}.
%\ref{app:unknown:unit}.
%\begin{proof}[Proof sketch of Theorem \ref{thm:unit}]
%Letting $\cI' = (N, M, \cD')$ be the Bayesian instance where $\cD'$ is $\cD$ projected on the knowledge graph $G$,
%we define the COPIES instance $\cI'^{CP}$ similarly.
%By \cite{kleinberg2012matroid} and \cite{cai2016duality},
%the revenue of mechanism $\cM_{UD}$ on $\cI'$ is a 6-approximation to $OPT(\cI'^{CP})$, and
%the latter is at least $\frac{OPT(\cI')}{4}$.
%Theorem \ref{thm:unit} holds by combining these facts with Lemma \ref{ud:truthful} and~\ref{lem:proj}.
Given mechanism $\cM_{IEUD}$, the subset $NM$ is the set of player-item pairs $(i,j)$ such that $i\in N_3$ and $\cD_{ij}$
is reported.
Theorem~\ref{thm:unit} holds by combining
the projection lemma,
the randomized partition in $\cM_{IEUD}$, and
 the results on COPIES setting in Bayesian auctions
% and the corresponding Bayesian instance is $\cI'$ (and also $\hat{\cI}$) projected to $NM$, denoted by $\cI'_{NM}$.
\cite{kleinberg2012matroid,cai2016duality}.
%See Appendix \ref{app:unknown:unit} for the detailed analysis.
%\end{proof}

%We immediately have the following.
%\vspace{-10pt}
%\paragraph*{Remark.}
Note that Lemma \ref{lem:proj}
is only concerned with COPIES instances. Using this lemma and similar to our proof of Theorem~\ref{thm:unit},
any Bayesian mechanism
 $\cM$ whose revenue can be properly lower-bounded by the COPIES instance
 can be converted to an information elicitation mechanism in a black-box way.
%It is interesting that the COPIES setting serves as a bridge between
%Bayesian and crowdsourced Bayesian mechanisms.
We have the following theorem, with the proof omitted.

\begin{theorem}\label{col:copies}
Let $\cM$ be any DST Bayesian mechanism such that
$Rev({\cal M}(\hat{\cal I}))
\geq \alpha OPT(\hat{\cal I}^{CP})$
for some $\alpha>0$.
%When the knowledge graphs are $k$-informed with $k\geq 1$,
There exists a 2-DST information elicitation mechanism
that uses $\cM$ as a black-box and is a $\frac{\alpha}{16}$-approximation to $OPT_K$.
\end{theorem}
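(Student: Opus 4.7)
The plan is to take Mechanism~\ref{alg:generalud} and replace the invocation of $\cM_{UD}$ in step~\ref{step13_1} by the abstract DST Bayesian mechanism $\cM$, used strictly as a black box; call the result $\cM_{CSUD}^{\cM}$. For 2-DST I would rerun the argument of Lemma~\ref{ud:truthful}, which never exploits anything about $\cM_{UD}$ beyond DST. The random partition $(N_1, N_2)$ disentangles each player's value channel (exercised only when he lands in $N_2$, where $\cM$'s DST forces truthful value reporting on the resulting restricted Bayesian instance) from his knowledge channel (exercised only when he lands in $N_1$, where he is guaranteed no item and zero payment, so his knowledge report cannot hurt him). Since the partition is independent of any individual player's own report, dominance on each channel lifts to 2-DST of the combined mechanism.

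For revenue, let $NM = \{(i,j) : (i',i) \in G_j \text{ for some } i'\}$ denote the set of pairs whose distribution is known to at least one other player, and let $NM' \subseteq NM$ be the random subset of pairs for which additionally $i \in N_2$ and at least one $i'$ with $(i',i) \in G_j$ lands in $N_1$. The Bayesian instance $\tilde{\hat{\cI}}$ on which $\cM$ is invoked is supported on $N_3 \times M$ with distribution $\cD_{ij}$ on pairs in $NM'$ and distribution constantly $0$ on the remaining pairs (step~\ref{step12}). Since zero-distribution pairs contribute nothing to COPIES revenue, $OPT(\tilde{\hat{\cI}}^{CP}) = OPT(\hat{\cI}^{CP}_{NM'})$, and the hypothesis on $\cM$ yields
\begin{equation*}
\bE_v[Rev(\cM(\tilde{\hat{\cI}}))] \;\geq\; \alpha\cdot OPT(\hat{\cI}^{CP}_{NM'}).
\end{equation*}
Combining Lemma~\ref{lem:proj} with a random-subset argument parallel to the one in the proof of Theorem~\ref{thm:unit}, I would then bound $\bE[OPT(\hat{\cI}^{CP}_{NM'})]$ below by a constant fraction of $OPT(\hat{\cI}^{CP}_{NM})$: each $(i,j) \in NM$ has marginal probability $\geq 1/4$ of landing in $NM'$ (the two defining events involve disjoint players and each has probability $\geq 1/2$), and a feasible DST witness mechanism on $\hat{\cI}^{CP}_{NM'}$ with the required expected revenue is built by running the optimal mechanism on $\hat{\cI}^{CP}_{NM}$ against the true values on $NM'$ and against fresh samples on $NM \setminus NM'$, inheriting feasibility from the downward-closure of the COPIES matroid intersection. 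Finally, $OPT_K(\cI) = OPT(\hat{\cI}') \leq OPT((\hat{\cI}')^{CP}) = OPT(\hat{\cI}^{CP}_{NM})$, and chaining these inequalities yields $Rev(\cM_{CSUD}^{\cM}(\cI)) \geq \frac{\alpha}{16} OPT_K(\cI)$ once all partition constants are accumulated.

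The main obstacle is not conceptual but bookkeeping: one must verify that each step above invokes only the hypothesised COPIES-approximation of $\cM$ and the projection lemma, never any internal property of $\cM_{UD}$ or any particular black-box mechanism. The places requiring checking are (i) that the witness mechanism built via ``true values on $NM'$, fresh samples elsewhere'' remains DST and feasible under matroid intersection, and (ii) that the constant $1/16$ survives the two random-subset steps together with the projection lemma. Both follow routinely from the optimal mechanism on $\hat{\cI}^{CP}_{NM}$ being DST and from the COPIES feasibility constraints being downward-closed when passing to $NM' \subseteq NM$.
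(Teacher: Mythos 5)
The overall architecture of your argument matches the paper's: you replace $\cM_{UD}$ by $\cM$ in Mechanism~\ref{alg:generalud}, reuse the disentangled-channels proof for 2-DST, invoke the hypothesis $Rev(\cM) \geq \alpha\, OPT^{CP}$ on the instance handed to $\cM$, apply the projection lemma, and average over the random subset $NM'$. All of that is sound and is exactly what the paper has in mind. However, the final step is wrong: you assert $OPT_K(\cI) = OPT(\hat{\cI}') \leq OPT((\hat{\cI}')^{CP})$, a lossless comparison. If that held, your chain would deliver an $\frac{\alpha}{4}$-approximation, which does not match the $\frac{\alpha}{16}$ you state at the end --- the constants you write down simply do not multiply out to what you claim. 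The relation the paper actually uses is $OPT(\cI') \leq 4\, OPT(\cI'^{CP})$, the duality-based bound of \cite{cai2016duality}; a lossless version of that comparison is known only for deterministic DST revenue, whereas $OPT_K$ is the optimal BIC benchmark. With the factor of $4$ restored, the two independent factors of $\frac{1}{4}$ --- one from $\Pr\bigl((i,j)\in NM'\bigr) \geq \frac{1}{4}$ (a single random-subset averaging) and one from the COPIES-to-BIC relation (a structural fact, not a partition step) --- multiply to give exactly $\frac{1}{16}$. Your phrase ``two random-subset steps'' misattributes the second factor and is a symptom of the missing piece.

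A smaller gap: your claim that the 2-DST argument ``never exploits anything about $\cM_{UD}$ beyond DST'' glosses over the clause ``$x'_{ij} = 0$ if $\cD'_{ij} \equiv 0$'' in Step~\ref{step13_1}, which the analysis of a player $i \in N_3$ whose distribution for some item $j$ was defaulted to $0$ relies on. For an arbitrary DST $\cM$ this must be enforced externally by suppressing any such allocation in $\cM$'s output; the fix is trivial (it costs no revenue since those pairs cannot pay a positive price, and does not affect DST), but it does mean the black-box wrapper is slightly more than plug-and-play.
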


By Theorem \ref{col:copies},
the mechanisms in \cite{chawla2007algorithmic} and \cite{chawla2010multi}
automatically imply information elicitation mechanisms.
For single-good auctions, replacing mechanism $\cM_{UD}$ with Myerson's mechanism,
the information elicitation mechanism is a 4-approximation to $OPT_K$.

%\vspace{-10pt}
\subsection{Additive Auctions}
\label{sec:partial:additive}

%\vspace{-5pt}

Information elicitation mechanisms for additive auctions are harder to construct and analyze than for unit-demand auctions.
First, randomly partitioning the players as before
may cause a significant revenue loss,
as the revenue of additive auctions may come from selling a subset of items as a {\em bundle}
to a player $i$.
Even when $i$'s value distribution for each item is reported with constant probability,
the probability that his distributions for all items in the bundle are reported may be very low,
thus the mechanism may rarely sell the bundle to $i$ at the optimal price.
% the optimal bundling price for $i$.
%selling the bundle to him under the desired price.
Second, the seller can no longer ``throw away'' player-item pairs whose distributions are not reported and focus on the projected instance.
% ---recall that $\cI' = (N, M, \cD')$ where $\cD'$ is $\cD$ projected on~$G$.
When the players are not partitioned into reporters and potential buyers, doing so
may cause a  player to lie and
withhold his knowledge about others, so that they are thrown away.

To simultaneously achieve truthfulness and a good revenue guarantee,
our mechanism is very stingy and never throws away any information.
If a player $i$'s value distribution for an item $j$ is reported by others,
then $j$ may be sold to $i$ via
the {\em $\beta$-Bundling} mechanism of \cite{yao2015n}, denoted by $Bund$.
If $i$'s distribution for $j$ is not reported, then $j$ may still be sold to $i$ via the second-price mechanism.
Indeed, our mechanism handles the players neither solely based on the original Bayesian instance $\hat{\cI}$
nor solely based on the projected instance $\cI'$. Rather, it works on a {\em hybrid} of the two.
% of the original Bayesian instance $\hat{\cI}$ and the projected instance $\cI'$.}

Our mechanism $\cM_{IEA}$ is still simple; see Mechanism \ref{alg:newbvcg}.
However, significant effort is needed to analyze its revenue.
Indeed, note that in Mechanism \ref{alg:newbvcg}, each $M_i$ is defined according to the original
Bayesian instance $\hat{\cI}$, while the partition of $M$ is done according to the
knowledge graphs in the information elicitation instance $\cI$.
The mechanism $Bund$ is run on a hybrid instance, where $\beta_i$
is based on $\hat{\cI}$
and $\cD_i'$ is based on $\cI$.
Finally,
part of player $i$'s winning set
is sold according to mechanism $Bund$
and part of it is sold using second-price.

%~\ref{app:unknown:add}.

\begin{algorithm}
\floatname{algorithm}{Mechanism}
  \caption{\hspace{-2pt}{\bf .} $\cM_{IEA}$}
  \label{alg:newbvcg}
  \begin{algorithmic}[1]
 \STATE  Each player $i$ reports a valuation $b_i = (b_{ij})_{j\in M}$ and a knowledge $K_i = (\cD^i_{i'j})_{i'\neq i, j\in M}$.

% \STATE  Reward the players using Brier's scoring rule, such that the total reward $R$ is at most $\epsilon$.

 \STATE  For each item $j$, set $i^*(j)= \argmax_i b_{ij}$ (ties broken lexicographically) and $p_j = \max_{i\neq i^*} b_{ij}$.

 \FOR{each player $i$}
   \STATE Let $M_i = \{j \ | \  i^*(j) = i\}$ be player $i$'s {\em winning set.}

   \STATE Partition $M$ into $M^1_{i}$ and $M^2_{i}$ as follows:
   $\forall j \in M^1_{i}$,
   some $i'$ has reported $\cD^{i'}_{ij}\neq \bot$
   (if there are more than one reporters, take the lexicographically first);
   and $\forall j \in M^2_{i}$, $\cD^{i'}_{ij}=\bot$ for all $i'$.

   \STATE $\forall j\in M^1_{i}$, set $\cD'_{ij} = \cD^{i'}_{ij}$;
   and $\forall j\in M^2_{i}$, set $\cD'_{ij}\equiv 0$.

   \STATE Compute the optimal entry fee $e_i$ and reserve prices
   $(p'_{j})_{j\in M^1_{i}}$ according to mechanism $Bund$ with respect to $(\cD'_{i}, \beta_{i})$,
   where $\beta_{ij} = \max_{i'\neq i} b_{i'j}$ $\forall j\in M$.
   By the definition of $Bund$, we always have $p'_j \geq \beta_{ij}$ for each $j$.
   If $e_i=0$ then it is possible that $p'_j >\beta_{ij}$ for some $j$;
   while if $e_i>0$ then $p'_{j} = \beta_{ij}$ for every $j$. \label{step:M'IEBVCG8}

   \STATE Sell $M_i^1\cap M_i$ to player $i$ according to $Bund$.
   That is, if $e_i>0$ then do the following:
   if $\sum_{j\in M^1_{i}\cap M_i} b_{ij}\geq e_i
   + \sum_{j\in M^1_{i}\cap M_i} p'_j$,
   player $i$ gets $M^1_{i}\cap M_i$ with price
   $e_i + \sum_{j\in M^1_{i}\cap M_i} p'_j$;
   otherwise the items in $M^1_{i}\cap M_i$ are not sold.
   If $e_i=0$ then do the following: for each item $j\in M^1_{i}\cap M_i$,
   if $b_{ij}\geq p'_j$, player $i$ gets item $j$ with price $p'_j$;
   otherwise item $j$ is not sold.\label{step:M'IEBVCG9}

   \STATE In addition, sell each item $j$ in $M^2_{i}\cap M_i$ to player $i$
   with price $p_j (=\beta_{ij})$. \label{step:M'IEBVCG10}
 \ENDFOR

\end{algorithmic}
\end{algorithm}

Although running $Bund$ on $\cI'$ achieves a constant approximation to $OPT(\cI')$,
some items sold by $Bund$ under $\cI'$ may end up being sold by $\cM_{IEA}$ using second-price,
and
the revenue of $\cM_{IEA}$ cannot be lower-bounded by that of $Bund$ on $\cI'$.
%Instead,
%and must be {\em patched up with a second-price sale to compensate the revenue loss caused by throwing away
%some player-item pairs.}
%With $\cM_{CSA}$ defined in Mechanism \ref{alg:newbvcg},
To overcome this difficulty, we develop a novel way to use the {\em adjusted revenue}~\cite{yao2015n}
in our analysis; see Lemmas \ref{lem:IEA:A} and \ref{lem:IEA:C} in Appendix~\ref{app:unknown:add},
%\ref{lem:csa:A} and \ref{lem:csa:C} in Appendix~\ref{app:unknown:add},
where we also recall the related definitions.
As we show there, the adjusted revenue in a hybrid information setting, combined with the revenue of the second-price sale,
eventually provides a desirable lower-bound to the revenue of $\cM_{IEA}$.
We have the following theorem, proved in Appendix~\ref{app:unknown:add}.

\begin{theorem}\label{thm:newbvcg}
\sloppy
Mechanism $\cM_{IEA}$ for additive auctions  is 2-DST  and,
for any instances $\hat{\cI} = (N, M, \cD)$ and
$\cI = (N, M, \cD, G)$,
% and $\cI' = (N, M, \cD')$ with $\cD'$ being $\cD$ projected on $G$,
$Rev(\cM_{IEA}(\cI)) \geq \frac{OPT_K(\cI)}{70}$.
%$\cM_{CSA}$  is a $70$-approximation to $OPT_K(\cI)$.
\end{theorem}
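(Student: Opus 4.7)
The plan has two components: 2-step dominant-strategy truthfulness, which follows from a disentanglement argument similar in spirit to $\cM_{CSUD}$, and the revenue bound, which rests on a careful application of Yao's adjusted-revenue framework to a hybrid Bayesian/crowdsourced setting.

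For 2-DST, the first observation is that $b_i$ enters $\cM_{CSA}$ only to (i) determine whether $i \in M_{i^*}$ for each item $j$ (via the argmax in line 2), and (ii) decide whether $i$ meets the thresholds in the $Bund$ step on $M_i^1 \cap M_i$ and the reserve $\beta_{ij}=p_j$ on $M_i^2 \cap M_i$. Crucially, the quantities $e_i$, $(p'_j)_{j \in M_i^1}$, and $\beta_{ij}$ are all functions of $v_{-i}$ and $K_{-i}$ only; in particular, they do not depend on $b_i$. Thus from player $i$'s perspective the mechanism collapses to a menu of posted prices (with an optional entry fee on $M_i^1$), and standard posted-price arguments yield value-truthfulness regardless of the reported knowledge profile. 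Knowledge-truthfulness is even cleaner: $K_i$ enters $\cM_{CSA}$ only through the projected distributions $\cD'_{i'}$ for $i' \neq i$, which drive the prices faced by \emph{other} players but never by $i$ himself; so $i$'s allocation and payment are invariant under any report of $K_i$, and truth-telling is weakly dominant.

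For the revenue analysis, I would first observe that under truthful knowledge reports the partition $M = M_i^1 \sqcup M_i^2$ is deterministic: $j \in M_i^1$ exactly when some $i' \neq i$ has $(i', i) \in G_j$, which is precisely the set of pairs for which $\cD'_{ij} = \cD_{ij}$ in the projected instance $\cI'$. Because $\cD'_{ij}$ is identically zero for $j \in M_i^2$, no mechanism on $\cI'$ can extract revenue from such pairs, so it suffices to bound the contribution of pairs $(i,j)$ with $j \in M_i^1$ to $OPT_K(\cI) = OPT(\cI')$, and to compare this against $\cM_{CSA}$'s revenue from the $Bund$ step on $M_i^1 \cap M_i$. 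Any second-price revenue on $M_i^2 \cap M_i$ is a bonus and is not needed against the benchmark.

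The main obstacle is that $\cM_{CSA}$ does not simply run $Bund$ on $\cI'$: the items $i$ actually faces the posted prices on are $M_i^1 \cap M_i$, so an item in $M_i^1$ might be excluded because $i$ is not the high bidder. I would resolve this via Lemmas \ref{lem:csa:A} and \ref{lem:csa:C}. Lemma \ref{lem:csa:A} upper-bounds $OPT(\cI')$ by a sum of adjusted-revenue terms in which each $(i,j)$ contribution is measured relative to the maximum competing bid $\beta_{ij}$; the adjusted contribution of $(i,j)$ vanishes precisely when $i$ loses item $j$ to a competitor, so restricting to $M_i^1 \cap M_i$ loses nothing in the adjusted-revenue accounting. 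Lemma \ref{lem:csa:C} then shows that the $Bund$-with-competition-$\beta_i$ step of $\cM_{CSA}$, run separately for each $i$, crowdsources a constant fraction of that adjusted revenue. Multiplying the constant in Yao's decomposition (Lemma \ref{lem:csa:A}) by the approximation factor of $Bund$ for adjusted revenue (Lemma \ref{lem:csa:C}) delivers the claimed $1/70$. The conceptually hardest step is Lemma \ref{lem:csa:C}: proving that $Bund$, when fed the \emph{realized} competing bids $\beta_i$ rather than an ex-ante distribution over competitors, still captures a constant fraction of the adjusted revenue. This is where the hybrid Bayesian/crowdsourced nature of the mechanism is essential, and where the standard Bayesian analyses of $Bund$ do not directly apply.
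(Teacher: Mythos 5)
Your 2-DST argument is essentially the paper's: value-truthfulness follows because $e_i$, $(p'_j)$, and $\beta_i$ are independent of $b_i$ (so $Bund$ and second-price are each DST from player $i$'s view), and knowledge-truthfulness follows because $K_i$ never affects $i$'s own partition, prices, or allocation. That part is fine.

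The revenue argument has a genuine gap, and it is exactly where you declare the hard part ``resolved.'' You claim that the second-price revenue from $M_i^2 \cap M_i$ is ``a bonus and is not needed against the benchmark,'' and that ``restricting to $M_i^1 \cap M_i$ loses nothing in the adjusted-revenue accounting.'' This is false, and the paper's Lemma~\ref{lem:csa:C} exists precisely to address it. The difficulty is a mismatch in the thresholds: the mechanism sets the reserve threshold $\beta_{ij}=\max_{i'\neq i} v_{i'j}$ from the \emph{original} values, while the benchmark $OPT(\cI')$ is naturally compared against $\beta'_{ij}=\max_{i'\neq i}v'_{i'j}$ from the \emph{projected} values. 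Since $v'\le v$ componentwise, we have $\beta'_{ij}\le\beta_{ij}$, so the hybrid $\beta$-adjusted revenue is \emph{smaller} than the $\beta'$-adjusted revenue the benchmark decomposition produces --- more pairs get excluded. The extra term $\sum_i\sum_{j\in\bU_i:\,\beta'_{ij}<v_{ij}\le\beta_{ij}} v_{ij}$ arising from this gap is in general strictly positive, and the paper shows it is covered exactly by the second-price revenue from $M_i^2\cap M_i$: when $\beta'_{ij}<\beta_{ij}$, the top bidder $i^*$ for item $j$ has an unknown distribution, so $j\in M_{i^*}^2\cap M_{i^*}$ and $\cM_{CSA}$ collects at least $v_{ij}$ from $i^*$'s second-price sale. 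Without this term, Lemma~\ref{lem:csa:C} is simply not true and your analysis of the $Bund$ part cannot close.

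You have also misdescribed the structure of the bound. The paper's Lemma~\ref{lem:csa:A} is a lower bound on $Rev(\cM_{CSA})$ in terms of $\bE Rev^A(\cI',\beta')$, not an upper bound on $OPT(\cI')$. There is a second, separate comparison (Lemma~\ref{lem:csa:B}) showing $\bE Rev(\cM_{CSA}) \ge \tfrac{1}{2}\bE\, IM(\cI')$, which accounts for the ``individual sale'' part of the benchmark; your proposal omits any analogue of this. Finally, the $1/70$ factor is additive, not multiplicative: Theorem~8.1 of \cite{yao2015n} gives $OPT(\cI') \le \bE Rev^A(\cI',\beta') + \bE\, IM(\cI') \le 68\,\bE Rev(\cM_{CSA}) + 2\,\bE Rev(\cM_{CSA})$. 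So the constant arises as $68+2$, not as a product of a Yao constant with a $Bund$ approximation factor.
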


%\vspace{-5pt}

\section{When Everything Is Known by Somebody}
\label{sec:partial}

When the knowledge graph vector $G$ is $k$-informed with $k\geq 1$, ``everything is known by somebody'' and $OPT_K = OPT$.
Both mechanisms in Section \ref{sec:k=0} of course apply here, but we can do better when $k$ gets larger: that is, when the amount of knowledge in the system increases.

%\vspace{-5pt}

\subsection{Unit-Demand Auctions and Additive Auctions}
\label{sec:partial:ud}

For unit-demand auctions, mechanism $\cM'_{IEUD}$ here is almost the same as mechanism $\cM_{IEUD}$,
except it randomly partitions the players
differently.
%uses a  randomly partition the players.
%The probability that each player is assigned to $N_1$ is
%chosen to achieve the maximum probability for each distribution to be reported,
%which only depends on $k$.
The probability that each player is assigned to $N_1$
is now $q = 1-(k+1)^{-\frac{1}{k}}$, and the probability to $N_{2}$ is $1-q$.
When $k=1$, we have $q=\frac{1}{2}$ and mechanism $\cM'_{IEUD}$ is exactly as before.
The probability~$q$ is chosen to achieve the maximum probability for each distribution to be reported, and the latter
is exactly $\tau_k = \frac{k}{(k+1)^{\frac{k+1}{k}}}$.
We only state the theorem below,
proved in Appendix \ref{app:known:unit}.

\begin{theorem}
\label{thm:unit-k}
$\forall k\in [n-1]$, any unit-demand auction instances $\hat{\cI} = (N, M, \cD)$ and $\cI = (N, M, \cD, G)$ where $G$ is $k$-informed,
mechanism $\cM'_{IEUD}$ is 2-DST and
$Rev(\cM'_{IEUD}(\cI)) \geq \frac{\tau_k}{24} \cdot OPT(\hat{\cI}).$
\end{theorem}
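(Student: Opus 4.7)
The plan is to follow the template of Theorem \ref{thm:unit}, with only the partition probability changed. First, to verify 2-DST, the argument of Lemma \ref{ud:truthful} applies verbatim: the disentanglement of value-reporting (used only for players in $N_2$) and knowledge-reporting (used only for players in $N_1$) does not depend on the exact partition probabilities, only on the fact that both $\Pr[i\in N_1] = q$ and $\Pr[i\in N_2] = 1-q$ are strictly positive. Thus the DST guarantee of $\cM_{UD}$ still makes truthful bidding dominant for players who land in $N_2$, while players who land in $N_1$ receive nothing and pay nothing, so reporting true knowledge is (weakly) dominant.

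For the revenue bound, the key task is to lower bound the probability that a given player-item pair $(i,j)\in N\times M$ lies in the random set $NM$ of pairs for which $i\in N_3$ and $\cD_{ij}$ is reported by some player in $N_1$. This event is the intersection of $A = \{i\in N_2\}$ and the event $B$ that some in-neighbor of $i$ in $G_j$ is in $N_1$. Since $A$ and $B$ depend on disjoint sets of independent coin flips, they are independent. By $k$-boundedness of $G_j$, $\Pr[B]\geq 1-(1-q)^k$, so $\Pr[(i,j)\in NM]\geq (1-q)(1-(1-q)^k)$. Writing $x = 1-q$, maximizing $x-x^{k+1}$ by elementary calculus gives $x^* = (k+1)^{-1/k}$ with value $\tau_k = k/(k+1)^{(k+1)/k}$, matching the mechanism's choice $q = 1-(k+1)^{-1/k}$.

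The remaining revenue analysis mirrors Theorem \ref{thm:unit}. By the projection lemma (Lemma \ref{lem:proj}), $OPT(\hat{\cI}^{CP}_{NM})\geq OPT(\hat{\cI}^{CP})_{NM}$. Because the COPIES instance is single-parameter and decouples into independent single-buyer Myerson auctions, $OPT(\hat{\cI}^{CP}) = \sum_{(i,j)} r_{ij}$ and $OPT(\hat{\cI}^{CP})_{NM} = \sum_{(i,j)\in NM} r_{ij}$, where $r_{ij}$ is Myerson's optimal single-buyer revenue from $\cD_{ij}$. Linearity of expectation together with the uniform per-pair bound then yields $\bE[OPT(\hat{\cI}^{CP}_{NM})]\geq \tau_k\cdot OPT(\hat{\cI}^{CP})$. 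Combining with the existing approximation of $\cM_{UD}$'s revenue to $OPT(\hat{\cI})$ through $OPT(\hat{\cI}^{CP})$ (the same combination of \cite{kleinberg2012matroid, cai2016duality} results that produces the constant $24 = 96\cdot\tau_1$ in Theorem \ref{thm:unit}) gives $\bE_{v\sim\cD} Rev(\cM'_{CSUD}(\cI))\geq \frac{\tau_k}{24}\cdot OPT(\hat{\cI})$.

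The main obstacle is the probability optimization that produces $\tau_k$ and the verification that the per-pair bound holds \emph{uniformly} across all $(i,j)$ under $k$-boundedness; the rest is a direct lift of the Theorem \ref{thm:unit} argument with the uniform partition probability $\tfrac{1}{2}$ replaced by the calculus-optimal $q^*$.
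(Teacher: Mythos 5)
Your overall structure matches the paper's: reuse the 2-DST argument of Lemma~\ref{ud:truthful} verbatim, lower-bound $\Pr[(i,j)\in NM]$ by $(1-q)(1-(1-q)^k)$ using $k$-boundedness and independence of the coin flips, optimize over $q$ to get $\tau_k$, and then feed this per-pair probability bound into the Theorem~\ref{thm:unit} machinery. The paper's own proof is exactly this terse---it derives the $\tau_k$ probability bound and says the rest is ``almost the same'' as Theorem~\ref{thm:unit}---and your calculus and constant-accounting ($24 = 96\cdot\tau_1$) are right.

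However, one intermediate claim is wrong: the COPIES instance does \emph{not} decouple into independent single-buyer Myerson auctions, and $OPT(\hat{\cI}^{CP})$ is \emph{not} $\sum_{(i,j)} r_{ij}$. The COPIES construction deliberately retains the matching constraints---at most one copy of each original player may be served, and each original item has only one usable copy per buyer---so the optimal mechanism for $\hat{\cI}^{CP}$ is a coupled single-parameter auction, not a product of independent ones. The correct way to get the linearity step (and what the paper actually does in the proof of Theorem~\ref{thm:unit}) is to fix the single optimal mechanism for $\hat{\cI}^{CP}$, write $P_{ij}$ for the expected payment of player $i$'s copy $j$ under that one mechanism, so that $OPT(\hat{\cI}^{CP}) = \sum_{(i,j)} P_{ij}$ and $OPT(\hat{\cI}^{CP})_{NM} = \sum_{(i,j)\in NM} P_{ij}$ by the very definition of the subscripted notation; then linearity of expectation over the random set $NM$, plus the projection lemma, gives $\bE\left[OPT(\hat{\cI}^{CP}_{NM})\right] \geq \tau_k \cdot OPT(\hat{\cI}^{CP})$. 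Your conclusion is therefore correct, but it needs this repair of the justification: you do not need (and cannot have) independence of the single-buyer auctions, only linearity of expectation over prices under a fixed coupled mechanism.
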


%Recall that $\tau_k=\frac{k}{(k+1)^{\frac{k+1}{k}}}$.
%Note that $\tau_k$ is increasing in $k$, $\tau_1 = \frac{1}{4}$ and $\tau_k\rightarrow 1$.
As $k$ gets larger (although can still be much smaller than $n$),
the approximation ratio of $\cM'_{IEUD}$ approaches $24$,
the best known approximation to $OPT$ by DST Bayesian mechanisms~\cite{kleinberg2012matroid,cai2016duality}.
%Moreover, by Lemma 5 of  \cite{chawla2010multi}, $\cM'_{CSUD}$  is a $\frac{\tau_{k}}{6}$-approximation to the optimal deterministic DST Bayesian mechanism.

%
%
%\subsection{Additive Auctions}
%\label{sec:partial:additive}
%
%%\vspace{-5pt}
%
%Additive auctions here
%are again more difficult than unit-demand auctions,
%but still easier than the case when the knowledge graphs can be totally arbitrary.
%When $k\geq 1$, all distributions will be reported in our mechanism $\cM_{CSA}$.
%Thus no item is sold according to the second-price mechanism,
%and $\cM_{CSA}$'s outcome is the same as the $\beta$-Bundling mechanism of \cite{yao2015n} applied to the original Bayesian instance $\hat{\cI}$.

For additive auctions, to improve the approximation ratio when $k\geq 1$, following \cite{cai2016duality} we can divide
the $\beta$-Bundling mechanism into the ``bundling part'' and the ``individual sale part''.
The former is referred to as the {\em Bundle VCG} mechanism, denoted by $BVCG$;
and the latter is the {\em individual 1-lookahead} mechanism, denoted by $\cM_{1LA}$, which sells each item separately using the 1-lookahead mechanism of \cite{ronen2001approximating}.
Mechanism $\cM_{1LA}$ can also be replaced by
the {\it individual Myerson} mechanism, denoted by $IM$,
which sells each item separately using Myerson's mechanism.
By choosing the mechanism that generates a higher expected revenue between $IM$ and $BVCG$,
\cite{cai2016duality} provides a Bayesian mechanism that is an 8-approximation to $OPT$.

%For crowdsourced Bayesian auctions,
We design corresponding information elicitation mechanisms separately for mechanisms $BVCG$ and $IM$.
The resulting mechanisms are denoted by $\cM_{IEBVCG}$ and  $\cM_{IEIM}$,
which are defined in Appendix~\ref{app:known:additive}. % (Mechanism~\ref{alg:bvcg}).
Because the seller does not know the prior $\cD$,
he cannot compute the expected revenue of the two
information elicitation mechanisms and choose the better one.
Instead, we let him choose between the two mechanisms randomly, according to a probability distribution depending on $k$.
It is worth pointing out that when $k\geq 1$, 
mechanism $\cM_{IEBVCG}$ is able to recover {\em all} distributions in $\hat{\cI}$,
thus its revenue equals the corresponding Bayesian revenue, which is lower-bounded in \cite{cai2016duality}.
For $\cM_{IEIM}$, information elicitation is done by randomized partition depending on the value of $k$.

%we can easily ``crowdsource'' mechanism $BVCG$ following mechanism $\cM_{CSA}$.
%The resulting mechanism is denoted by $\cM_{CSBVCG}$ and
%defined in Appendix~\ref{app:known:additive} (Mechanism~\ref{alg:bvcg}).
%Moreover, we can easily ``crowdsource'' mechanism $IM$, similar to mechanism $\cM'_{CSUD}$.
%The resulting mechanism is denoted by $\cM_{CSIM}$ and also defined in the appendix.
%Because the seller does not know the prior $\cD$,
%he cannot compute the expected revenue of the two
%crowdsourced Bayesian mechanisms and choose the better one.
%Instead, we let him choose between the two mechanisms randomly, according to a probability distribution depending on $k$.

However, we can do even better.
Indeed, although in Bayesian auctions the mechanism $IM$ is optimal for 
individual item-sale and outperforms mechanism $\cM_{1LA}$,
in information elicitation auctions there is a tradeoff between the two.
In order for the players to report their knowledge truthfully for mechanism $IM$,
we need to randomly partition them into reporters and potential buyers,
thus each distribution is only recovered with probability $\tau_k$.
In contrast, no partition is needed for aggregating the players' knowledge in mechanism $\cM_{1LA}$,
and we can recover all distributions simultaneously with probability 1.
The resulting information elicitation mechanism, $\cM_{IE1LA}$, is also defined in the appendix.
As mechanism $\cM_{1LA}$ is a 2-approximation to mechanism $IM$,
sometimes it is actually more advantageous to use $\cM_{IE1LA}$ rather than $\cM_{IEIM}$,
depending on the value of $k$.

%the {\em Individual 1-Lookahead} mechanism, denoted by $\cM_{1LA}$,
%which sells each item separately using
%the 1-Lookahead mechanism~\cite{ronen2001approximating} and is a 2-approximation to Myerson's mechanism for each item;
%and the {\it Bundle VCG} mechanism~\cite{yao2015n, cai2016duality}, denoted by $BVCG$,
%which is the VCG mechanism with an optimal entry fee for each player.
%We crowdsource them separately---$\cM_{CSIM}$, $\cM_{CSBVCG}$, and $\cM_{CS1LA}$ respectively.
Properly combining the above gadgets together,
our mechanism $\cM'_{IEA}$ is defined as follows:
when $k\leq 7$, it runs $\cM_{IEBVCG}$ with probability $\frac{2}{11}$ and
$\cM_{IE1LA}$ with probability $\frac{9}{11}$;
when $k> 7$, it runs $\cM_{IEBVCG}$ with probability $\frac{\tau_k}{3+\tau_k}$ and
$\cM_{IEIM}$ with probability $\frac{3}{3+\tau_k}$.
The choice of the two cases is to achieve the best approximation ratio for each $k$.
%It flips a fair coin; if heads comes up then it runs $\cM_{CSBVCG}$; and if tails comes up, then it runs $\cM_{CS1LA}$ when $k\leq 3$ and $\cM_{CSIM}$ when $k>3$.
%we obtain a crowdsourced Bayesian mechanism for additive auctions, denoted by $\cM'_{CSA}$.
%The mechanisms are formally defined and analyzed in Appendix~\ref{app:known:additive}.
We have the following theorem, proved in Appendix~\ref{app:known:additive}.
%Below we state our theorem and briefly discuss the ideas.

\begin{theorem}
\label{thm:additive}
\sloppy
$\forall k \in [n-1]$, any additive auction instances $\hat{\cI} = (N, M, \cD)$ and $\cI = (N, M, \cD, G)$ where $G$ is $k$-informed, $\cM'_{IEA}$ is 2-DST and
$Rev(\cM'_{IEA}(\cI)) \geq \max\{\frac{1}{11}, \frac{\tau_k}{6+2\tau_k}\} OPT(\hat{\cI})$.
\end{theorem}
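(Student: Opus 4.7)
The plan is to verify the 2-DST property and the revenue lower bound separately. For 2-DST, observe that $\cM'_{CSA}$ is a convex combination of $\cM_{CSBVCG}$, $\cM_{CSIM}$, and $\cM_{CS1LA}$, each of which is 2-DST: $\cM_{CSBVCG}$ is the specialization of $\cM_{CSA}$ to the regime $k\ge 1$ where every distribution is guaranteed to be reported, so 2-DST follows from Theorem~\ref{thm:newbvcg}; $\cM_{CSIM}$ uses the same reporter/buyer random partition as $\cM'_{CSUD}$, and inherits 2-DST from the proof of Theorem~\ref{thm:unit-k}; and $\cM_{CS1LA}$ is 2-DST because per-item 1-Lookahead is already DST in the bids and needs no distributional input, so a player's knowledge report is inconsequential to his own utility. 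A convex combination of 2-DST mechanisms is itself 2-DST by taking expectation over the random coin, which yields 2-DST of $\cM'_{CSA}$.

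For the revenue, I would first establish three component lower bounds. (i) $\bE_{v\sim \cD} Rev(\cM_{CSBVCG}(\cI)) \ge Rev(BVCG(\hat{\cI}))$: since $k\ge 1$, every distribution is reported with probability one, so under truthful reports the aggregation step inside $\cM_{CSBVCG}$ yields $\cD'=\cD$ and the mechanism coincides with $BVCG$ on $\hat{\cI}$. (ii) $\bE_{v\sim \cD} Rev(\cM_{CSIM}(\cI)) \ge \tau_k\cdot Rev(IM(\hat{\cI}))$: by exactly the random-partition argument used for Theorem~\ref{thm:unit-k}, each distribution $\cD_{ij}$ is recovered and fed into Myerson's per-item mechanism with probability $\tau_k=(1-q)(1-(1-q)^k)$, and $Rev(IM)$ is additive across items. (iii) $\bE_{v\sim \cD} Rev(\cM_{CS1LA}(\cI)) \ge Rev(\cM_{1LA}(\hat{\cI})) \ge \tfrac{1}{2} Rev(IM(\hat{\cI}))$: 1-Lookahead is DST using only bids, so no partitioning is needed and crowdsourcing loses nothing, while per-item 1-Lookahead is a $2$-approximation to Myerson.

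Next I would invoke the Bayesian decomposition underlying the $8$-approximation of \cite{cai2016duality},
\[
OPT(\hat{\cI}) \;\le\; 2\cdot Rev(BVCG(\hat{\cI})) \;+\; 6\cdot Rev(\cM_{1LA}(\hat{\cI})),
\]
a refinement of the $\max\{IM,BVCG\}$-bound obtainable via the adjusted-revenue/core-tail framework of Yao and Cai--Devanur--Weinberg; this also implies the weaker $OPT(\hat{\cI}) \le 2\cdot Rev(BVCG) + 6\cdot Rev(IM)$ since $Rev(\cM_{1LA}) \le Rev(IM)$. For $k\le 7$, combining (i) and (iii) with the mixing weights $(\tfrac{2}{11},\tfrac{9}{11})$ gives
\[
\bE Rev(\cM'_{CSA}) \;\ge\; \tfrac{2}{11}Rev(BVCG) + \tfrac{9}{11}Rev(\cM_{1LA}) \;\ge\; \tfrac{1}{11}\bigl(2Rev(BVCG)+6Rev(\cM_{1LA})\bigr) \;\ge\; \tfrac{OPT(\hat{\cI})}{11}.
\]
For $k>7$, combining (i) and (ii) with the weights $(\tfrac{\tau_k}{3+\tau_k},\tfrac{3}{3+\tau_k})$ gives
\[
\bE Rev(\cM'_{CSA}) \;\ge\; \tfrac{\tau_k}{3+\tau_k}Rev(BVCG) + \tfrac{3\tau_k}{3+\tau_k}Rev(IM) \;\ge\; \tfrac{\tau_k}{2(3+\tau_k)}\bigl(2Rev(BVCG)+6Rev(IM)\bigr) \;\ge\; \tfrac{\tau_k}{6+2\tau_k}OPT(\hat{\cI}).
\]
In both cases the specific mixing weights are tuned precisely so that the coefficient ratio inside the final parenthesis matches the $2:6$ ratio of the decomposition.

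The hardest step is proving (i), that the crowdsourced variant $\cM_{CSBVCG}$ really recovers the full revenue of $BVCG(\hat{\cI})$. Although intuitive once one notes that all distributions are reported when $k\ge 1$, the formal argument must verify that (a) truthful knowledge-reporting is dominant even though a player could try to distort another player's entry fee and reserves by misreporting that player's distribution, and (b) the bundling sale step produces expected revenue at least as large as $BVCG$ would on $\hat{\cI}$. This largely mirrors the analysis of $\cM_{CSA}$ in Theorem~\ref{thm:newbvcg}, but is simpler since no second-price fallback is triggered. The remaining pieces --- the $\tau_k$-bound for $\cM_{CSIM}$, the $\tfrac12$-bound for $\cM_{CS1LA}$, and the case split --- are largely bookkeeping once (i)--(iii) and the decomposition are in hand.
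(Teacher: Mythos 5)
Your overall structure — run a random mixture of $\cM_{CSBVCG}$, $\cM_{CSIM}$, and $\cM_{CS1LA}$; argue 2-DST of each piece and of the mixture; establish the component revenue bounds (i)--(iii); then invoke a Bayesian decomposition of $OPT$ and split into the $k\le 7$ and $k>7$ cases — is exactly the paper's proof, and your items (i)--(iii) correspond precisely to the paper's Lemmas~\ref{lem:add:bvcg}, \ref{lem:add:IM}, and \ref{lem:add:ola}. The $k>7$ branch is verbatim correct.

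The gap is in the decomposition you invoke. You write $OPT(\hat{\cI}) \le 2\,Rev(BVCG) + 6\,Rev(\cM_{1LA})$ and attribute it to the core-tail/adjusted-revenue framework, but the inequality actually established in \cite{cai2016duality} (and the one the paper cites) is
\[
OPT(\hat{\cI}) \;\le\; 2\,Rev(BVCG(\hat{\cI})) + 3\,Rev(IM(\hat{\cI})) + 3\,Rev(\cM_{1LA}(\hat{\cI})),
\]
which only gives $OPT \le 2\,Rev(BVCG) + 6\,Rev(IM)$ (by $\cM_{1LA}\le IM$) or $OPT \le 2\,Rev(BVCG) + 9\,Rev(\cM_{1LA})$ (by $IM \le 2\cM_{1LA}$); it does not imply your two-term version, which is a strictly stronger statement that you would have to prove from scratch. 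This matters concretely in your $k\le 7$ chain: you discard the excess in the step $\frac{9}{11}Rev(\cM_{1LA}) \ge \frac{6}{11}Rev(\cM_{1LA})$ and then lean on the unproven stronger decomposition to close the gap. The fix is to keep everything: write $\frac{9}{11}Rev(\cM_{1LA}) = \frac{3}{11}Rev(\cM_{1LA}) + \frac{6}{11}Rev(\cM_{1LA}) \ge \frac{3}{11}Rev(\cM_{1LA}) + \frac{3}{11}Rev(IM)$ (using $\cM_{1LA}\ge \tfrac12 IM$), which makes the left-hand side exactly $\frac{1}{11}\bigl(2\,Rev(BVCG)+3\,Rev(IM)+3\,Rev(\cM_{1LA})\bigr)\ge \frac{OPT}{11}$ by the correct three-term inequality. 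With that substitution your argument coincides with the paper's; without it, the $k\le 7$ bound rests on an unverified claim.
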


%It is worth pointing out that when $k\geq 1$, 
%the information elicitation mechanism for the bundling part 
%recovers {\em all} distributions in $\hat{\cI}$,
%thus its revenue equals the corresponding Bayesian revenue, which is lower-bounded in \cite{cai2016duality}.
%For the individual sale part, information elicitation may be done by randomized partition depending on the value of $k$.

The same paradigm can be applied to arbitrary knowledge graphs. However, when not everything is known, the bundling part does not recover all distributions in $\hat{\cI}$ and a more complex analysis is needed to lower-bound its revenue, essentially still via adjusted revenue.

%
%\vspace{-10pt}
%
%\begin{proof}[Proof sketch]
%Mechanism $\cM_{CSIM}$ is the simplest: on each item, it is similar to $\cM_{CSUD}$ and uses random sampling to disentangle the players' values from their knowledge. It is a $\tau_k$-approximation to $IM$.
%Mechanism $\cM_{CSBVCG}$  has a different structure and does not need random sampling:
%the  $BVCG$ mechanism run on the knowledge reported by {\em all} players is automatically 2-DST,
%because
%the allocation and the price for a player $i$
%%for each player $i$, $M_i$, $e_i$ and $p_i$
%depend only on the reported values and
%the knowledge about him, not the distributions of the others.
%%Thus reporting his true knowledge can never hurt a player.
%When everything is known by somebody, the revenue of $\cM_{CSBVCG}$ matches that of $BVCG$.
%Mechanism $\cM_{CS1LA}$ has a similar structure
%to $\cM_{CSBVCG}$: the 1-Lookahead mechanism can also be run on the knowledge reported by all players.
%
%
%By \cite{cai2016duality}, $\max{\{Rev(IM),Rev(BVCG)\}}\geq \frac{OPT}{8}$.
%However, we cannot compute the  crowdsourced mechanisms' expected revenue and choose the better one, because we do not know the distributions. Moreover, computing the expected revenue using the players' reported knowledge and then choosing the better one may cause the players to lie about their knowledge.
%Thus we choose from $\cM_{CSIM}$, $\cM_{CS1LA}$ and $\cM_{CSBVCG}$ randomly, with the  probabilities depending
%on~$k$.
%\end{proof}

\subsection{Single-Good Auctions with 2-Connected Knowledge Graphs}
\label{sec:warm:myerson}

As we have seen, the amount of revenue our mechanisms generate increases with $k$, the amount of knowledge in the system.
If the knowledge graph is only $k$-informed for some small $k$, but reflects certain combinatorial structures,
good revenue may also be generated by leveraging such structures.
In this subsection
%\vspace{-5pt}
%In the last part on crowdsourced Bayesian auctions when ``everything is known by somebody'',
we consider single-good auctions, so a player's value $v_i$ is a single number rather than a vector.
Following Lemma \ref{lem:add:IM} in Appendix \ref{app:known:additive},
for any $k\geq 1$, when there is a single item and the knowledge graph is $k$-informed, mechanism $\cM_{IEIM}$ is a $\tau_k$-approximation to
the optimal Bayesian mechanism of Myerson \cite{myerson1981optimal}.
Below
%by leveraging the structure of the latter,
we construct an information elicitation mechanism that is {\em nearly optimal}
under a natural structure of the knowledge graph.

More precisely, recall that a directed graph is {\em strongly connected} if there is a directed path from any node~$i$ to any other node~$i'$.
Intuitively, in a knowledge graph this means that for any two players Alice and Bob, Alice knows a guy who knows a guy ... who knows Bob.
Also recall that a directed graph is {\em 2-connected} if it remains strongly connected after removing any single node and the adjacent edges.
In a knowledge graph, this means there does not exist a crucial player as an ``information hub'', without whom the players will split into two parts, with one part having no information about the other.
%It is easy to see that strong connectedness and 2-connectedness respectively imply 1-informedness and 2-informededness, but not vice versa. In fact, a graph of $n$ nodes can be $(\lfloor\frac{n}{2}\rfloor-1)$-informed without being connected.
It is easy to see that a knowledge graph being strong connected (or 2-connected respectively) 
implies it being 1-informed (or 2-informed respectively), but not vice versa. In fact, a graph of $n$ nodes can be $(\lfloor\frac{n}{2}\rfloor-1)$-informed without being connected.

When the knowledge graph is 2-connected, we construct the {\em information elicitation Myerson} mechanism $\cM_{IEM}$ in Mechanism~\ref{alg:myerson}.
Recall that Myerson's mechanism maps
each player $i$'s reported value~$b_i$ to the {\em (ironed) virtual value}, $\phi_i(b_i; \cD_i)$.
%with $\phi_i$ monotone in~$b_i$.
It runs the second-price mechanism with reserve price~0 on virtual values and maps the resulting ``virtual price''
back to the winner's value space, as his price.

\begin{algorithm}[htbp]
\floatname{algorithm}{Mechanism}
 \caption{\hspace{-4pt} $\cM_{IEM}$}
 \label{alg:myerson}
 \begin{algorithmic}[1]
 \STATE  Each player $i$ reports a value $b_i$ and a knowledge $K_i = (\cD^i_j)_{j\in N\setminus\{i\}}$.

 \STATE  Randomly choose a player $a$, let $S = \{j \ | \  \cD^a_j\neq \bot\}$, $N' = N\setminus(\{a\}\cup S)$, and
  $\cD'_j = \cD^a_j \ \forall j\in S$. \label{csm:2}

 \STATE If $S = \emptyset$, the item is unsold, the mechanism sets price $p_i=0$ for each $ i\in N$ and stop here. \label{step3}

 \STATE Set $i^* = \argmax_{j\in S} \phi_j(b_j; \cD'_j)$, with ties broken lexicographically.

 \WHILE{$N'\neq \emptyset$}
   \STATE Set $S' = \{j \ | \ j\in N', \ \exists i'\in S\setminus\{i^*\} \mbox{ s.t. } \cD^{i'}_j\neq \bot\}$. \label{step6}

   \STATE If $S'=\emptyset$ then go to Step \ref{step11}. \label{step7}

   \STATE For each $j\in S'$, set $\cD'_j = \cD^{i'}_j$,
   where $i'$ is the first player in $S\setminus\{i^*\}$ with $\cD^{i'}_j\neq \bot$.

   \STATE Set $S = \{i^*\}\cup S'$ and $N' = N'\setminus S'$.

   \STATE Set $i^* = \argmax_{j\in S} \phi_j(b_j; \cD'_j)$,
   with ties broken lexicographically. \label{step10}

 \ENDWHILE

 \STATE Set $\phi_{second} = \max_{j\in N \setminus (\{a, i^*\}\cup N')} \phi_j(b_j; \cD'_j)$
  and the price $p_i = 0$ for each player $i$. \label{step11}

 \STATE If $\phi_{i^*}(b_{i^*}; \cD'_{i^*}) < 0$ then the item is unsold;
 otherwise, the item is sold to player $i^*$ and
 $p_{i^*} = \phi^{-1}_{i^*}(\max\{\phi_{second}, 0\}; \cD'_{i^*})$.\label{step13a}

% \STATE Finally, for each pair of players $i$ and $i'$ such that
% $\cD^i_{i'}\neq \bot$, set $p_i = p_i - \frac{\epsilon}{2n^2}
% \cdot BSR(\cD^i_{i'}, b_{i'})$. \label{step13}

\end{algorithmic}
\end{algorithm}

To help understanding our mechanism, we illustrate in Figure~\ref{fig:IEM} of Appendix \ref{app:figure:2con} the sets of players involved in the first round.
We have the following theorem, proved in Appendix \ref{app:proofwarm}.

\begin{theorem}
\label{thm:myerson}
For any single-good auction instances $\hat{\cI} = (N, M, \cD)$ and $\cI = (N, M, \cD, G)$ where $G$ is 2-connected,
$\cM_{IEM}$ is 2-DST and
$Rev(\cM_{IEM}(\cI)) \geq (1-\frac{1}{n})OPT(\hat{{\cI}})$.
\end{theorem}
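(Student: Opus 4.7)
My plan is to reduce the theorem to Myerson's single-good mechanism applied to $N\setminus\{a\}$. Under truthful reporting, I aim to show that $\cM_{CSM}$ produces exactly the outcome of Myerson's mechanism on $N\setminus\{a\}$ with the true distributions, and that in expectation over the random choice of $a$ the revenue is a $(1-1/n)$-approximation to $OPT$. For 2-DST in values, I fix any sub-profile of the other players' reports and observe that every $\cD'_j$ computed by the mechanism is a function of the reported knowledge only, so in particular $\cD'_i$ is independent of $b_i$. The winning condition $\phi_i(b_i;\cD'_i)\geq \max(\phi_{second},0)$ is then monotone in $b_i$ and the mechanism charges $i$ exactly her critical value, so Myerson's payment identity gives DST in values. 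For the knowledge portion of 2-DST, I would use the structural fact that $i$'s own knowledge never enters into her own virtual value (it is supplied by other players) together with the scoring-rule tie-breaking of Appendix~\ref{sec:buyknowledge}.

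The heart of the argument is a combinatorial claim: under 2-connectivity of $G$, the while loop always exits with $N'=\emptyset$, so every player in $N\setminus\{a\}$ is reached. I would establish two invariants. First, each player appears in the frontier $S\setminus\{i^*\}$ for at most one round: the update $S\leftarrow\{i^*\}\cup S'$ discards every non-$i^*$ member, and such a member cannot re-enter $S$ later since the new $S'$ is drawn from $N'$ and hence contains no previously-reached player. Second, when $p$ is in the frontier of some round $t$, every one-hop out-neighbor of $p$ lying in $N'_{t-1}$ is absorbed into $R_t$, and since $N'$ can only shrink, $p$ retains no out-edge into any later $N'$. Combining these: if the loop exited with $N'\neq\emptyset$, neither the current frontier nor any previously-visited frontier player would have an out-edge to $N'$, and since $a$'s out-edges lie in $R_0\subseteq R$ we get $(R\cup\{a\})\setminus\{i^*\}$ has no out-edge to $N'$. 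But 2-connectivity makes $G\setminus\{i^*\}$ strongly connected, and $(R\cup\{a\})\setminus\{i^*\}$ is a proper non-empty subset of its vertex set (since $N'\neq\emptyset$ and $i^*\in R$), so such an edge must exist — a contradiction.

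Given the claim, and noting by a simple induction that $i^*_t$ is always the global argmax of $\phi_j(b_j;\cD'_j)$ over all reached-so-far players (so the final $\phi_{second}$ is the global second-highest), the mechanism implements Myerson's mechanism on $N\setminus\{a\}$ with the true distributions. Its expected revenue is therefore $\bE_a[\mathrm{Rev}_{\mathrm{Mye}}(N\setminus\{a\})]$ where $a$ is drawn uniformly from $N$. Writing $\mathrm{Rev}_{\mathrm{Mye}}(T)=\bE_v[\max_{j\in T}\phi_j(v_j)^+]$, for each realization of $v$ only the unique argmax $a^\star$ contributes a nonzero term $\phi_{(1)}^+-\phi_{(2)}^+$ to $\sum_a[OPT-\mathrm{Rev}_{\mathrm{Mye}}(N\setminus\{a\})]$. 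Hence this sum is at most $\bE[\phi_{(1)}^+]=OPT$, and dividing by $n$ gives $\bE_a[\mathrm{Rev}_{\mathrm{Mye}}(N\setminus\{a\})]\geq (1-1/n)OPT$, as desired.

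The main obstacle is the combinatorial reach claim. The subtlety is that the frontier $S\setminus\{i^*\}$ evolves non-monotonically: when the maximum virtual value shifts to a newly-added player in $S'$, the former $i^*$ rejoins the frontier alongside the other new additions, so the at-most-one-frontier-appearance invariant must be verified carefully in this case. The 2-connectivity hypothesis plays exactly the right role — removing the single node $i^*$ leaves a strongly connected graph, which guarantees that some node in the reached set (minus $i^*$) has an out-edge to $N'$, and by the invariants this edge must originate in the current frontier, forcing the loop to make progress.
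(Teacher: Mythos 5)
Your proposal is correct and follows essentially the same route as the paper: the same combinatorial reach argument via 2-connectivity (your two invariants are precisely the paper's observation that once a player $p\neq i^*$ is dropped from $S$, all of $p$'s out-neighbors already lie in $N\setminus N'$, and likewise $a$'s out-neighbors lie in $S_0$), the same conclusion that $\cM_{CSM}$ simulates Myerson on $N\setminus\{a\}$ with the true distributions, and the same $(1-1/n)$ bookkeeping. The one genuine difference is the final revenue estimate. You argue directly on virtual values, using $\mathrm{Rev}_{\mathrm{Mye}}(T)=\bE_v[\max_{j\in T}\phi_j(v_j)^+]$ and the fact that for each realized $v$ only one choice of $a$ (the argmax) contributes to $\sum_a(\mathrm{OPT}-\mathrm{Rev}_{\mathrm{Mye}}(N\setminus\{a\}))$, bounding that sum by $\mathrm{OPT}$. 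The paper instead builds an explicit DST mechanism $\cM'$ for $\hat{\cI}_a$ that runs Myerson on $\hat{\cI}$ and discards $a$'s payment, giving $\mathrm{OPT}(\hat{\cI}_a)\geq\sum_{i\neq a}P_i(\mathrm{OPT}(\hat{\cI}))$; averaging over $a$ is then a change-of-summation. Both arguments yield the same factor; yours is a bit more direct.

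One place where your argument is weaker than the paper's and would benefit from tightening: the knowledge half of 2-DST. The fact that $i$'s own knowledge never enters $i$'s own virtual value is true but insufficient by itself, since in principle $i$'s knowledge report could change the evolution of $S$ and hence change $\phi_{\mathrm{second}}$, affecting $i$'s price when $i$ wins. The paper's actual argument is sharper and closes this gap: player $i$'s reported $K_i$ is only ever read in a round where $i\in S\setminus\{i^*\}$, and at the end of that same round $i$ is removed from $S$ permanently (new entries come only from $N'$, which no longer contains $i$), so conditional on $i$'s knowledge ever being used, $i$ gets the item with probability zero. Hence $i$'s knowledge report never affects $i$'s allocation or price, which already gives the weak inequality required by 2-DST. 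You should use this observation rather than invoking scoring rules: the appendix scoring-rule machinery only breaks ties and is not part of the mechanism as stated in the theorem.
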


\begin{proof}[Proof ideas]

The mechanism disentangles the use of the players' values and
the use of their knowledge, but in a more subtle and stingy way than randomized partition.
Indeed,
when computing a player's virtual value in Step \ref{step10},
his knowledge has not been used yet.
If he is player $i^*$ then his knowledge will not be used in the next round either.
Only when a player is removed from $S$ ---that is, when it is guaranteed that he will not get the item,
will his knowledge be used. This is why it never hurts a player to report his true knowledge.

Now consider the revenue when the players report their true values and true knowledge.
%{\it Proof Sketch of Theorem \ref{thm:myerson}.}
%To see where  2-connectedness comes into play,
Note that $|S|\geq 2$ in Step \ref{csm:2} due to 2-connectedness, so the mechanism does not stop in Step~\ref{step3}.
In the iterative steps, because player $i^*$ is excluded from the set of reporters,
we need that there is still a reporter who knows a distribution for players in~$N'$:
that is, there is an edge from $N\setminus (N'\cup \{i^*\})$ to~$N'$, and player $i^*$ is not an ``information hub'' between $N\setminus N'$ and $N'$. This is again guaranteed by  2-connectedness (note that strong connectedness alone is not enough).
%Since this needs to be true in each round, we essentially need that no player is an information hub between any two subsets of players, thus graph is 2-connected.
Accordingly, $\cM_{IEM}$ does not stop until $N'=\emptyset$ and all players' distributions have been reported (excluding, perhaps, that of player $a$).
Therefore $\cM_{IEM}$ recovers $\hat{\cI}$ and runs Myerson's mechanism on it after randomly excluding a player $a$, and the revenue guarantee follows.
\end{proof}

%\vspace{-15pt}
%\paragraph*{Remark.}
If the seller knows at least two distributions, the mechanism can use him as the starting point and the revenue will be exactly $OPT$.
%; but our result holds even when the seller knows nothing.
Since no information elicitation mechanism can be a $(\frac{1}{2}+\delta)$-approximation for any constant $\delta>0$
when $n=2$ \cite{azar2012crowdsourced}, our result is tight.
Interestingly,
after obtaining our result, we found that 2-connected graphs
%(though not as knowledge graphs)
have been explored several times in the game theory literature \cite{bach2014pairwise, renault1998repeated}, for totally different problems.

\medskip
For additive auctions, when the knowledge graphs are 2-connected, instead of using mechanism
 $\cM_{IE1LA}$ or $\cM_{IEIM}$, one can use $\cM_{IEM}$
 for each item $j$.
We thus have the following corollary, where the mechanism $\cM''_{IEA}$ runs $\cM_{IEM}$ with probability $\frac{3}{4}$ and $\cM_{IEBVCG}$ with probability~$\frac{1}{4}$.

% in additive auction if the knowledge graph of each item is 2-connected.
%\vspace{-5pt}
\begin{corollary}\label{col:additive}
For any additive auction instances $\hat{\cI} = (N, M, \cD)$ and $\cI = (N, M, \cD, G)$ where each $G_j$ is 2-connected,
%When the knowledge graphs are 2-connected,
mechanism $\cM''_{IEA}$ is 2-DST and
$\mathop\mathbb{E}_{\cD}Rev(\cM''_{IEA}({\cI}))
\geq \frac{1}{8}(1-\frac{1}{n})
OPT(\hat{{\cI}})$.
\end{corollary}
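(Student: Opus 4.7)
The plan is to combine the 2-DST and revenue guarantees of the two component mechanisms, $\cM_{CSBVCG}$ and a per-item parallel run of $\cM_{CSM}$, with a standard additive-auction revenue decomposition $OPT(\hat{\cI}) \leq 2\, BVCG(\hat{\cI}) + 6\, IM(\hat{\cI})$ that is implicit in the proof of Theorem \ref{thm:additive} (and ultimately traces back to the duality framework of \cite{cai2016duality}).

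For incentives, both $\cM_{CSBVCG}$ and the per-item version of $\cM_{CSM}$ are 2-DST: the former by Theorem \ref{thm:additive}, and the latter because in an additive auction each player's utility decomposes as a sum over items, so item-wise 2-DST lifts to overall 2-DST. Since $\cM''_{CSA}$ mixes the two mechanisms with fixed probabilities $\tfrac{1}{4}$ and $\tfrac{3}{4}$ that do not depend on any report, $\cM''_{CSA}$ inherits 2-DST.

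For the revenue bound, 2-connectedness of each $G_j$ implies 1-boundedness, so every distribution $\cD_{ij}$ is reported by at least one player. Inspecting $\cM_{CSBVCG}$ in this regime, no item ever falls into the $M_i^2$ branch that triggers the second-price fallback, so the mechanism reproduces the outcome of $BVCG$ applied directly to $\hat{\cI}$; hence $\bE Rev(\cM_{CSBVCG}(\cI)) = BVCG(\hat{\cI})$. For $\cM_{CSM}$ run separately on each item $j$, Theorem \ref{thm:myerson} applied to $G_j$ gives expected revenue at least $(1-\tfrac{1}{n})$ times the single-item Myerson optimum on item $j$; summing over items by additivity yields expected revenue at least $(1-\tfrac{1}{n})\, IM(\hat{\cI})$. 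Combining with the randomization weights,
\begin{align*}
\bE Rev(\cM''_{CSA}(\cI))
&\geq \tfrac{1}{4}\, BVCG(\hat{\cI}) + \tfrac{3}{4}\bigl(1-\tfrac{1}{n}\bigr) IM(\hat{\cI}) \\
&\geq \bigl(1-\tfrac{1}{n}\bigr)\cdot \tfrac{1}{4}\bigl[BVCG(\hat{\cI}) + 3\, IM(\hat{\cI})\bigr] \\
&\geq \tfrac{1}{8}\bigl(1-\tfrac{1}{n}\bigr) OPT(\hat{\cI}),
\end{align*}
where the second inequality uses $BVCG \geq (1-\tfrac{1}{n}) BVCG$ and the third uses the decomposition $OPT \leq 2\, BVCG + 6\, IM$.

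The main obstacle I anticipate is twofold. First, one must rigorously verify that under 2-connectedness every player-item pair enters the reported branch $M_i^1$ of $\cM_{CSBVCG}$, so that the mechanism truly reproduces $BVCG(\hat{\cI})$ pointwise rather than only in expectation; this is bookkeeping inside Mechanism \ref{alg:newbvcg} but must be checked for each step that depends on the partition $M_i^1 \cup M_i^2$. Second, one should pin down the coefficient decomposition $OPT \leq 2\, BVCG + 6\, IM$, either by invoking the duality-based bound of \cite{cai2016duality} directly or by reverse-engineering it from Theorem \ref{thm:additive} by letting $\tau_k \to 1$ in the $\frac{\tau_k}{3+\tau_k},\frac{3}{3+\tau_k}$ weights of $\cM'_{CSA}$; either route is routine once identified.
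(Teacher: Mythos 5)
Your proof is essentially the one the paper intends: mix the two 2-DST gadgets with fixed weights $\tfrac14, \tfrac34$ (which preserves 2-DST), use Lemma \ref{lem:add:bvcg} to get $\bE Rev(\cM_{CSBVCG}) = \bE Rev(BVCG(\hat\cI))$ under 1-boundedness, apply Theorem \ref{thm:myerson} per item to get $(1-\tfrac1n)\bE Rev(IM(\hat\cI))$, and close with the decomposition $OPT(\hat\cI) \leq 2\,\bE Rev(BVCG(\hat\cI)) + 6\,\bE Rev(IM(\hat\cI))$ already invoked in the proof of Theorem \ref{thm:additive}. The arithmetic $\tfrac14 BVCG + \tfrac34(1-\tfrac1n) IM \geq (1-\tfrac1n)\cdot\tfrac14(BVCG + 3\,IM) \geq \tfrac18(1-\tfrac1n)OPT$ checks out.

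One small slip worth flagging: you attribute an $M_i^1/M_i^2$ partition with a second-price fallback to $\cM_{CSBVCG}$, but that partition belongs to $\cM_{CSA}$ (Mechanism \ref{alg:newbvcg}). Mechanism $\cM_{CSBVCG}$ (Mechanism \ref{alg:bvcg}) instead simply drops any player $i$ for whom some $\cD_{ij}$ is unreported. Under 2-connectedness (hence 1-boundedness) no player is dropped, so it reproduces $BVCG$ exactly --- this is precisely Lemma \ref{lem:add:bvcg}, and your conclusion is unaffected, but the mechanism internals you describe belong to a different gadget.
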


It would be very interesting to see if other combinatorial structures of knowledge graphs
can be leveraged in information elicitation mechanisms and facilitate the aggregation of the players' knowledge.

\newpage

\appendix

\noindent
\section*{Appendix}

\section{Proofs for Section \ref{sec:k=0}}
\label{app:proofpartial}

\subsection{Proof of Theorem \ref{thm:unit}}\label{app:unknown:unit}

%We first prove the following lemma.
%\vspace{-10pt}
\paragraph*{Lemma \ref{ud:truthful}.} (restated) {\em
Mechanism ${\cal M}_{IEUD}$ is 2-DST.
}

\begin{proof}
We start with the first requirement in the solution concept: it is the best for a player to report his true values, no matter what knowledge he reports and what strategies the others use.
%More precisely, we have the following.
\begin{claim}\label{clm:ud:2dst:1}
For any player $i$, true valuation $v_i$,
%fixing any reported knowledge $K_i$, it is {\em (weakly) dominant} for $i$ to report his true valuation $v_i$. That is, for
 valuation $b_i$, knowledge $K_i$, and strategy subprofile $s_{-i} = (b_j, K_j)_{j\neq i}$ of the other players,
% in mechanism $\cM_{IEUD}$ we have
$\bE_{\cM_{IEUD}}u_i((v_i, K_i), s_{-i}) \geq \bE_{\cM_{IEUD}}u_i((b_i, K_i), s_{-i})$,
where the expectation is taken over the mechanism's random coins.
%For any player $i$, knowledge $K_i$ and true valuation $v_i$, fixing the second component of $i$'s strategy to be $K_i$, it is dominant for $i$ to report $v_i$.
\end{claim}

\begin{proof}
%Arbitrarily fix a valuation $b_i=(b_{ij})_{j\in M}$ for $i$ and a strategy $s_{i'} = (b_{i'}, K_{i'})$ for each player $i'\neq i$.
%We need to compare
%$\bE_{\cM_{IEUD}} u_i(v_i, K_i)$ and $\bE_{\cM_{IEUD}} u_i(b_i, K_i)$,
% where  the expectation is taken over the mechanism's random coins.%
%\footnote{Strictly speaking, each $s_{i'}$ should depend on $v_{i'}$ and the expectation should also be taken over $\cD_{-i}$. However,
%reporting $v_i$ is dominant for player~$i$ no matter what $v_{-i}$ is, thus there is no need to consider $\cD_{-i}$.}

If player~$i$ is not in $N_3$ given the mechanism's
randomness and the reported knowledge of all players,
then his reported valuation is never used to compute
his allocation or price,
and he gets the same utility for reporting any valuation.
Thus $u_{i}((v_{i},K_{i}), s_{-i})=u_{i}((b_{i},K_{i}), s_{-i})$ conditional on $i\notin N_3$.

If player $i$ is in $N_3$, then his utility is determined by $\cM_{UD}$.
If $\cD'_{ij}$ is defined to be 0 with probability 1 in Step \ref{step12},
then $i$'s reported value for item $j$ is not given to $\cM_{UD}$ as input,
and $i$ gets the same utility for reporting any value for $j$, including $v_{ij}$.
Moreover, because $i$ does not get such an item~$j$,
his utility is the same as an imaginary player $\hat{i}$ whose valuation is the same as $i$'s, except that the true value of $\hat{i}$ for $j$ is 0.
Since $\cM_{UD}$ is DST, no matter what the distributions are and what values the other players report,
it is the best for $\hat{i}$ to report his true valuation.
Accordingly, it is the best for $i$ to report his true valuation $v_i$ as well.
That is,
$u_i((v_i, K_i), s_{-i}) \geq u_i((b_i, K_i), s_{-i})$ conditional on $i\in N_3$.
%and the inequality is strict for some strategy subprofiles~$s_{-i}$.

Combining these two cases,
%$\bE_{\cM_{IEUD}} u_i(v_i, K_i)\geq \bE_{\cM_{IEUD}} u_i(b_i, K_i)$ for all $s_{-i}$ and the inequality is strict for some $s_{-i}$. Thus, fixing $K_i$ in player $i$'s strategy,
%it is dominant for $i$ to report~$v_i$, and
Claim \ref{clm:ud:2dst:1} holds.
\end{proof}

We now prove the second requirement in the solution concept: given that all players report their true valuations, it is the best for a player to report his true knowledge.

\begin{claim}\label{clm:ud:2dst:2}
For any player $i$, true valuation $v_i$, true knowledge $K_i$,
% $i$'s true knowledge by $TK_i$, then
% = \times_{i'\neq i, j\in [m]} \cD^i_{i'j}$, where $\cD^i_{i'j} = \cD_{i'j}$ if edge $(i, i')$ is in the knowledge graph for item $j$ and $\cD^i_{i'j} = \bot$ otherwise.
%for any true valuation $v_i$,
knowledge $K'_i$, and knowledge subprofile $K'_{-i}(v_{-i}) = (K'_j(v_j))_{j\neq i}$ of the other players,
where each $K'_j(v_j)$ is a function of player $j$'s true valuation~$v_j$,
$\bE_{v_{-i}\sim \cD_{-i}} u_i((v_i, K_i), (v_{-i}, K'_{-i}(v_{-i}))) \geq \bE_{v_{-i}\sim \cD_{-i}} u_i((v_i, K'_i), (v_{-i}, K'_{-i}(v_{-i})))$.
%
%Given that all players report their true values,
%it never hurts a player $i$ to report truthfully
%$K_i = (\cD^i_{i'j})_{{i'}\neq i, j\in M}$ as defined by the knowledge graphs:
%that is,
%for each item $j$,
%$\cD^i_{i'j} = \cD_{i'j}$ for all~$i'$ such
%that $(i, i')\in G_{j}$, and $\cD^i_{i'j} = \bot$ otherwise.
\end{claim}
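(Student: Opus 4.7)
The plan is to show that player $i$'s utility is actually \emph{pointwise} independent of his reported knowledge (conditional on the mechanism's randomness and on all other players reporting their true values), from which the desired expectation inequality follows as an equality. The argument rests entirely on the disentanglement already highlighted in the proof sketch of Lemma~\ref{ud:truthful}: the mechanism partitions $N$ into $N_1$ and $N_2$, uses only the reported \emph{knowledge} of players in $N_1$, and only the reported \emph{values} of players in $N_2$.

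First I would condition on the coin flips that produce $N_1$ and $N_2$ and split into two cases. In the case $i\in N_1$, the mechanism guarantees (by the definition of $N_3\subseteq N_2$ in Step~\ref{step5}) that $i\notin N_3$, so in the last two steps $i$ gets no item and pays $p_i=0$; hence $u_i=0$ regardless of whether $i$ reports $K_i$ or $K'_i$. In the case $i\in N_2$, I would observe that $i$'s reported knowledge is never read by the mechanism: Step~\ref{step5} iterates only over players in $N_1$, so the distributions $\cD'_{i'j}$ used as input to $\cM_{UD}$, the set $N_3$, and the value vector $(b_{ij})_{i\in N_3,j\in M}$ all depend on $i$'s report only through $b_i=v_i$. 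Consequently the allocation and price $i$ receives, and therefore $u_i$, are the same under $(v_i,K_i)$ and $(v_i,K'_i)$.

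Combining the two cases yields the pointwise identity
\[
u_i\bigl((v_i, K_i), (v_{-i}, K'_{-i}(v_{-i}))\bigr) \;=\; u_i\bigl((v_i, K'_i), (v_{-i}, K'_{-i}(v_{-i}))\bigr)
\]
for every realization of $v_{-i}$ and every realization of the mechanism's random coins. Taking expectations over $v_{-i}\sim \cD_{-i}$ (and over the internal randomness of $\cM_{CSUD}$) gives equality of the two sides of the inequality in Claim~\ref{clm:ud:2dst:2}, which is sufficient. There is no real obstacle here; the only subtlety worth flagging explicitly is that the functional form $K'_{-i}(v_{-i})$ of the other players' reports does not interact with the argument, because the equality is established pointwise in $v_{-i}$ before any averaging. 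Combining Claims~\ref{clm:ud:2dst:1} and~\ref{clm:ud:2dst:2} then completes the proof that $\cM_{CSUD}$ is 2-DST.
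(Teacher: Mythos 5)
Your proposal is correct and follows essentially the same two-case decomposition as the paper's proof: if $i\in N_1$ then $i\notin N_3$ so $u_i=0$ regardless of reported knowledge, and if $i\in N_2$ then $i$'s knowledge is never read, so $u_i$ is unchanged. You spell out the pointwise-in-coins-and-$v_{-i}$ nature of the identity slightly more explicitly, but the argument is the same.
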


\begin{proof}
If player~$i$ is in $N_1$, then
he is guaranteed to get no item and pay 0, so his utility is 0 no matter which knowledge he reports.
%Thus $u_{i}(v_{i},K_{i})=u_{i}(v_{i},K'_{i})$ in this case.
If player $i$ is in $N_2$, then his knowledge is never used, and
he again gets the same utility no matter which knowledge he reports.
%Thus we also have $u_{i}(v_{i},K_{i})=u_{i}(v_{i},K'_{i})$ in this case.
%Combining the above two cases,
Thus $\bE_{v_{-i}\sim \cD_{-i}} u_i((v_i, K_i), (v_{-i}, K'_{-i}(v_{-i}))) = \bE_{v_{-i}\sim \cD_{-i}} u_i((v_i, K'_i), (v_{-i}, K'_{-i}(v_{-i})))$
and Claim \ref{clm:ud:2dst:2} holds.
\end{proof}
Lemma \ref{ud:truthful} follows directly from Claims \ref{clm:ud:2dst:1} and \ref{clm:ud:2dst:2}.
\end{proof}

%\vspace{-10pt}
\paragraph*{The COPIES setting.}
Before analyzing the revenue of mechanism ${\cal M}_{IEUD}$, we first recall
the COPIES setting for reducing multi-parameter settings to single-parameter settings \cite{chawla2010multi}.
Given a unit-demand auction instance $\hat{\cal I}=(N, M, \cD)$, the corresponding COPIES instance is constructed as follows.
We make $m$ copies for each player,
called {\em player copies}, and
denote the resulting player
set by $N^{CP}=N\times M$. We make $n$
copies for each item, called {\em item copies},
and denote the resulting item set by $M^{CP}=M\times N$.
Each player copy $(i, j)$
has value $v_{ij}\sim \cD_{ij}$ for the item copy  $(j,i)$, and
0 for all the other item copies.

The set of feasible allocations in the original unit-demand auction naturally defines the set of feasible allocations in the COPIES auction:
for any feasible allocation $A$ in the original setting, if player $i$ gets item~$j$, then in the corresponding allocation in the COPIES setting, player $i$'s copy~$j$ gets item $j$'s copy $i$, and all other copies of $i$ get nothing.
Since in the original setting each item is sold to at most one player, in the COPIES setting, for all copies of the same item, at most one of them is sold.
Moreover, since each player gets one item in the original setting, in the COPIES setting, for all copies of the same player, at most one of them gets an item copy.
We denote by $\hat{\cI}^{CP} = (N^{CP}, M^{CP}, \cD^{CP})$ the COPIES instance,
with $\cD^{CP} = \times_{(i,j)\in N^{CP}} \cD_{ij}$.

%Intuitively, $\hat{\cal I}^{CP}$ involves more
%competition among the player copies:
%copies of the same original player are different players now, and their bids may be used to decide the prices for each other.
%Thus the optimal revenue for COPIES is not too small
%compared with the optimal revenue for the original unit-demand instance,
%as shown by the following lemma.

%\begin{lemma} \label{lem:rev_4_cai:ud} {\cite{cai2016duality}}
%%For unit-demand bidders, the optimal
%%revenue is upper bounded by $4OPT(\hat{\cI}^{CP})$.
%$OPT(\hat{\cI}) \leq 4OPT(\hat{\cI}^{CP})$.
%\end{lemma}

\vspace{-10pt}
\paragraph*{The projected setting.}
Next, we consider the optimal Bayesian revenue for the COPIES setting when ``projected'' to smaller instances.
%show that the optimal Bayesian mechanism for the COPIES setting satisfies the {\em projection lemma}.
Let $NM\subseteq N\times M$ be a subset of player-item pairs
and $N'$ be $NM$ projected to $N$.
The unit-demand instance $\hat{\cI}_{NM} = (N', M, \cD'_{N'})$, referred to as {\em $\hat{\cI}$ projected to $NM$},
is such that $\cD'_{ij} = \cD_{ij}$ if $(i, j)\in NM$,
and $\cD'_{ij}\equiv 0$ otherwise.
%When $NM$ can be represented as $N'\times M'$, we may
%also denote the projected instance by $\hat{\cI}_{N',M'}$.
%If $N' = N$ or $M' = M$, we simplify the notation by using $\hat{\cI}_{M'}$ or $\hat{\cI}_{N'}$ instead.

Let
$\hat{\cI}^{CP}_{NM} = (N'^{CP}, M^{CP}, \cD'^{CP}_{N'})$
be the COPIES instance corresponding to $\hat{\cI}_{NM}$.
It can also be considered as {\em $\hat{\cI}^{CP}$ projected to $NM$}.
That is, when projecting a COPIES instance to a set of player-item pairs,
we still want the resulting instance to be a COPIES instance, thus we patch it up with
the missing player-item pairs but with values constantly 0.
The relations of these instances are illustrated in Figure \ref{fig:proj}. We are interested in
the optimal Bayesian revenue under $\hat{\cI}^{CP}_{NM}$, $OPT(\hat{\cI}^{CP}_{NM})$.

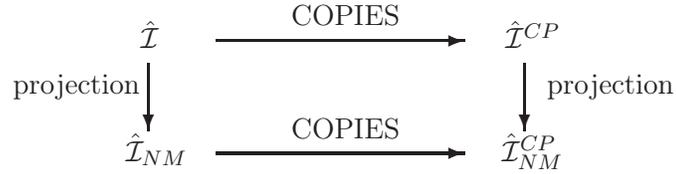
\begin{figure}[htbp]
\begin{center}
\setlength{\unitlength}{1cm}
\thicklines
\begin{picture}(8,2)

\put(-0.2,0.9){projection}
\put(1.5,1.5){$\hat{\cI}$}
\put(1.3,0){$\hat{\cI}_{NM}$}
\put(1.6,1.3){\vector(0,-1){0.9}}

\put(2.5,1.6){\vector(1,0){3.3}}

\put(2.5,0.1){\vector(1,0){3.3}}

\put(6.9,0.9){projection}
\put(6.35,1.5){$\hat{\cI}^{CP}$}
\put(6.3,0){$\hat{\cI}_{NM}^{CP}$}
\put(6.6,1.3){\vector(0,-1){0.9}}

\put(3.5,1.8){COPIES}
\put(3.5,0.3){COPIES}

\end{picture}
\caption{Relations of the COPIES and the projected instances}\label{fig:proj}
\end{center}
\end{figure}

Let $OPT(\hat{\cI}^{CP})_{NM}$ be the revenue of the optimal Bayesian mechanism for $\hat{\cI}^{CP}$
obtained from $NM$:
that is, for any pair $(i, j)\notin NM$,
%the $j$-th copy of player $i$ gets nothing and pays 0,
the price paid by the $j$-th copy of player $i$ is not counted,
even though this player copy may get the $i$-th copy of item $j$ according to the
optimal mechanism for $\hat{\cI}^{CP}$.
%$OPT(\hat{\cI}^{CP})$.
We have the following lemma, where we explicitly write out the distributions for different Bayesian instances.

\vspace{-10pt}
\paragraph*{Lemma \ref{lem:proj}.} (The projection lemma, restated) {\em
For any $\hat{\cI}$ and $NM\subseteq N\times M$,
$$\mathop\bE\limits_{\cD'^{CP}_{N'}} OPT(\hat{\cI}^{CP}_{NM})\geq \mathop\bE\limits_{\cD^{CP}} OPT(\hat{\cI}^{CP})_{NM}.$$}

%\begin{lemma} [The projection lemma (Formal)]\label{lem:proj}
%For any $NM\subseteq N\times M$,
%$$\mathop\bE\limits_{\cD'^{CP}_{N'}} OPT(\hat{\cI}^{CP}_{NM})\geq \mathop\bE\limits_{\cD^{CP}} OPT(\hat{\cI}^{CP})_{NM}.$$
%\end{lemma}

\begin{proof}
%\noindent
Given the instance $\hat{\cal I}_{NM}^{CP}$, consider a Bayesian mechanism $\cM'$ as follows:
\begin{itemize}
\item
Patch up the instance to be exactly $\hat{\cI}^{CP}$, including changing the distribution $\cD'_{ij}$ from 0 to $\cD_{ij}$ for
any $(i, j)$ with $i\in N'$ and $(i, j)\notin NM$;

\item
For any $(i, j)\in NM$, let $i$'s copy $j$ report his value, while for any $(i, j)\not\in NM$, sample the value of $i$'s copy $j$ from $\cD_{ij}$;

\item
Run the optimal DST Bayesian mechanism
 on $\hat{\cI}^{CP}$, with the reported and the sampled values;%
\footnote{Since COPIES is a single-parameter setting, Myerson's mechanism is both the optimal BIC mechanism and the optimal DST mechanism here. In particular, the optimal BIC revenue equals the optimal DST revenue.}

\item
Project the resulting outcome to $NM$.

\end{itemize}

The key is to show that $\cM'$ is a DST Bayesian mechanism for $\hat{\cal I}_{NM}^{CP}$.
Indeed, for any $(i, j)$ such that
$i\in N'$ and $(i, j)\notin NM$,
the value of $i$'s copy $j$ in $\hat{\cal I}_{NM}^{CP}$ is 0 with probability 1,
his reported value is not used by $\cM'$,
and at the end this player copy gets nothing and pays 0.
Therefore it is dominant for this player copy to report his true value 0.
For any $(i, j)\in NM$,
the utility of $i$'s copy~$j$ under $\cM'$ is the same as that under the optimal DST Bayesian mechanism
on $\hat{\cI}^{CP}$. As it is dominant for this player copy to report his true value in the latter,
so is it in $\cM'$.
Accordingly,
$$\mathop\mathbb{E}\limits_{\cD'^{CP}_{N'}} OPT(\hat{\cal I}_{NM}^{CP})
\geq \mathop\mathbb{E}\limits_{\cD'^{CP}_{N'}} Rev(\cM'(\hat{\cal I}_{NM}^{CP})),$$
by the definition of $OPT$.
By construction we have
$$\mathop\mathbb{E}\limits_{\cD'^{CP}_{N'}} Rev(\cM'(\hat{\cal I}_{NM}^{CP})) =
\mathop\mathbb{E}\limits_{\cD^{CP}}
OPT(\hat{\cI}^{CP})_{NM},$$
thus Lemma \ref{lem:proj} holds.
\end{proof}

Note that the projection lemma is not true for non-COPIES settings in general,
because the corresponding mechanism $\cM'$
is not DST.
Indeed, when the same player $i$ has $(i, j)\in NM$ and $(i, j')\notin NM$ for some
items $j$ and $j'$,
he prefers
receiving $j$ to receiving $j'$ in the patched-up auction,
even if the former leads to a smaller utility: his projected utility will be 0 under the latter.

Now we are ready to  finish the proof of Theorem \ref{thm:unit}.
% ready to analyze the revenue of mechanism $\cM_{IEUD}$.

\paragraph*{Theorem \ref{thm:unit}.} (restated) {\em
Mechanism $\cM_{IEUD}$ for unit-demand auctions is 2-DST and,
for any instances $\hat{\cI} = (N, M, \cD)$ and
$\cI = (N, M, \cD, G)$,
% and $\cI' = (N, M, \cD')$ with $\cD'$ being $\cD$ projected on $G$,
$\bE_{v\sim \cD} Rev(\cM_{IEUD}(\cI)) \geq \frac{OPT_K(\cI)}{96}$.
}
\medskip

\begin{proof}
Letting $\cI' = (N, M, \cD')$ be the Bayesian instance where $\cD'$ is $\cD$ projected on the knowledge graph $G$,
the COPIES instance $\cI'^{CP}$ is respectively defined.
%[Proof of Theorem \ref{thm:unit}]
%Recall that $q=\frac{1}{2}$.
Following Lemma \ref{ud:truthful} and by the definition of $OPT_K$, it remains to show
$$\bE_{v\sim \cD} Rev(\cM_{IEUD}(\cI)) \geq \frac{OPT(\cI')}{96}.$$
%Note that for player $i$ and item $j$ with $\cD_{ij}$ unknown by any player,
%we can view it as if $\cD_{ij} \equiv 0$ and every player knows that.
%Then %by adding the seller to the player set $N_1$,
For any player $i$ and item $j$
with $\cD_{ij}$ known by some other players, we have
\begin{eqnarray*}
\Pr(\cD_{ij} \text{ is reported in the mechanism}) &= & \Pr(i\in N_2)\Pr(\exists i'\in N_1 \mbox{ s.t. } (i', i)\in G_j \ | \  i \in N_2) \nonumber\\
&\geq &\frac{1}{2}\cdot \frac{1}{2} = \frac{1}{4},
\end{eqnarray*}
where the inequality is because there exists at least one player $i'$ with $(i', i)\in G_j$,
%we only consider player-item pair $i,j$ with $\cD_{ij}$ known by some other players and
and players $i$ and $i'$ are partitioned independently.
Below we use $NM_3$ to denote the set of player-item pairs whose distribution is reported in the mechanism: that is, the set of players $N_3$ together with their reported items.
Accordingly,
\begin{equation}\label{eq:ud:prob}
\Pr((i, j)\in NM_3) \geq \frac{1}{4}.
\end{equation}
Also, we use $\cD_{NM_3} = \times_{(i,j)\in NM_3} \cD_{ij}$ and $v_{NM_3} = (v_{ij})_{(i,j)\in NM_3}$ to denote the vector of distributions and the vector of true values for $NM_3$, respectively.
Let $\hat{\cI}_{NM_3} = (N_3, M, \cD'_{N_3})$ be the unit-demand instance given to $\cM_{UD}$ in Step \ref{step13_1}.
Note that it is exactly $\cI'$ projected to $NM_3$: that is,
$\cD'_{ij} = \cD_{ij}$ for any $(i, j)\in NM_3$
and $\cD'_{ij}\equiv 0$ for any $(i, j)\notin NM_3$.
%Moreover, let $R$ be the total reward given to the players using Brier's scoring rule in Step \ref{step:scoring rule:ud}.
%Since $BSR$ is
%bounded in $[0, 2]$, the reward each player $i$ gets in this step is no more than $\frac{\epsilon}{n}$ and
%the total reward given to the players is no more than $\epsilon$.
Accordingly,

\begin{eqnarray}
\hspace{20pt} \label{equ:4_0}
\mathop\bE\limits_{v\sim \cD} Rev({\cal M}_{IEUD}({\cal I})) & = &  \mathop\mathbb{E}\limits_{NM_{3}} \ \mathop\mathbb{E}\limits_{v_{NM_3}\sim \cD_{NM_3}}Rev({\cal M}_{UD}(\hat{\cal I}_{NM_{3}})).
%& \geq & \mathop\mathbb{E}\limits_{NM_{3}}\ \mathop\mathbb{E}\limits_{v_{NM_3}\sim \cD_{NM_3}} Rev({\cal M}_{UD}(\hat{\cal I}_{NM_{3}})) - \epsilon. \label{equ:4_0}
\end{eqnarray}

Let $\hat{\cI}^{CP}_{NM_3} = (N_3^{CP}, M^{CP}, \cD'^{CP}_{N_3})$ be the COPIES instance
corresponding to $\hat{\cI}_{NM_3}$.
Following~\cite{kleinberg2012matroid}, given any set $NM_3$,
%there exists a DST Bayesian mechanism $\cM^{CP}$ in the COPIES setting,
%which achieves less
the revenue of $\cM_{UD}$ under the unit-demand Bayesian instance $\hat{\cI}_{NM_3}$
%for the unit-demand instance,
is a $6$-approximation to the optimal revenue for the COPIES instance. That is, for any $NM_3$,
\begin{eqnarray}\label{equ:4_1}
\mathop\mathbb{E}\limits_{v_{NM_3} \sim \cD_{NM_3}}Rev({\cal M}_{UD}(\hat{\cal I}_{NM_{3}})) &  \geq & \frac{1}{6}\mathop\mathbb{E}\limits_{\cD'^{CP}_{N_3}} OPT(\hat{\cal I}_{NM_{3}}^{CP}).
\end{eqnarray}
Combining Inequalities \ref{equ:4_0} and \ref{equ:4_1}, we have
\begin{eqnarray}
\label{eq:ud:CPrevenue}
& & \mathop\bE\limits_{v\sim \cD} Rev({\cal M}_{IEUD}({\cal I})) \geq
\frac{1}{6} \mathop\mathbb{E}\limits_{NM_{3}}\ \mathop\mathbb{E}\limits_{\cD'^{CP}_{N_3}}OPT(\hat{\cal I}_{NM_{3}}^{CP}).
\end{eqnarray}
By Lemma \ref{lem:proj},
$$
%\label{eq:ud:benchmarkM}
\mathop\mathbb{E}\limits_{\cD'^{CP}_{N_3}} OPT(\hat{\cI}_{NM_{3}}^{CP}) \geq
\mathop\mathbb{E}\limits_{\cD^{CP}}
OPT(\cI'^{CP})_{NM_3},
$$
thus
\begin{equation}
\label{eq:ud:benchmarkM}
\mathop\mathbb{E}\limits_{NM_{3}}\ \mathop\mathbb{E}\limits_{\cD'^{CP}_{N_3}} OPT(\hat{\cal I}_{NM_{3}}^{CP}) \geq \mathop\mathbb{E}\limits_{NM_{3}}\
\mathop\mathbb{E}\limits_{\cD^{CP}}
OPT(\cI'^{CP})_{NM_3}.
\end{equation}

Let $P_{ij}(OPT(\cI'^{CP}))$ be the price paid by player $i$'s
copy $j$ under the optimal mechanism for~$\cI'^{CP}$.
We can rewrite the
right-hand side of Equation~\ref{eq:ud:benchmarkM} as follows.

\begin{eqnarray}
\label{eq:ud:expectation}\hspace{30pt}
& & \mathop\mathbb{E}\limits_{NM_{3}}\
\mathop\mathbb{E}\limits_{\cD^{CP}}
OPT(\cI'^{CP})_{NM_3} =
\mathop\mathbb{E}\limits_{NM_{3}}\ \mathop\mathbb{E}\limits_{\cD^{CP}}
\sum\limits_{(i,j)\in NM_{3}}
P_{ij}(OPT(\cI'^{CP})) \nonumber \\
&= &  \mathop\mathbb{E}\limits_{\cD^{CP}}\
\mathop\mathbb{E}\limits_{NM_{3}}
\sum\limits_{(i, j)\in NM_{3}}
P_{ij}(OPT(\cI'^{CP}))  \nonumber \\
& = & \mathop\mathbb{E}\limits_{\cD^{CP}}
\sum\limits_{(i, j) \in N\times M} \Pr((i, j) \in NM_{3})\cdot
P_{ij}(OPT(\cI'^{CP})),
\end{eqnarray}
where the first equality is by the definition of the projection,
the second is because
sampling from~$\cD^{CP}$ is done independently from
%the distribution of $v_{N^{CP}}$ does not depend on
$NM_3$,
and the third is
%by linearity of expectation and
because $P_{ij}(OPT(\hat{\cal I}^{CP}))$ does not depend on~$NM_3$.
We can further lower-bound the last term of Equation \ref{eq:ud:expectation}
as follows:

\begin{eqnarray}
\label{eq:ud:prob_k}\hspace{20pt}
& & \mathop\mathbb{E}\limits_{\cD^{CP}}
\sum\limits_{(i, j) \in N\times M} \Pr((i, j) \in NM_{3})\cdot
P_{ij}(OPT(\cI'^{CP})) \nonumber  \\
& = & \mathop\mathbb{E}\limits_{\cD^{CP}}
\sum\limits_{(i, j) \in N\times M: \exists i' \mbox{\scriptsize{ s.t. }} (i',i)\in G_j} \Pr((i, j) \in NM_{3})\cdot
P_{ij}(OPT(\cI'^{CP})) \nonumber  \\
&\geq& \frac{1}{4}
\mathop\mathbb{E}\limits_{\cD^{CP}}
\sum\limits_{(i, j) \in N\times M: \exists i' \mbox{\scriptsize{ s.t. }} (i',i)\in G_j} P_{ij}(OPT(\cI'^{CP})) \nonumber \\
&=& \frac{1}{4} OPT(\cI'^{CP}) \geq \frac{1}{16} OPT(\cI').
\end{eqnarray}
Here the first equality is because $P_{ij}(OPT(\cI'^{CP})) = 0$ for every $(i,j)$ such that $\cD_{ij}$ is unknown,
the first inequality is by Equation \ref{eq:ud:prob}, the second equality is by the definition of revenue, and the second inequality
is because $OPT(\cI'^{CP}) \geq \frac{OPT(\cI')}{4}$ by~\cite{cai2016duality}.
% Lemma \ref{lem:rev_4_cai:ud}.

Combining Equations \ref{eq:ud:CPrevenue}, \ref{eq:ud:benchmarkM}, \ref{eq:ud:expectation} and \ref{eq:ud:prob_k}, we have
$$\bE_{v\sim \cD} Rev(\cM_{IEUD}(\cI)) \geq \frac{1}{6}\cdot\frac{1}{16} \cdot OPT(\cI') = \frac{OPT(\cI')}{96},$$
and Theorem \ref{thm:unit} holds.
%\qed
\end{proof}

 %can be applied to any Bayesian mechanism $\cM$ for unit-demand auctions,

%\medskip
%\noindent
%{\bf Corollary \ref{col:copies}.} (restated)
%{\em Let $\cM$ be a DST Bayesian mechanism such that, for some $\alpha>0$ and for any
%Bayesian instance~$\hat{\cI}$,
%$$\mathop\mathbb{E}\limits_{\cD} Rev({\cal M}(\hat{\cal I}))
%\geq \alpha \mathop\mathbb{E}\limits_{\cD^{CP}} OPT(\hat{\cal I}^{CP}).$$
%Then there exists a 2-DST crowdsourced Bayesian mechanism
%that uses $\cM$ as a black-box and is a $\frac{\alpha}{16}$-approximation to $OPT$.
%}

\subsection{Proof of Theorem \ref{thm:newbvcg}}\label{app:unknown:add}

\begin{lemma}\label{lem:M'IEA2-dst}
Mechanism $\cM_{IEA}$ is 2-DST.
\end{lemma}
%\begin{proof}
%Arbitrarily fix a player $i$, a strategy subprofile
%of the other players, and a knowledge of $i$.
%For the set of items whose distributions for player $i$ are not reported by the others,
%$\cM_{IEA}$ runs second price on it.
%Thus it is dominant for $i$ to report his true values for those items.
%Otherwise, $\cM_{IEA}$ sells to player $i$
%in the same way as $BVCG$:
%using the other players' highest reported value as the reserve price for each item,
%either player $i$ gets the whole set of items for which his value passes the reserve price,
%or he gets nothing and those items are unsold to anybody.
%Following~\cite{yao2015n}, it is dominant for $i$ to report his true values given any entry fee that does not depend on his reported values,
%so truth-telling is still dominant for $i$ when the entry fee is computed based on $\cD'_i$ and the reserve prices.
%In sum, it is dominant for $i$ to report his true values.
%
%
%Furthermore, notice that a player $i$'s reported knowledge
%$K_i$ about others affects
%neither $M_i$ nor~$e_i$, nor the reserve prices for him.
%Thus, $K_i$ is only used to compute player $i$'s reward based on Brier's scoring rule,
%and $i$'s expected reward is maximized by reporting his true knowledge,
%given that all players report their true values.
%\end{proof}

\begin{proof}
The structure of a detailed proof for Lemma \ref{lem:M'IEA2-dst} will be following the two requirements in the solution concept and similar to that of Lemma \ref{ud:truthful}.
Thus we only highlight the key points here.
% the differences in the proof.
%When a
%distribution about a
%player $i$'s value for an
%item $j$ is not reported,
%%$\cM_{IEBVCG}$ chooses not to sell anything to player $i$ and charges him with price 0,
%while $\cM_{IEA}$ divides the items into two subsets $M^1_{i}$ and $M^2_{i}$,
%sells the first to $i$ using
%mechanism $Bund$, and sells the second to $i$ using the second-price mechanism.
%
%In mechanism $\cM_{IEA}$, g

For each player $i$, given the distributions and values reported by the other players,
the subsets $M^1_i$ and $M^2_i$ do not depend on player $i$'s strategy,
neither do the entry fee $e_i$ and reserve prices $(p'_j)_{j\in M^1_i}$.
As far as player $i$ is concerned, the other players' values are always taken to be $b_{-i}$, even if some of their value distributions are not reported.
%for $M^1_i$.
Computing $i$'s winning sets $M_i\cap M^1_i$ and $M_i\cap M^2_i$ is
just part of the two mechanisms, $Bund$ and second-price, again with the other players' values taken to be $b_{-i}$.
%Thus, mechanism $\cM_{IEA}$ is as if we are running mechanism $Bund$ for player $i$ on $M^1_i$
Thus, it is dominant for $i$ to
report his true values for $M^1_i$,
following the truthfulness of $Bund$;
and it is dominant
for him to report his true values for $M^2_i$,
following the truthfulness of the second-price mechanism.

%each player to report their true value since the allocation rule is obviously monotone and each player only need to pay the threshold payment.
Moreover, given that all players truthfully report their values,
for each player $i$, reporting his true knowledge never hurts him, no matter what knowledge the other players report.
%it is dominant for each $i$ to report his true knowledge:
Indeed, the winning set $M_i$ only depends on the players' reported values.
%\footnote{Note that $M_i$ may be different under instances $\hat{\cI}$ and $\cI'$.}
%In the rest of this section, we refer to $M_i$ as the winning set for player $i$ under instance $\hat{\cI}$.}
Player $i$'s reported knowledge may affect how $M$ is partitioned into $M^1_{i'}$ and $M^2_{i'}$ for another player $i'$,
but does not affect the sets $M^1_i$ and $M^2_i$, or whether he gets some items or not, or the prices he pays.
%It only affects his reward under Brier's scoring rule.
Thus $\cM_{IEA}$ is 2-DST as desired.
%and Lemma \ref{lem:M'IEA2-dst} holds.
\end{proof}

%their own knowledge, thus finishes the proof.
\paragraph{The adjusted revenue.}
To lower-bound the expected revenue of $\cM_{IEA}$,
we first introduce several important concepts
following \cite{yao2015n}.

For any
single-player Bayesian instance $\hat{\cI_i}=(\{i\},M,\cD_i)$
and any non-negative reserve-price vector $\beta_i=(\beta_{ij})_{j\in M}$,
a single-player DST Bayesian mechanism is {\em $\beta_i$-exclusive} if it
never sells an item $j$ to $i$
whenever his bid for $j$ is no larger than~$\beta_{ij}$.
Denote by $Rev^X(\hat{\cI}_i,\beta_i)$ the optimal $\beta_i$-exclusive revenue for
$\hat{\cI}_i$:
that is, the superior over the revenue of $\beta_i$-exclusive mechanisms.

For any single-player DST Bayesian mechanism, its {\em $\beta_i$-adjusted revenue} on $\hat{\cI}_i$
is its revenue minus
its social welfare generated from player-item pairs $(i, j)$ such that $i$'s bid for $j$ is no larger than~$\beta_{ij}$.
Denote by $Rev^A(\hat{\cI}_i,\beta_i)$ the
optimal $\beta_i$-adjusted revenue
for $\hat{\cI}_i$: that is,
the superior over the $\beta_i$-adjusted revenue of all
single-player DST Bayesian mechanisms.
Note that if a mechanism is $\beta_i$-exclusive, then its $\beta_i$-adjusted revenue is exactly its revenue.
%When player $i$'s true valuation is $v_i$, the $\beta_i$-adjusted revenue generated by the

%Where does the vector $\beta$ come from?
%Following \cite{yao2015n},
Given a Bayesian instance $\hat{\cI} = (N, M, \cD)$,
for each player $i$ and valuation subprofile $v_{-i}\sim \cD_{-i}$,
let $\beta_i(v_{-i}) = (\beta_{ij}(v_{-i}))_{j\in M}$ be such that $\beta_{ij}(v_{-i})= \max_{i'\neq i}v_{i'j}$
for each item~$j$. Note that we are slightly abusing notations here: each $\beta_{ij}$ is now a function rather than a value.
%The {\em optimal $\beta$-exclusive mechanism} is denoted by $BGR$.
%After seeing the valuation profile $v$,
%it uses the optimal $\beta_i(v_{-i})$-exclusive mechanism for $\hat{\cI}_i$
%to sell to each player $i$.
%Thus
%$$\bE_\cD Rev(BGR(\hat{\cI})) = \sum_i \bE_{v_{-i}\sim \cD_{-i}} \bE_{v_i\sim\cD_i} Rev^X(\hat{\cI_i}, \beta_i(v_{-i})).$$
The {\em optimal $\beta$-adjusted revenue} for~$\hat{\cI}$
is
$$\bE_{\cD} Rev^A(\hat{\cI}, \beta) = \bE_{v\sim \cD} \sum_i Rev^A(\hat{\cI}_i, \beta_i(v_{-i}))
= \sum_i \bE_{v_{-i}\sim \cD_{-i}} \bE_{v_i\sim \cD_i} Rev^A(\hat{\cI}_i,\beta_i(v_{-i})).$$
%obtained by applying the mechanism with the optimal $\beta_i(v_{-i})$-adjusted revenue for $\hat{\cI}_i$
%to sell to each player $i$, after seeing the valuation profile $v$.
When $v_{-i}$ is clear from the context, we may simply write $\beta_i$ and $\beta_{ij}$.

% be the
%for $\hat{\cI}$.

%
%By Theorem 3 of \cite{yao2015n}, in any Bayesian instance,
%letting $\beta_{ij}= \max_{i'\neq i}v_{i'j}$
%for each player $i$ and item $j$ given the true valuation profile $v$,
%%given a valuation profile~$v$ and
%%define $B(v_{-i})$ be the value vector of highest value on each item.
% $Bund$ is
% %$\beta$-exclusive %on each sub-instance $\hat{\cI}_i$
%an $8.5$-approximation to
%the optimal $\beta$-exclusive mechanism, denoted by $BGR$,
%uses the optimal $\beta_i(v_{-i})$-exclusive mechanism
%for each player $i$, after seeing the valuation profile $v$.
%Moreover, by Theorem 5.2 of \cite{yao2015n}, the revenue of $BGR$ is an $8$-approximation to
%%the sum of
%the optimal $\beta$-adjusted revenue.

Because we also consider the projected Bayesian instance $\cI' = (N, M, \cD')$,
let $v'$ be $v$ {\em projected on} the knowledge graph $G$: that is,
$v'_{ij} = v_{ij}$ if there exists
a player $i'$ with $(i',i)\in G_j$, and
$v'_{ij} = 0$ otherwise.
As $v$ is distributed according to $\cD$,
$v'$ is distributed according to $\cD'$.
Thus we sometimes directly sample $v'\sim \cD'$ rather than sample~$v$ first and then map it to $v'$.
Let $\beta'_{ij}(v'_{-i}) =  \max_{i'\neq i}v'_{i'j}$ for each player $i$ and item $j$.
The optimal $\beta'$-adjusted revenue for $\cI'$ is defined respectively:
$$\bE_{\cD'}Rev^A(\cI',\beta')= \bE_{v'\sim \cD'} \sum_i Rev^A(\cI'_i, \beta'_i(v'_{-i})) = \sum_i \bE_{v'_{-i}\sim \cD'_{-i}} \bE_{v'_i\sim \cD'_i} Rev^A(\cI'_i,\beta'_i(v'_{-i})).$$

It is important to emphasize that,
given $v_{-i}$ and $\beta_i(v_{-i})$,
the optimal $\beta_i$-exclusive revenue and
the optimal $\beta_i$-adjusted revenue
are well defined for any Bayesian instance for player $i$, whether it is $\hat{\cI}_i$
or $\cI'_i$.
In particular, we will consider
$Rev^X(\cI'_i, \beta_i)$ and $Rev^A(\cI'_i,\beta_i)$,
which are on the hybrid of $\hat{\cI_i}$ and~$\cI'_i$.
The optimal $\beta$-adjusted revenue on the hybrid of $\hat{\cI}$ and $\cI'$ are
similarly defined:
%$$\bE_\cD Rev(BGR(\cI', \beta)) = \sum_i \bE_{v_{-i}\sim \cD_{-i}} \bE_{v'_i\sim\cD'_i} Rev^X(\cI'_i, \beta_i(v_{-i}));$$
$$\bE_{\cD} Rev^A(\cI', \beta) = \sum_i \bE_{v_{-i}\sim \cD_{-i}} \bE_{v'_i\sim \cD'_i} Rev^A(\cI'_i,\beta_i(v_{-i})).$$
To highlight that in the inner expectation, player $i$'s value is $v'_i$ even if
it is obtained by first sampling $v_i\sim\cD_i$ and then projecting on $G$,
we may also write it as
%$\bE_{v_i\sim \cD_i} Rev^X(\cI'_i, \beta_i; v'_i)$ and
$\bE_{v_i\sim \cD_i} Rev^A(\cI'_i, \beta_i; v'_i)$ and
$$\bE_{\cD} Rev^A(\cI', \beta) = \sum_i \bE_{v_{-i}\sim \cD_{-i}} \bE_{v_i\sim \cD_i} Rev^A(\cI'_i,\beta_i(v_{-i}); v'_i)
= \mathop\bE\limits_{v \sim \cD} \sum_i Rev^A(\cI'_i, \beta_i(v_{-i}); v'_i).$$

Finally, denote by $IM$ the {\em individual Myerson} mechanism, which sells each item separately using Myerson's mechanism \cite{myerson1981optimal}.
The revenue of $IM$ under the projected Bayesian instance $\cI'$, denoted by $IM(\cI')$, is thus the optimal revenue by selling each item separately.
%Since the total reward given by the scoring rule is at most $\epsilon$,
%we will ignore it and
%treat $Rev(\cM_{IEA}(\cI))$ as the revenue without giving out the rewards.

\medskip
Having defined the notions and notations needed in our proof, we prove
Theorem \ref{thm:newbvcg} via the following two technical lemmas.
For each lemma,
note that
on the left-hand side the values are drawn from the original Bayesian instance, and on the right-hand side the values are drawn from the projected instance.

\begin{lemma}\label{lem:IEA:A}
$\mathop\bE\limits_{v\sim \cD} Rev(\cM_{IEA}(\cI))
\geq \frac{1}{68} \mathop\bE\limits_{v'\sim \cD'}Rev^A(\cI',\beta').$
\end{lemma}

\begin{lemma}\label{lem:IEA:B}
$\mathop\bE\limits_{v\sim \cD} Rev(\cM_{IEA}(\cI))
\geq \frac{1}{2} \mathop\bE\limits_{v'\sim \cD'} IM(\cI').$
\end{lemma}

\paragraph*{Theorem \ref{thm:newbvcg}.} (restated) {\em
Mechanism $\cM_{IEA}$ for additive auctions  is 2-DST  and,
for any instances $\hat{\cI} = (N, M, \cD)$ and
$\cI = (N, M, \cD, G)$,
% and $\cI' = (N, M, \cD')$ with $\cD'$ being $\cD$ projected on $G$,
$\bE_{v\sim \cD} Rev(\cM_{IEA}(\cI)) \geq \frac{OPT_K(\cI)}{70}$.
%$\cM_{IEA}$  is a $70$-approximation to $OPT_K(\cI)$.
%%\sloppy
%Mechanism $\cM_{IEA}$ is 2-DST for additive auctions and,
%for any instances $\hat{\cI} = (N, M, \cD)$,
%$\cI = (N, M, \cD, G)$ and $\cI' = (N, M, \cD')$ with $\cD'$ being $\cD$ projected on $G$,
%$\cM_{IEA}$  is a $70$-approximation to $OPT_K$.
}

\medskip

\begin{proof}
%[Proof of Theorem \ref{thm:newbvcg}]
By Theorem 8.1 of \cite{yao2015n},
$\bE_{v'\sim \cD'}Rev^{A}(\cI', \beta')+ \mathop\bE\limits_{v'\sim \cD'} IM(\cI') \geq OPT(\cI')$.
Combining this inequality with Lemmas \ref{lem:IEA:A} and \ref{lem:IEA:B},
we have
$$\mathop\bE\limits_{v\sim \cD} Rev(\cM_{IEA}(\cI)) \geq\frac{OPT(\cI')}{70}  = \frac{OPT_K(\cI)}{70},$$
and Theorem \ref{thm:newbvcg} holds.
\end{proof}

Below we prove the two lemmas.
%
%\begin{lemma}\label{lem:M'IEA}
%$\bE_{v\sim \cD} Rev(\cM_{IEA}(\cI)) \geq \frac{1}{70}OPT(\cI') - \epsilon$.
%\end{lemma}
\begin{proof}[Proof of Lemma \ref{lem:IEA:A}]

For each player $i$, the set $M_i^1$ of items for which $i$'s distributions are reported
and the set $M_i^2$ of items for which $i$'s distributions are not reported
are uniquely determined by the knowledge graph $G$ in $\cI$.
Let $\cI'_{i, M_i^1} = (\{i\}, M_i^1, \cD'_{i, M_i^1})$ be the single-player Bayesian instance obtained by
projecting~$\cI'$ on $i$ and $M_i^1$, with
$\cD'_{i, M_i^1} = \times_{j\in M_i^1} \cD'_{ij}$.
%In the discussion below, we will frequently use
%the projection of a Bayesian instance $\tilde{\cI}$ on a player set $N'$ and an item set $M'$,
%and we will write $\tilde{\cI}_{N', M'}$ without further explanation.
%By the construction of $\cM_{IEA}$, w
We have
\begin{eqnarray}\label{eq:bvcgi_0}
& & \bE_{v\sim \cD} Rev(\cM_{IEA}(\cI)) \nonumber  \\
&=& \bE_{v\sim \cD}\sum_i\left(Rev\left(Bund_i(\cD'_{i}, v_{i, M_i^1}, \beta_i(v_{-i}))\right)
+ \sum_{j\in M_i^2 \cap M_i}\max_{i'\neq i}v_{i'j}\right) \nonumber  \\
&=& \bE_{v\sim \cD}\sum_i\left(Rev\left(Bund_i(\cD'_{i, M_i^1}, v_{i, M_i^1}, \beta_{i, M_i^1})\right)
+ \sum_{j\in M_i^2 \cap M_i}\max_{i'\neq i}v_{i'j}\right),
% \nonumber  \\
%
%&=& \bE_{v\sim \cD}\sum_i\left(Rev\left(Bund_i(\cD'_{i, M_i^1}, v'_{i, M_i^1}, \beta_{i, M_i^1})\right)
%+ \sum_{j\in M_i^2 \cap M_i}\max_{i'\neq i}v_{i'j}\right),
%
%&=& \bE_{v\sim \cD}\sum_i Rev_i(\cM_{IEA}(\cI)) \nonumber \\
%
%& = & \bE_{v\sim \cD}\sum_i\left(Rev\left(Bund_i(\cD_{i, M_i^1}, v_{i, M_i^1}, \beta_i(v_{-i, M_i^1}))\right)
%+ \sum_{j\in M_i^2 \cap M_i}\max_{i'\neq i}v_{i'j}\right),
%
%& = & \bE_{v\sim \cD}\sum_i\left(Rev\left(Bund_i(\hat{\cI}_{M_i^1})\right)
%+ \sum_{j\in M_i^2 \cap M_i}\max_{i'\neq i}v_{i'j}\right),
\end{eqnarray}
where
%$Rev_i$ is the revenue generated from player $i$
$Bund_i$ is $Bund$ applied to player $i$ in Steps \ref{step:M'IEBVCG8} and \ref{step:M'IEBVCG9}.
The first equality is by the construction of $\cM_{IEA}$.
%Since $\beta_i$ depends on $v_{-i}$,
%we explicitly include the latter in the input of $Bund_i$.
%Indeed,
%the input given to $Bund_i$ is
%a hybrid of $\hat{\cI}$ and $\cI'$.
The second equality holds because in the execution of $Bund_i$
%although the set of items considered by $Bund_i$ when computing the entry fee and reserve prices for $i$
%is $M$,
the items in $M_i^2$ do not affect anything, as player $i$'s values for them are constantly 0
 in $\cD'_i$.
%Since $v'_{ij} = v_{ij}$ for any $j\in M_i^1$,
%Since $\cD'_{i, M_i^1} = \cD_{i, M_i^1}$,
% % and $\cI'_{M_i^1} = \hat{\cI}_{M_i^1}$,
%the third equality holds.
%and is solely from
%the Bayesian instance $\hat{\cI}_{M_i^1}$.
%,  thus last equality holds by definition.

By Theorem 6.1 of \cite{yao2015n}, for any single-player Bayesian instance for a player $i$
and any $\beta_i$,
%letting $\beta_{ij}= \max_{i'\neq i}v_{i'j}$
%for each player $i$ and item $j$ given the true valuation profile $v$,
%given a valuation profile~$v$ and
%define $B(v_{-i})$ be the value vector of highest value on each item.
 $Bund_i$ is
 %$\beta$-exclusive %on each sub-instance $\hat{\cI}_i$
an $8.5$-approximation to
the optimal $\beta_i$-exclusive revenue for this instance.
%, denoted by $BGR$,
%which uses the optimal $\beta_i$-exclusive mechanism
%for each player $i$.
Moreover, by Theorem~1 of \cite{yao2015n},
the optimal $\beta_i$-exclusive revenue
is an $8$-approximation to
%the sum of
the optimal $\beta_i$-adjusted revenue for the same instance.
Accordingly, we have

\begin{eqnarray}\label{eq:reva_0}
\hspace{30pt}
& & \mathop\bE\limits_{v\sim \cD}\sum_i Rev(Bund_i(\cD'_{i, M_i^1}, v_{i, M_i^1}, \beta_{i, M_i^1}))  \nonumber \\
%
%& = & \mathop\bE\limits_{v\sim \cD}\sum_i Rev(Bund_i(\cD'_{i, M_i^1}, v'_{i, M_i^1}, \beta_{i, M_i^1}))  \nonumber \\
%
& = &  \sum_i \mathop\bE\limits_{v_{-i}\sim \cD_{-i}}\
\mathop\bE\limits_{v_i\sim \cD_i} Rev(Bund_i(\cD'_{i, M_i^1}, v_{i, M_i^1}, \beta_{i, M_i^1})) \nonumber\\
& = & \sum_i  \mathop\bE\limits_{v_{-i}\sim \cD_{-i}}\
  \mathop\bE\limits_{v'_i\sim \cD'_i}
  Rev(Bund_i(\cD'_{i, M_i^1}, v'_{i, M_i^1}, \beta_{i, M_i^1})) \nonumber \\
& = & \sum_i  \mathop\bE\limits_{v_{-i}\sim \cD_{-i}}\
  \mathop\bE\limits_{v'_i\sim \cD'_i}
  Rev(Bund_i(\cI'_{i, M_i^1}, \beta_{i, M_i^1})) \nonumber \\
%
%& = & \sum_i  \mathop\bE\limits_{v_{-i}\sim \cD_{-i}}\
%  \mathop\bE\limits_{v'_i\sim \cD'_i}
%  Rev(Bund_i(\cI'_{i}, \beta_{i})) \nonumber \\
%
&\geq& \frac{1}{8.5} \sum_i \mathop\bE\limits_{v_{-i}\sim\cD_{-i}}\
\mathop\bE\limits_{v'_i \sim \cD'_i} Rev^X(\cI'_{i, M_i^1}, \beta_{i, M_i^1})\nonumber\\
&= & \frac{1}{8.5}\sum_i \mathop\bE\limits_{v_{-i}\sim\cD_{-i}}\
\mathop\bE\limits_{v'_i \sim \cD'_i} Rev^X(\cI'_i, \beta_i)\nonumber\\
&\geq& \frac{1}{8.5\times 8} \sum_i
\mathop\bE\limits_{v_{-i} \sim \cD_{-i}}\
\mathop\bE\limits_{v'_i \sim \cD'_i} Rev^A(\cI'_i,\beta_i) \nonumber \\
&=& \frac{1}{68} \sum_i
\mathop\bE\limits_{v_{-i} \sim \cD_{-i}}\
\mathop\bE\limits_{v_i \sim \cD_i} Rev^A(\cI'_i, \beta_i; v'_i) \nonumber \\
%
%&=& \frac{1}{68} \sum_i
%\mathop\bE\limits_{v_{-i} \sim \cD_{-i}}\
%\mathop\bE\limits_{\substack{{v_i \sim \cD_i,} \\ {v'_{i, M_i^1} = v_{i, M_i^1},} \\ {v'_{i, M_i^2} = 0}}} Rev^A(\cI'_i, \beta_i; v'_i) \nonumber \\
%
%& = & \frac{1}{68} \sum_i \mathop\bE\limits_{v \sim \cD} Rev^A(\cI'_i, \beta_i; v'_i)
&=&  \frac{1}{68} \mathop\bE\limits_{v \sim \cD} \sum_i Rev^A(\cI'_i, \beta_i; v'_i) = \frac{1}{68} \bE_{v\sim \cD} Rev^A(\cI', \beta).
\end{eqnarray}
%The first equality above is because both terms are the revenue of $Bund$ applied to the single-player instance
%with player $i$ and items $M_i^1$.
%, where the reserve prices for $i$ are determined by $v_{-i, M_i^1}$.
The second equality above is because
$\cD'_{i}$ and $\cD_{i}$ are the same when only $M_i^1$ is concerned, and $v'_{ij} = v_{ij}$ for all $j\in M_i^1$.
%---indeed, $M_i^1$ is the set of items for which player $i$'s distributions are known by others.
%The fourth equality is again because player $i$'s values for items in $M_i^2$ are constantly 0 according to $\cD'_i$.
The first inequality is by the relation between
$Bund_i$ and $Rev^X$.
The next equality is because $i$'s values for items in $M_i^2$ are always 0 under $\cI'_i$,
thus even when these items are available, a $\beta$-exclusive mechanism does not sell them to $i$ anyway.
The second inequality is by the relation between
$Rev^X$ and the $Rev^A$.
The next equality is because, as mentioned before,
when sampling $v_i$ from $\cD_i$ and then projecting it on $G$,
the resulting $v'_i$ is distributed according to $\cD'_i$.
%all the values for $M_i^2$ to be 0 and
%under instance $\cI'$,
%$v_i$ is first mapped to $v_i'$ and then
%give the resulting $v'_i$ to the mechanism,
%the distribution of $v'_i$ is
%exactly
%leading to the same distribution for $v_i'$ as if it were sampled from
%$\cD_i'$, as required by the instance $\cI'_i$.
%Indeed, we write $Rev^A(\cI'_i, \beta_i; v'_i)$ to emphasize that, although the values are sampled as $v_i$ from $\cD_i$, the optimal $\beta$-adjusted revenue is for instance $\cI'_i$, thus $v_i$ has been mapped to $v'_i$.

%
%\begin{eqnarray}
%&&\bE_{v\sim \cD}\sum_iRev(\cM'_{i,IEBVCG}(\cI_i))\nonumber\\
%&=&\bE_{v\sim \cD}\sum_i(Rev(\cM_{i,IEBVCG}(\cI^*_i))
%+ \sum_{j\in \cU_i,v_{ij}~is~max}\max_{i^*\neq i}v_{i^*j}) \nonumber\\
%&\geq&\bE_{v'\sim \cD'}\sum_iRev(\cM_{i,BGR}(\cI'_i,\beta_i))
%+ \bE_{v\sim \cD}\sum_i\sum_{j\in \cU_i,v_{ij}~is~max}\max_{i^*\neq i}v_{i^*j} \nonumber\\
%&\geq&\bE_{v'\sim \cD'}\sum_i\frac{1}{8.5}Rev^A(\cI'_i,\beta_i)
%+ \bE_{v\sim \cD}\sum_i\sum_{j\in \cU_i,v_{ij}~is~max}\max_{i^*\neq i}v_{i^*j}\nonumber\\
%&\geq&\frac{1}{8.5}\left(\bE_{v'\sim \cD'}\sum_iRev^A(\cI'_i,\beta_i)
%+ \bE_{v\sim \cD}\sum_i\sum_{j\in \cU_i,v_{ij}~is~max}\max_{i^*\neq i}v_{i^*j}\right)\nonumber
%\end{eqnarray}

Now we state the key lemma in our analysis, which connects the
hybrid adjusted revenue with that for $\cI'$.
Recall that each $\beta'_i$ is defined based on~$v'_{-i}\sim \cD'_{-i}$.
 %which is drawn from  $\cD'$.

\begin{lemma}\label{lem:IEA:C}
$$\bE_{v\sim \cD} Rev^A(\cI', \beta)
%\bE_{v\sim \cD}\sum_iRev^A(\cI'_i,\beta_i; v'_i)
+ \bE_{v\sim \cD}\sum_i\sum_{j\in M_i^2 \cap M_i}\max_{i'\neq i}v_{i'j}
\geq
%\bE_{v'\sim \cD'}\sum_i Rev^A(\cI'_i,\beta_i')
\bE_{v'\sim \cD'}Rev^A(\cI',\beta').$$
\end{lemma}

Before proving Lemma \ref{lem:IEA:C},
%we first finish the proof of Lemma \ref{lem:IEA:A} using it.
note that combining it with Equations~\ref{eq:bvcgi_0} and \ref{eq:reva_0} we have
\begin{equation}\label{eq:bvcgi}
\mathop\bE\limits_{v\sim \cD} Rev(\cM_{IEA}(\cI))
\geq
%\frac{1}{68} \mathop\bE\limits_{v'\sim \cD'}\sum_i Rev^A(\cI'_i,\beta_i')=
\frac{1}{68} \mathop\bE\limits_{v'\sim \cD'}Rev^A(\cI',\beta').
\end{equation}
%where the equality is by the definition of $Rev^A$.
Thus Lemma \ref{lem:IEA:A} holds.
\end{proof}

Next we  prove Lemma \ref{lem:IEA:C}.

\begin{proof}[Proof of Lemma \ref{lem:IEA:C}]
Arbitrarily fixing a player $i$ and $v'_{-i}\sim \cD'_{-i}$,
%$\beta'_i$,
let $\cM^*$ be the single-player DST Bayesian mechanism with the optimal $\beta'_i$-adjusted revenue for $\cI'_i$.%
\footnote{If the superior is not achieved by any mechanism, one can take a sequence of mechanisms whose $\beta'_i$-adjusted revenue approaches the superior in the limit.}
Accordingly, $\cM^*$ maximizes the following quantity:
$$\mathop\bE\limits_{v'_{-i}\sim \cD'_{-i}}\mathop\bE\limits_{v'_i\sim \cD'_i}[Rev(\cM^*(\cI'_{i})) - \sum_{j: v_{ij}'\leq \beta'_{ij}}q_{ij}v_{ij}'],$$
where $q_{ij}$ is the probability for $i$ to get item $j$ in $\cM^*$ under $v'_i$.
 By definition,
%\begin{eqnarray}\label{eqn:12}
%&&\bE_{v'\sim \cD'} Rev_i^A(\cI,\beta_i')
%=\bE_{v_{-i}'\sim D_{-i}'} \bE_{v_{i}'\sim \cD_{i}'}Rev_{i}^A(v_i',\beta_i')  \nonumber\\
%&=&\bE_{v_{-i}'\sim D_{-i}'} \bE_{v'\sim \cD'}(Rev(\cM^*(v_{i}'))
%- \sum_{j:v_{ij}'\leq \beta'_{ij}}q_{ij}v_{ij}')
\begin{equation}\label{equ:16_-1}
\mathop\bE\limits_{v'\sim \cD'} Rev^A(\cI'_i,\beta_i')
=\mathop\bE\limits_{v'\sim \cD'} [Rev(\cM^*(\cI_{i}'))
- \sum_{j:v_{ij}'\leq \beta'_{ij}}q_{ij}v_{ij}'].
\end{equation}
%\end{eqnarray}
Again because sampling $v$ from $\cD$ and projecting it on $G$
induces the same distribution for $v'$ as~$\cD'$, we have
%$$\bE_{v'\sim \cD'} Rev_i^A(\cI'_i,\beta_i') =   \bE_{v\sim \cD} Rev_i^A(\cI'_i,\beta_i')$$
% and
\begin{equation}\label{equ:16_0}
\mathop\bE\limits_{v'\sim \cD'} [Rev(\cM^*(\cI_{i}'))
- \sum_{j:v_{ij}'\leq \beta'_{ij}}q_{ij}v_{ij}']
 = \mathop\bE\limits_{v\sim \cD} [Rev(\cM^*(\cI_{i}'; v'_i))
- \sum_{j:v_{ij}'\leq \beta'_{ij}}q_{ij}v_{ij}'],
\end{equation}
where we explicitly include $v'_i$ in the input of $\cM^*$ to emphasize the projection.
%therefore
%\begin{equation}\label{eqn:12}
%%&&\bE_{v'\sim \cD'} Rev_i^A(\cI,\beta_i')
%%=\bE_{v_{-i}'\sim D_{-i}'} \bE_{v_{i}'\sim \cD_{i}'}Rev_{i}^A(v_i',\beta_i')  \nonumber\\
%%&=&\bE_{v_{-i}'\sim D_{-i}'} \bE_{v'\sim \cD'}(Rev(\cM^*(v_{i}'))
%%- \sum_{j:v_{ij}'\leq \beta'_{ij}}q_{ij}v_{ij}')
%\bE_{v\sim \cD} Rev_i^A(\cI'_i,\beta_i')  = \bE_{v\sim \cD} [Rev(\cM^*(\cI_{i}'))
%- \sum_{j:v_{ij}'\leq \beta'_{ij}}q_{ij}v_{ij}'].
%\end{equation}

%By the definition of $Rev^A(\cI'_i,\beta'_i)$, it is the optimal adjusted revenue for

Arbitrarily fix $v_{-i}$ and let $\bU_{i}=\{j | j\in M_{i}^{1}, \beta_{ij}' \neq \beta_{ij} \}$.
It is clear that $\beta'_{ij} < \beta_{ij}$ for each $j\in \bU_i$, as $v_{-i}' \leq v_{-i}$ (component-wise).
%First noticing that $\cI^*_i$ and $\cI'_i$ makes no difference for computing the revenue.
%Hence, $\bE_{v\sim \cD} Rev(\cM(\cI^*_i)) = \bE_{v'\sim \cD'} Rev(\cM(\cI'_i))$ and $Rev^A(\cI^*_i,\beta_i) = Rev^A(\cI'_i,\beta_i)$.
%Since $\beta'_{ij} = \beta_{ij}$ if $j\not\in \cU_i$ and $\beta'_{ij} \leq \beta_{ij}$ for $j\in \bU_i$,
%For each player $i$, we have the following,
For any $v_i$ and the corresponding $v'_i$, as $v'_{ij}=0$ for any $j\not\in M_i^1$, we have
\begin{eqnarray}\label{eqn:13}
&&\sum_{j:v'_{ij}\leq \beta'_{ij}}q_{ij}v'_{ij}
= \sum_{j\in M_{i}^{1}:v_{ij}'\leq \beta'_{ij}}q_{ij}v_{ij}' =\sum_{j\in M_{i}^{1}\backslash\bU_i: v_{ij}'\leq \beta'_{ij}} q_{ij}v_{ij}'
+ \sum_{j\in \bU_i: v_{ij}'\leq \beta'_{ij}} q_{ij}v_{ij}'\nonumber\\
&=&\sum_{j\in M_{i}^{1}\backslash\bU_i: v_{ij}'\leq \beta_{ij}} q_{ij}v_{ij}'
+ \sum_{j\in \bU_i: v_{ij}'\leq \beta_{ij}} q_{ij}v_{ij}'
- \sum_{j\in \bU_i: \beta'_{ij} < v_{ij}'\leq \beta_{ij}} q_{ij}v_{ij}'\nonumber\\
&=&\sum_{j\in M_{i}^{1}: v_{ij}'\leq \beta_{ij}} q_{ij}v'_{ij}
- \sum_{j\in \bU_i: \beta'_{ij} < v_{ij}'\leq \beta_{ij}} q_{ij}v'_{ij} \nonumber \\
&=&\sum_{j: v_{ij}'\leq \beta_{ij}} q_{ij}v'_{ij}
- \sum_{j\in \bU_i: \beta'_{ij} < v_{ij}'\leq \beta_{ij}} q_{ij}v'_{ij}.
\end{eqnarray}

Combining
%Equation \ref{eqn:12} and
Equations \ref{equ:16_-1}, \ref{equ:16_0}, \ref{eqn:13} and taking summation over all players, we have
\begin{eqnarray}\label{eqn:14}
& & \bE_{v'\sim \cD'}Rev^A(\cI',\beta')
= \sum_i \mathop\bE\limits_{v'\sim \cD'} Rev^A(\cI'_i,\beta_i') \nonumber \\
% =   \sum_i \bE_{v\sim \cD} Rev_i^A(\cI'_i,\beta_i')
& = & \sum_i \mathop\bE\limits_{v\sim \cD} [Rev(\cM^*(\cI_{i}'; v'_i))
- \sum_{j:v_{ij}'\leq \beta'_{ij}}q_{ij}v_{ij}'] \nonumber\\
&=& \sum_i \mathop\bE\limits_{v\sim \cD} \left( Rev(\cM^*(\cI'_i; v'_i)) - \sum_{j:v_{ij}'\leq \beta_{ij}} q_{ij}v_{ij}'
+ \sum_{j\in \bU_i: \beta'_{ij} < v_{ij}'\leq \beta_{ij}} q_{ij}v_{ij}'\right)\nonumber\\
%&\leq& \bE_{v_{-i}'\sim D_{-i}'} \bE_{v_{i}'\sim \cD_{i}'}Rev^A(\cI'_i,\beta_i)
%+ \bE_{v_{-i}'\sim D_{-i}'} \bE_{v_{i}'\sim \cD_{i}'} \sum_{j\in M_i^1\cap\bU_i,\beta'_{ij} < v_{ij}'\leq \beta_{ij}} q_{ij}v_{ij}'\nonumber\\
&\leq& \sum_i \mathop\bE\limits_{v\sim \cD} Rev^A(\cI'_i,\beta_i; v'_i)
+ \sum_i \mathop\bE\limits_{v\sim \cD} \sum_{j\in \bU_i: \beta'_{ij} < v_{ij}'\leq \beta_{ij}} q_{ij}v_{ij}'\nonumber\\
%
%&=& \bE_{v\sim \cD} Rev^A(\cI',\beta)
%+ \bE_{v\sim \cD} \sum_i\sum_{j\in \bU_i: \beta'_{ij} < v_{ij}'\leq \beta_{ij}} q_{ij}v_{ij}' \nonumber \\
%
&=& \sum_i \mathop\bE\limits_{v\sim \cD} Rev^A(\cI'_i,\beta_i; v'_i)
+ \mathop\bE\limits_{v\sim \cD} \sum_i \sum_{j\in \bU_i: \beta'_{ij} < v_{ij}\leq \beta_{ij}} q_{ij}v_{ij} \nonumber \\
&\leq& \sum_i \mathop\bE\limits_{v\sim \cD} Rev^A(\cI'_i,\beta_i; v'_i)
+ \mathop\bE\limits_{v\sim \cD}
\sum_i\sum\limits_{j\in \bU_i: \beta'_{ij} < v_{ij}\leq \beta_{ij}} v_{ij} \nonumber \\
&=& \bE_{v\sim \cD} Rev^A(\cI', \beta)
+ \mathop\bE\limits_{v\sim \cD}
\sum_i\sum\limits_{j\in \bU_i: \beta'_{ij} < v_{ij}\leq \beta_{ij}} v_{ij}.
\end{eqnarray}
The first inequality above is because the first two terms in the expectation
is exactly the $\beta$-adjusted revenue of mechanism $\cM^*$ on $\cI'_i$.
The following equality holds because $v'_{ij} = v_{ij}$ for any $j\in M_i^1$.
Finally, the second inequality is because $0\leq q_{ij}\leq 1$ for any $i, j$.

%
%The last equation in \ref{eqn:14} holds because
%$\bE_{v'\sim \cD'} Rev^A(\cI'_i,\beta_i) = \bE_{v\sim \cD} Rev^A(\cI'_i,\beta_i)$ for each player $i$ since the projected instance $\cI'_i$ remains the same for both distribution.
%Now for all player $i$ and all item $j\in M_i^1$, the distribution $\cD'_{ij}=\cD_{ij}$. Also by definition $q_{ij} \leq 1$, we can have
%\begin{eqnarray}\label{eqn:15}
%&&\bE_{v'\sim \cD'}\sum_i\sum\limits_{j\in M_i^1\cap\bU_i,\beta'_{ij} < v_{ij}'\leq \beta_{ij}} q_{ij}v'_{ij}
%= \bE_{v\sim \cD}\sum_i\sum\limits_{j\in M_i^1\cap\bU_i,\beta'_{ij} < v_{ij}\leq \beta_{ij}} q_{ij}v_{ij}\nonumber\\
%&\leq& \bE_{v\sim \cD}\sum_i\sum\limits_{j\in M_i^1\cap\bU_i,\beta'_{ij} < v_{ij}\leq \beta_{ij}} v_{ij}
%\end{eqnarray}

Arbitrarily fix a valuation profile $v$ and
consider the term $\sum\limits_i \sum\limits_{j\in \bU_i: \beta'_{ij} < v_{ij}\leq \beta_{ij}} v_{ij}$
 at the end of Equation \ref{eqn:14}.
We show that each item $j$ appears in the summation at most once.
Indeed, for each $v_{ij}$ that appears in the summation,
$j\in M_i^1$ and $\cD_{ij}$ is known in $\cI$.
As $v_{ij} > \beta'_{ij}$, player $i$ has strictly higher value for $j$
than any other player $i'$ whose distribution for $j$ is also known.
For any such player $i'$, $\beta'_{i'j} = v_{ij} > v_{i'j}$ and
$j$ is not in the set $\{j\in \bU_{i'}: \beta'_{i'j}< v_{i'j}\leq \beta_{i'j}\}$,
which implies that $v_{i'j}$ does not appear in the summation.
%Moreover, as $\beta'_{ij} < \beta_{ij}$,
%%as $v_{ij}\leq \beta_{ij}$,
%%player $i$'s value for~$j$ is at most the second highest among all players.
%%Since $\beta'_{ij}< v_{ij}$
%%\beta_{ij}$,
%any player $i''\neq i$
%who has the highest value for $j$ must have $\cD_{i''j}$ unknown.
Moreover, for any player $i''\neq i$ with $\cD_{i''j}$ unknown, $j\notin M^1_{i''j}$ and
%Accordingly, $j$ is not in $M_{i''}^1$ and
$v_{i''j}$ does not appear in the summation either.
Therefore item $j$ only appears once in the summation.
Accordingly,
\begin{equation}\label{equ:20_0}
\sum\limits_i \sum\limits_{j\in \bU_i: \beta'_{ij} < v_{ij}\leq \beta_{ij}} v_{ij}
= \sum_{j: \ \exists!\  i \mbox{\scriptsize{ s.t. }} j\in \bU_i, \ \beta'_{ij}< v_{ij} \leq \beta_{ij}} v_{ij}.
\end{equation}

%Arbitrarily fix $(i, j)$ such that $v_{ij}$ appears in the summation.
We now show that
\begin{equation}\label{equ:21_0}
\sum_{j: \ \exists!\  i \mbox{\scriptsize{ s.t. }} j\in \bU_i, \ \beta'_{ij}< v_{ij} \leq \beta_{ij}} v_{ij} \leq
\sum_i\sum_{j\in M_i^2 \cap M_i}\max_{i'\neq i}v_{i'j},
\end{equation}
where the right-hand side is exactly
the revenue generated by mechanism $\cM_{IEA}$ in Step \ref{step:M'IEBVCG10}
and also the desired term in the statement of Lemma \ref{lem:IEA:C}.
Indeed, for each $v_{ij}$ that appears in the left-hand side,
because $v_{ij}\leq \beta_{ij}$,
player $i$'s value for~$j$ is at most the second highest among all players.
Letting
$i^* = \argmax_{i'\neq i} v_{i'j}$
 with ties broken lexicographically, we have $\beta_{ij} = v_{i^*j}$
 and $v_{i^*j}$ is the highest value for $j$ among all players.
Since $\beta'_{ij}< \beta_{ij}$, $\cD_{i^*j}$ is unknown and $j\in M_{i^*}^2$.
Below we show $j\in M_{i^*}$, which then implies $j\in M_{i^*}^2 \cap M_{i^*}$ and
 the price paid by $i^*$ in Step \ref{step:M'IEBVCG10} for item $j$ is $\max_{i'\neq i^*}v_{i'j}\geq v_{ij}$.

When the distributions are generic, there are no ties in the players' values and $v_{i^*j}$ is the unique maximum value for $j$, thus $j\in M_{i^*}$.
For arbitrary distributions, problems occur when $v_{ij} = \beta_{ij}$ and $i<i^*$, which implies $j\notin M_{i^*}$.
To deal with this special case,
 consider the following tie-breaking method for the players:
%for any player $i$, item $j$ and $v_{-i}$,
%again let $i^* = \argmax_{i'\neq i}v_{i'j}$ with ties broken lexicographically.
while the value $\beta_{ij}$ is still defined to be $v_{i^*j}$,
we denote it by $\beta_{ij}^+$ if $i^*<i$ and $\beta_{ij}^-$ if $i^*>i$.
When $v_{ij} = \beta_{ij}$, we treat $v_{ij}$
as strictly smaller if facing $\beta_{ij}^+$ and strictly larger if facing $\beta_{ij}^-$.
All results proved in \cite{yao2015n} and above continue
to hold with respect to this tie-breaking method.
Now for any $v_{ij}$ that
appears in
%$\sum_i\sum\limits_{j\in \bU_i:\beta'_{ij} < v_{ij}\leq \beta_{ij}} v_{ij}$,
the summation,
either we have $v_{ij}< \beta_{ij}$ or we have $v_{ij} = \beta_{ij}$ and $i^*< i$, thus it is always the case that $j\in M_{i^*}$.

%Accordingly, for any $v_{ij}$ appearing in the summation,
%$v_{ij}\leq \max_{i'\neq i^*}v_{i'j}$ and $j\in M_{i^*}^2\cap M_{i^*}$.
%
%$v_{ij}< v_{i^*j}$ and $j\in M_{i^*}$.
%
Accordingly,
$$\sum_{j: \ \exists!\  i \mbox{\scriptsize{ s.t. }} j\in \bU_i, \ \beta'_{ij}< v_{ij} \leq \beta_{ij}} v_{ij}
\leq
\sum_{j: \ \exists!\  i^* \mbox{\scriptsize{ s.t. }} j\in M_{i^*}^2\cap M_{i^*}} \max_{i'\neq i^*}v_{i'j}
=
\sum_i\sum_{j\in M_i^2 \cap M_i}\max_{i'\neq i}v_{i'j},$$
and Equation \ref{equ:21_0} holds.
%\begin{eqnarray}\label{eqn:16}
%\sum_i\sum\limits_{j\in \bU_i:\beta'_{ij} < v_{ij}\leq \beta_{ij}} v_{ij}
%\leq \sum_i\sum_{j\in M_i^2 \cap M_i}\max_{i'\neq i}v_{i'j}.
%\end{eqnarray}
Combining Equations \ref{eqn:14}, \ref{equ:20_0} and \ref{equ:21_0}, we have
$$
\bE_{v'\sim \cD'}Rev^A(\cI',\beta')
%\sum_i \mathop\bE\limits_{v'\sim \cD'} Rev^A(\cI'_i,\beta_i')
\leq \bE_{v\sim \cD} Rev^A(\cI', \beta)
%\sum_i \mathop\bE\limits_{v\sim \cD} Rev^A(\cI'_i,\beta_i; v'_i)
+ \mathop\bE\limits_{v\sim \cD} \sum_i\sum_{j\in M_i^2 \cap M_i}\max_{i'\neq i}v_{i'j},
$$
and Lemma \ref{lem:IEA:C} holds.
\end{proof}

We finish the whole analysis by proving Lemma \ref{lem:IEA:B}.

\begin{proof}[Proof of Lemma \ref{lem:IEA:B}]
Denote by $\cM_{1LA}$ the {\em individual 1-lookahead} mechanism, which sells each item separately using the 1-lookahead mechanism \cite{ronen2001approximating} and
is a 2-approximation to Myerson's mechanism for each item.
Similar to the proof of Lemma \ref{lem:IEA:A}, the expected revenue generated by
$Bund_i$ in Steps~\ref{step:M'IEBVCG8} and \ref{step:M'IEBVCG9} is

\begin{eqnarray}\label{eq:rev1la}
& & \mathop\bE\limits_{v\sim \cD}\sum_i Rev(Bund_i(\cD'_i, v_{i, M_i^1}, \beta_i(v_{-i})))  \nonumber \\
& = & \sum_i  \mathop\bE\limits_{v_{-i}\sim \cD_{-i}}\
  \mathop\bE\limits_{v_i\sim \cD_i}
  Rev(Bund_i(\cD_{i, M_i^1}, v_{i, M_i^1}, \beta_{i, M_i^1})) \nonumber \\
&\geq& \sum_i \mathop\bE\limits_{v_{-i}\sim\cD_{-i}}\
\mathop\bE\limits_{v_i \sim \cD_i} Rev(\cM_{1LA}(\cD_{i, M_i^1}, v_{i, M_i^1}, \beta_{i, M_i^1})) \nonumber \\
&=& \sum_i \mathop\bE\limits_{v\sim \cD} Rev(\cM_{1LA}(\hat{\cI}_{i, M_i^1}, \beta_{i, M_i^1})).
\end{eqnarray}
The first equality above is again because items in $M_i^2$ does not affect
$Bund_i$ given that player $i$'s values
for them are constantly 0 according to $\cD'_i$,
and $\cD'_i$ and $\cD_i$ coincide
when only items in $M_i^1$ are concerned.
The inequality holds because given $\cD_{i, M_i^1}$ and $\beta_{i, M_i^1}$,
mechanism $Bund_i$
chooses between optimally selling to $i$ the items as a bundle (i.e., $e_i>0$) and optimally selling to $i$ each item separately
(i.e., $e_i=0$),
whichever generates higher expected revenue over $v_{i, M_i^1}\sim \cD_{i, M_i^1}$;
while $\cM_{1LA}$ is a particular mechanism
that sells each item $j\in M_i^1$ to $i$ separately based on $\cD_{ij}$ and $\beta_{ij}$.
%and the revenue generated from selling items separately for player $i$
%on item set $M_i^1$ with optimal reserve if $i$ is the highest bidder
%is indeed the revenue generated from player $i$
%by running $\cM_{1LA}$ on item set $M_i^1$.
Accordingly, letting $\cD_j=\times (\cD_{ij})_{i\in N}$ and $v_j\sim \cD_j$ for each item $j$,
we have
\begin{eqnarray}\label{eq:1la:bd_1}
%&& Rev(\cM_{IEA}(\cI))
& & \bE_{v\sim \cD} Rev(\cM_{IEA}(\cI)) \nonumber  \\
&=& \bE_{v\sim \cD}\sum_i\left(Rev\left(Bund_i(\cD'_{i}, v_{i, M_i^1}, \beta_i(v_{-i}))\right)
+ \sum_{j\in M_i^2 \cap M_i}\max_{i'\neq i}v_{i'j}\right) \nonumber  \\
&\geq& \sum_i \mathop\bE\limits_{v\sim \cD}
\left( Rev(\cM_{1LA}(\hat{\cI}_{i, M_i^1}, \beta_{i, M_i^1}))
%Rev(\cM_{1LA}(\hat{\cI}_{i, M_i^1}, v_{i, M_i^1}, v_{-i, M_i^1}))
+ \sum_{j\in M_i^2 \cap M_i}\max_{i'\neq i}v_{i'j} \right) \nonumber\\
&=& \sum_{j\in M} \sum_i \mathop\bE\limits_{v_j\sim \cD_j}
\left( {\bf I}_{j\in M_i^1} \cdot Rev(\cM_{1LA}(\hat{\cI}_{i, j}, \beta_{ij}))
+ {\bf I}_{j\in M_i^2 \cap M_i} \cdot \max_{i'\neq i}v_{i'j} \right).
%&\geq& \sum_j \sum_i \mathop\bE\limits_{v_j\sim \cD_j}
%{\bf I}_{j\in M_i^1} \cdot Rev(\cM_{1LA}(\cI'_{i, j}, v'_{ij}, v'_{-i, j})).
\end{eqnarray}

Arbitrarily fixing an item $j$,
% since the $\cM_{1LA}$ runs independently on each item,
we show that
%for each coupled valuation profile $(v_i, v_{-i})$ and $(v'_i, v'_{-i})$,
%\begin{eqnarray*}
%&& \sum_i \mathop\bE\limits_{v\sim \cD}
%\left( Rev(\cM_{1LA}(\hat{\cI}_{i, M_i^1}, v_{i, M_i^1}, v_{-i, M_i^1}))
%+ \sum_{j\in M_i^2 \cap M_i}\max_{i^*\neq i}v_{i^*j} \right)\\
%&\geq& \sum_i \mathop\bE\limits_{v\sim \cD}
%Rev(\cM_{1LA}(\cI'_{i, M_i^1}, v'_{i, M_i^1}, v'_{-i, M_i^1})).
%\end{eqnarray*}
%
%Note that
%Therefore, it is sufficient to show that for all item $j$,
\begin{eqnarray}\label{eq:1la:bd}
%&& Rev(\cM_{IEA}(\cI))
%& & \bE_{v\sim \cD} Rev(\cM_{IEA}(\cI)) \nonumber  \\
%%
%&=& \bE_{v\sim \cD}\sum_i\left(Rev\left(Bund_i(\cD'_{i}, v_{i, M_i^1}, \beta_i(v_{-i}))\right)
%+ \sum_{j\in M_i^2 \cap M_i}\max_{i'\neq i}v_{i'j}\right) \nonumber  \\
%&\geq& \sum_i \mathop\bE\limits_{v\sim \cD}
%\left( Rev(\cM_{1LA, i}(\hat{\cI}_{i, M_i^1}, \beta_{i, M_i^1}))
%%Rev(\cM_{1LA}(\hat{\cI}_{i, M_i^1}, v_{i, M_i^1}, v_{-i, M_i^1}))
%+ \sum_{j\in M_i^2 \cap M_i}\max_{i^*\neq i}v_{i^*j} \right) \nonumber\\
%&=& \sum_j \sum_i \mathop\bE\limits_{v_j\sim \cD_j}
%\left( {\bf I}_{j\in M_i^1} \cdot Rev(\cM_{1LA}(\hat{\cI}_{i, j}, v_{ij}, v_{-i, j}))
%+ {\bf I}_{j\in M_i^2 \cap M_i} \cdot \max_{i^*\neq i}v_{i^*j} \right) \nonumber\\
& & \sum_i \mathop\bE\limits_{v_j\sim \cD_j}
\left( {\bf I}_{j\in M_i^1} \cdot Rev(\cM_{1LA}(\hat{\cI}_{i, j}, \beta_{ij}))
+ {\bf I}_{j\in M_i^2 \cap M_i} \cdot \max_{i'\neq i}v_{i'j} \right) \nonumber \\
&\geq& \sum_i \mathop\bE\limits_{v'_j\sim \cD'_j}
{\bf I}_{j\in M_i^1} \cdot Rev(\cM_{1LA}(\cI'_{i, j}, \beta'_{ij})).
\end{eqnarray}
%which together with Equation \ref{eq:1la:bd_1} implies
%\begin{eqnarray*}
%%&& Rev(\cM_{IEA}(\cI))
%& & \bE_{v\sim \cD} Rev(\cM_{IEA}(\cI)) \geq
%\sum_{j\in M} \sum_i \mathop\bE\limits_{v'_j\sim \cD'_j}
%{\bf I}_{j\in M_i^1} \cdot Rev(\cM_{1LA}(\cI'_{i, j}, \beta'_{ij})) = ,
%
%Lemma \ref{lem:IEA:B}.

To do so, by definition,
if $j\in M_i^1$ for a player $i$ then in the Bayesian instance $\cI'_{i,j}$ and given the reserve price $\beta'_{ij}$,
the
1-lookahead mechanism tries to sell $j$ to $i$
at price%
\footnote{Strictly speaking, under the Bayesian instance $\cI'$,
the individual 1-lookahead mechanism for item $j$ works as follows.
Given $v'_j\sim \cD_j$,
it finds the highest bidder $i$ for $j$ with ties broken lexicographically, as well as the second highest bid which is exactly $\beta'_{ij}$.
It then tries to sell $j$ to $i$ at price $r'_{ij}$.
For generic distributions,
there are no ties in $v'_j$ and
the mechanism can be run on each instance $\cI'_{ij}$ and $\beta'_{ij}$ separately,
without ever selling $j$ to more than one players.
For arbitrary distributions, this can be done by introducing proper tie-breaking rules.
}

$$r'_{ij} = \argmax_{x} x \Pr_{v'_{ij}\sim\cD'_{ij}} (v'_{ij} \geq x | v'_{ij} \geq \beta'_{ij}).$$
%First we have $v_{ij} = v'_{ij}, \cD_{ij} = \cD'_{ij}$ for $j\in M_i^1$,
%and given $v_{-i,j}$ and $\beta_{ij} = \max_{i'\neq i} v_{i'j}$, the 1-Lookahead mechanism for player $i$ on item $j$ is to sell the item to him if he is the highest bidder with price
%$r_{ij}(v_{-i,j}) = \argmax_{x} x \Pr_{v_{ij}\sim\cD_{ij}} (v_{ij} \geq x | v_{ij} \geq \beta_{ij})$.
%\footnote{Similarly we define $\beta'_{ij} = \max_{i'\neq i} v'_{i'j}$ and
%$r_{ij}(v'_{-i,j}) = \argmax_{x} x \Pr_{v_{ij}\sim\cD_{ij}} (v_{ij} \geq x | v_{ij} \geq \beta'_{ij})$.}
Thus
\begin{eqnarray}\label{equ:23_b}
& & \sum_i \mathop\bE\limits_{v'_j\sim \cD'_j}
{\bf I}_{j\in M_i^1} \cdot Rev(\cM_{1LA}(\cI'_{i, j}, \beta'_{ij})) \nonumber \\
& = & \sum_i \mathop\bE\limits_{v'_{-i,j} \sim \cD'_{-i,j}} \mathop\bE\limits_{v'_{ij} \sim \cD'_{ij}}
{\bf I}_{j\in M_i^1} \cdot  {\bf I}_{v'_{ij} \geq r'_{ij}} \cdot r'_{ij} \nonumber  \\
& = & \sum_i \mathop\bE\limits_{v_{-i,j} \sim \cD_{-i,j}} \mathop\bE\limits_{v_{ij} \sim \cD_{ij}}
{\bf I}_{j\in M_i^1} \cdot  {\bf I}_{v'_{ij} \geq r'_{ij}(v'_{-i, j})} \cdot r'_{ij}(v'_{-i, j}),
\end{eqnarray}
where the second equality is again by sampling $v_j$ from $\cD_j$ and projecting on $G$ to get $v'_j$.
We write $r'_{ij}$ as $r'_{ij}(v'_{-i,j})$ to highlight this fact.
Note that given $v'_{-i,j}$, $r'_{ij}(v'_{-i,j})$ depends on the distribution $\cD'_{ij}$ but not
on any concrete $v'_{ij}$.
Also note that $r'_{ij}(v'_{-i,j})\geq \beta'_{ij}$.
Next, we divide the last term in Equation~\ref{equ:23_b} into two parts,
%above term into two cases.
depending on whether $r'_{ij}(v'_{-i,j}) < \beta_{ij}$ or not.
That is,
\begin{eqnarray}\label{eq:1la:lowr_1}
&& \sum_i \mathop\bE\limits_{v_{-i,j} \sim \cD_{-i,j}} \mathop\bE\limits_{v_{ij} \sim \cD_{ij}}
{\bf I}_{j\in M_i^1} \cdot  {\bf I}_{v'_{ij} \geq r'_{ij}(v'_{-i,j})} \cdot r'_{ij}(v'_{-i,j}) \nonumber\\
&=& \sum_i \mathop\bE\limits_{v_{-i,j} \sim \cD_{-i,j}} \mathop\bE\limits_{v_{ij} \sim \cD_{ij}}
{\bf I}_{j\in M_i^1} \cdot {\bf I}_{v'_{ij} \geq r'_{ij}(v'_{-i,j})} \cdot
\left(
{\bf I}_{r'_{ij}(v'_{-i,j})<\beta_{ij}}
+  {\bf I}_{r'_{ij}(v'_{-i,j})\geq \beta_{ij}}
\right)\cdot r'_{ij}(v'_{-i,j}).\nonumber\\
\end{eqnarray}
For the $r'_{ij}(v'_{-i,j})<\beta_{ij}$ part in Equation \ref{eq:1la:lowr_1}, we have

\begin{eqnarray}\label{eq:1la:lowr}
& & \sum_i \mathop\bE\limits_{v_{-i,j} \sim \cD_{-i,j}} \mathop\bE\limits_{v_{ij} \sim \cD_{ij}}
{\bf I}_{j\in M_i^1} \cdot {\bf I}_{v'_{ij} \geq r'_{ij}(v'_{-i,j})} \cdot
{\bf I}_{r'_{ij}(v'_{-i,j})<\beta_{ij}} \cdot r'_{ij}(v'_{-i,j}) \nonumber \\
%
%&=& \sum_i \mathop\bE\limits_{v_{-i,j} \sim \cD_{-i,j}} \mathop\bE\limits_{v_{ij} \sim \cD_{ij}}
%{\bf I}_{j\in M_i^1} \cdot  {\bf I}_{v_{ij} \geq r_{ij}(v'_{-i,j})} \cdot r_{ij}(v'_{-i,j}) \nonumber\\
%
&=& \sum_i \mathop\bE\limits_{v_{-i,j} \sim \cD_{-i,j}}
\mathop\bE\limits_{v_{ij} \sim \cD_{ij}}
{\bf I}_{j\in M_i^1} \cdot
\left(
{\bf I}_{v'_{ij} \geq \beta_{ij}> r'_{ij}(v'_{-i,j})} +
{\bf I}_{\beta_{ij} > v'_{ij} \geq r'_{ij}(v'_{-i,j})}
\right) \cdot r'_{ij}(v'_{-i,j}) \nonumber\\
&=& \sum_i \mathop\bE\limits_{v_{-i,j} \sim \cD_{-i,j}}
\mathop\bE\limits_{v_{ij} \sim \cD_{ij}}
{\bf I}_{j\in M_i^1} \cdot
\left(
{\bf I}_{v_{ij} \geq \beta_{ij}> r'_{ij}(v'_{-i,j})}
+
{\bf I}_{\beta_{ij} > v_{ij} \geq r'_{ij}(v'_{-i,j})}
\right) \cdot r'_{ij}(v'_{-i,j}) \nonumber\\
&\leq& \sum_i \mathop\bE\limits_{v_{-i,j} \sim \cD_{-i,j}}
\mathop\bE\limits_{v_{ij} \sim \cD_{ij}}
{\bf I}_{j\in M_i^1} \cdot
\left(
{\bf I}_{v_{ij} \geq \beta_{ij}>r'_{ij}(v'_{-i,j})}
\cdot \beta_{ij}
+
{\bf I}_{\beta_{ij} > v_{ij} \geq r'_{ij}(v'_{-i,j})} \cdot v_{ij}
\right) \nonumber\\
&=& \sum_i \mathop\bE\limits_{v_{-i,j} \sim \cD_{-i,j}}
\mathop\bE\limits_{v_{ij} \sim \cD_{ij}}
{\bf I}_{j\in M_i^1} \cdot
{\bf I}_{\beta_{ij}>r'_{ij}(v'_{-i,j})} \cdot {\bf I}_{v_{ij} \geq \beta_{ij}}
\cdot \beta_{ij} \nonumber \\
& & \quad
+ \sum_i \mathop\bE\limits_{v_{-i,j} \sim \cD_{-i,j}} \mathop\bE\limits_{v_{ij} \sim \cD_{ij}}
{\bf I}_{j\in M_i^1} \cdot {\bf I}_{\beta_{ij} > v_{ij} \geq r'_{ij}(v'_{-i,j})} \cdot v_{ij}.
\end{eqnarray}
The first equality above is by distinguishing whether $\beta_{ij}\leq v'_{ij}$ or not.
The second equality is because $v_{ij} = v'_{ij}$ whenever $j\in M_i^1$.
The inequality is because $r'_{ij}(v'_{-i,j})< \beta_{ij}$ following the
indicator in the first term
and $r'_{ij}(v'_{-i,j})\leq v_{ij}$ following
the indicator in the second term.
Finally, the last equality is because ${\bf I}_{v_{ij} \geq \beta_{ij}>r'_{ij}(v'_{-i,j})}  = {\bf I}_{\beta_{ij}>r'_{ij}(v'_{-i,j})} \cdot {\bf I}_{v_{ij} \geq \beta_{ij}}$
in the first term.

For the first term in Equation \ref{eq:1la:lowr}, because the
indicators ${\bf I}_{j\in M_i^1}$ and ${\bf I}_{\beta_{ij}>r'_{ij}(v'_{-i,j})}$
does not depend on $v_{ij}$,
we have
\begin{eqnarray*}
\hspace{30pt}
& & \sum_i \mathop\bE\limits_{v_{-i,j} \sim \cD_{-i,j}}
\mathop\bE\limits_{v_{ij} \sim \cD_{ij}}
{\bf I}_{j\in M_i^1} \cdot
{\bf I}_{\beta_{ij}>r'_{ij}(v'_{-i,j})} \cdot {\bf I}_{v_{ij} \geq \beta_{ij}}
\cdot \beta_{ij} \\
&=& \sum_i \mathop\bE\limits_{v_{-i,j} \sim \cD_{-i,j}}
{\bf I}_{j\in M_i^1} \cdot
{\bf I}_{\beta_{ij}>r'_{ij}(v'_{-i,j})}
\left(
\mathop\bE\limits_{v_{ij} \sim \cD_{ij}}
{\bf I}_{v_{ij} \geq \beta_{ij}}
\cdot \beta_{ij}\right) \\
&\leq& \sum_i \mathop\bE\limits_{v_{-i,j} \sim \cD_{-i,j}}
{\bf I}_{j\in M_i^1} \cdot {\bf I}_{\beta_{ij}>r'_{ij}(v'_{-i,j})}
\left(
\mathop\bE\limits_{v_{ij} \sim \cD_{ij}}
Rev(\cM_{1LA}(\hat{\cI}_{i, j}, \beta_{ij}))
\right),
\end{eqnarray*}
where the inequality is because on the Bayesian instance $\hat{\cI}_{i,j}$ and given
the reserve price $\beta_{ij}$, mechanism $\cM_{1LA}$
chooses the optimal price $r_{ij}$ to maximize the expected revenue
and
is no worse than simply setting $r_{ij} = \beta_{ij}$.

For the second term in Equation \ref{eq:1la:lowr},
for any valuation profile $v$, if there exists a player $i$ is such that the two indicators
both equal to 1, then
we have $\beta_{ij} > v_{ij} \geq \beta'_{ij}$,
because $\beta_{ij} > v_{ij} \geq r'_{ij}(v'_{-i,j})$
and $r'_{ij}(v'_{-i,j}) \geq \beta'_{ij}$.
Thus the (lexicographically first) highest bidder $i^*$ for $j$ in $v_j$ has his distribution unknown,
and $j\in M^2_{i^*}\cap M_{i^*}$;
and player $i$ is the (lexicographically first) highest bidder for $j$ in $v'_j$, which is unique.
Accordingly,
\begin{eqnarray*}
\hspace{30pt}
& & \sum_i \mathop\bE\limits_{v_{-i,j} \sim \cD_{-i,j}} \mathop\bE\limits_{v_{ij} \sim \cD_{ij}}
{\bf I}_{j\in M_i^1} \cdot {\bf I}_{\beta_{ij} > v_{ij} \geq r'_{ij}(v'_{-i,j})} \cdot v_{ij}\\
&=& \mathop\bE\limits_{v_{j} \sim \cD_{j}}
\sum_i
{\bf I}_{j\in M_i^1} \cdot {\bf I}_{\beta_{ij} > v_{ij} \geq r'_{ij}(v'_{-i,j})} \cdot v_{ij} \\
&\leq & \mathop\bE\limits_{v_{j} \sim \cD_{j}}
{\bf I}_{j\in M^2_{i^*}\cap M_{i^*}} \cdot
\max_{i: j\in M_i^1} v_{ij}
\leq  \mathop\bE\limits_{v_{j} \sim \cD_{j}}
{\bf I}_{j\in M^2_{i^*}\cap M_{i^*}} \cdot
\max_{i'\neq i^*} v_{i'j}\\
& = &  \mathop\bE\limits_{v_{j} \sim \cD_{j}}
 \sum_i
{\bf I}_{j\in M^2_{i}\cap M_{i}} \cdot
\max_{i'\neq i} v_{i'j},
\end{eqnarray*}
where the last equality is because $i^*$ is the unique player such that
the indicator is 1.

Combining the above two equations with Equation \ref{eq:1la:lowr},
for the $r'_{ij}(v'_{-i,j})<\beta_{ij}$ part in Equation~\ref{eq:1la:lowr_1}
we have
\begin{eqnarray*}
& & \sum_i \mathop\bE\limits_{v_{-i,j} \sim \cD_{-i,j}} \mathop\bE\limits_{v_{ij} \sim \cD_{ij}}
{\bf I}_{j\in M_i^1} \cdot {\bf I}_{v'_{ij} \geq r'_{ij}(v'_{-i,j})} \cdot
{\bf I}_{r'_{ij}(v'_{-i,j})<\beta_{ij}} \cdot r'_{ij}(v'_{-i,j}) \nonumber \\
&\leq &
\sum_i \mathop\bE\limits_{v_{-i,j} \sim \cD_{-i,j}}
{\bf I}_{j\in M_i^1} \cdot {\bf I}_{\beta_{ij}>r'_{ij}(v'_{-i,j})}
\left(
\mathop\bE\limits_{v_{ij} \sim \cD_{ij}}
Rev(\cM_{1LA}(\hat{\cI}_{i, j}, \beta_{ij}))
\right) \nonumber \\
& &
+ \mathop\bE\limits_{v_{j} \sim \cD_{j}}
 \sum_i
{\bf I}_{j\in M^2_{i}\cap M_{i}} \cdot
\max_{i'\neq i} v_{i'j} \nonumber \\
& =& \sum_i
\mathop\bE\limits_{v_{j} \sim \cD_{j}} \left(
{\bf I}_{j\in M_i^1} \cdot {\bf I}_{\beta_{ij}>r'_{ij}(v'_{-i,j})} \cdot
Rev(\cM_{1LA}(\hat{\cI}_{i, j}, \beta_{ij}))
+
{\bf I}_{j\in M_i^2 \cap M_i} \cdot \max_{i'\neq i}v_{i'j} \right).
\end{eqnarray*}
%
%The first inequality holds since $r_{ij}(v'_{-i,j}) < \beta_{ij}$ by assumption and $v_{ij} \geq r_{ij}(v'_{-i,j})$ given the indicator is true.
%The second inequality holds because the 1-Lookahead mechanism for player $i$ on item $j$ given $v_{-i,j}$ fixed chooses the optimal reserve larger than or equal to $\beta_{ij}$.
%Also $r_{ij}(v'_{-i,j}) < \beta_{ij}$ implies that there exists a unique player $i'\neq i$
%where $j\in M_i^2$ and $v_{i'j} = \max_{i^*} v_{i^*j}$.

For the $r'_{ij}(v'_{-i,j})\geq \beta_{ij}$ part in Equation~\ref{eq:1la:lowr_1},
%$r_{ij}(v'_{-i,j}) \geq \beta_{ij}$,
we immediately have $r'_{ij}(v'_{-i,j}) = r_{ij}(v_{-i,j})$ when the indicators are 1,
because $\cD'_{ij} = \cD_{ij}$ and $v'_{ij} = v_{ij}$ whenever $j\in M_i^1$,
and $r_{ij}(v_{-i,j})$ maximizes the expected revenue over $\cD_{ij}$ conditional on
$v_{ij}\geq \beta_{ij}$.
Thus

\begin{eqnarray*}
& & \sum_i \mathop\bE\limits_{v_{-i,j} \sim \cD_{-i,j}} \mathop\bE\limits_{v_{ij} \sim \cD_{ij}}
{\bf I}_{j\in M_i^1} \cdot {\bf I}_{v'_{ij} \geq r'_{ij}(v'_{-i,j})} \cdot
{\bf I}_{r'_{ij}(v'_{-i,j})\geq\beta_{ij}} \cdot r'_{ij}(v'_{-i,j}) \nonumber \\
&=& \sum_i \mathop\bE\limits_{v_{-i,j} \sim \cD_{-i,j}} \mathop\bE\limits_{v_{ij} \sim \cD_{ij}}
{\bf I}_{j\in M_i^1} \cdot
{\bf I}_{v_{ij} \geq r_{ij}(v_{-i,j})} \cdot
{\bf I}_{r'_{ij}(v'_{-i,j})\geq\beta_{ij}}
\cdot r_{ij}(v_{-i,j}) \nonumber  \\
&=& \sum_i \mathop\bE\limits_{v_{-i,j} \sim \cD_{-i,j}}
{\bf I}_{j\in M_i^1} \cdot {\bf I}_{r'_{ij}(v'_{-i,j})\geq\beta_{ij}}
\left(
\mathop\bE\limits_{v_{ij} \sim \cD_{ij}}
{\bf I}_{v_{ij} \geq r_{ij}(v_{-i,j})} \cdot r_{ij}(v_{-i,j})
\right) \nonumber \\
&=& \sum_i \mathop\bE\limits_{v_{-i,j} \sim \cD_{-i,j}}
{\bf I}_{j\in M_i^1} \cdot {\bf I}_{r'_{ij}(v'_{-i,j})\geq\beta_{ij}}
\left(
\mathop\bE\limits_{v_{ij} \sim \cD_{ij}}
Rev(\cM_{1LA}(\hat{\cI}_{i, j}, \beta_{ij}))
\right).
\end{eqnarray*}

Combining the above two equations with Equation~\ref{eq:1la:lowr_1}
and then Equation \ref{equ:23_b},
we have

\begin{eqnarray*}
& & \sum_i \mathop\bE\limits_{v'_j\sim \cD'_j}
{\bf I}_{j\in M_i^1} \cdot Rev(\cM_{1LA}(\cI'_{i, j}, \beta'_{ij})) \nonumber \\
&\leq &
\sum_i
\mathop\bE\limits_{v_{j} \sim \cD_{j}} \left(
{\bf I}_{j\in M_i^1} \cdot {\bf I}_{\beta_{ij}>r'_{ij}(v'_{-i,j})} \cdot
Rev(\cM_{1LA}(\hat{\cI}_{i, j}, \beta_{ij}))
+
{\bf I}_{j\in M_i^2 \cap M_i} \cdot \max_{i'\neq i}v_{i'j} \right) \\
& & +
\sum_i \mathop\bE\limits_{v_{-i,j} \sim \cD_{-i,j}}
{\bf I}_{j\in M_i^1} \cdot {\bf I}_{r'_{ij}(v'_{-i,j})\geq\beta_{ij}}
\left(
\mathop\bE\limits_{v_{ij} \sim \cD_{ij}}
Rev(\cM_{1LA}(\hat{\cI}_{i, j}, \beta_{ij}))
\right) \\
& = & \sum_i \mathop\bE\limits_{v_j\sim \cD_j}
\left( {\bf I}_{j\in M_i^1} \cdot Rev(\cM_{1LA}(\hat{\cI}_{i, j}, \beta_{ij}))
+ {\bf I}_{j\in M_i^2 \cap M_i} \cdot \max_{i'\neq i}v_{i'j} \right),
\end{eqnarray*}
and Equation \ref{eq:1la:bd} holds.

Taking summation over all items $j$ on both sides of Equation \ref{eq:1la:bd}
and combining with Equation~\ref{eq:1la:bd_1},
 we have
\begin{eqnarray*}
& & \bE_{v\sim \cD} Rev(\cM_{IEA}(\cI))
\geq
\sum_{j\in M} \sum_i \mathop\bE\limits_{v'_j\sim \cD'_j}
{\bf I}_{j\in M_i^1} \cdot Rev(\cM_{1LA}(\cI'_{i, j}, \beta'_{ij})) \\
&=& \mathop\bE\limits_{v'\sim \cD'}
\sum_i \sum_{j\in M_i^1}
 Rev(\cM_{1LA}(\cI'_{i, j}, \beta'_{ij}))
= \mathop\bE\limits_{v'\sim \cD'}  Rev(\cM_{1LA}(\cI')) \geq \frac{1}{2} \mathop\bE\limits_{v'\sim \cD'} IM(\cI'),
\end{eqnarray*}
where the inequality is because
$\cM_{1LA}$ is a 2-approximation to Myerson's mechanism for each item~\cite{ronen2001approximating}.
Thus Lemma \ref{lem:IEA:B} holds.
\end{proof}

\section{Proofs for Section \ref{sec:partial}}\label{app:partial}%\ref{sec:k=0}

\subsection{Proof of Theorem \ref{thm:unit-k}}
\label{app:known:unit}

\paragraph*{Theorem \ref{thm:unit-k}.} (restated) {\em
$\forall k\in [n-1]$, any unit-demand auction instances $\hat{\cI} = (N, M, \cD)$ and $\cI = (N, M, \cD, G)$ where $G$ is $k$-informed,
mechanism $\cM'_{IEUD}$ is 2-DST and
$\bE_{v\sim \cD} Rev(\cM'_{IEUD}(\cI)) \geq \frac{\tau_k}{24} \cdot OPT(\hat{\cI}).$
%
%$\forall k\in [n-1]$, unit-demand auction instances $\hat{\cI} = (N, M, \cD)$ and $\cI = (N, M, \cD, G)$ where $G$ is $k$-informed,
%$\cM'_{IEUD}$ is 2-DST and
%$\bE_{v\sim \cD} Rev(\cM'_{IEUD}(\cI)) \geq \frac{\tau_k}{24} \cdot OPT(\hat{\cI}).$
}

\begin{proof}
The proof is almost the same as that of Theorem \ref{thm:unit}, thus most details are omitted.
Below we only show that under the players' truthful strategies,
the probability for each distribution $\cD_{ij}$ to be reported in mechanism $\cM'_{IEUD}$ is at least $\tau_{k}$.
Indeed, for any player $i$ and item $j$,
%We first show that the so-defined $q$ maximizes the probability that a distribution $\cD_{ij}$ is reported:
%We can get the probability for a distribution $\cD_{ij}$ to be reported by following equation.
\begin{eqnarray*}
\Pr(\cD_{ij} \text{ is reported in the mechanism}) &= & \Pr(i\in N_2)\Pr(\exists i'\in N_1, (i, i')\in G_j \ | \  i \in N_2) \nonumber\\
&\geq &(1-q)(1-(1-q)^k),
\end{eqnarray*}
where the inequality is because $\cD_{ij}$ is known by at least $k$ players other than $i$,
and the players are partitioned independently.
Taking derivatives of the last term, we have that it is maximized when $q=1-(k+1)^{-\frac{1}{k}}$ as in the mechanism, in which case
$$\Pr(\cD_{ij} \text{ is reported in the mechanism})\geq
\frac{1}{(k+1)^{\frac{1}{k}}}\cdot \frac{k}{k+1}
%(k+1)^{-\frac{1}{k}}-(k+1)^{-\frac{k+1}{k}}
= \tau_k.$$

Combined with the proof of Theorem \ref{thm:unit}, Theorem \ref{thm:unit-k} holds.
\end{proof}

\subsection{Proof of Theorem \ref{thm:additive}}\label{app:known:additive}

\paragraph{The Information Elicitation Individual Myerson Mechanism.}
%To prove Theorem \ref{thm:additive}, w
We start by
introducing the information elicitation individual Myerson mechanism $\cM_{IEIM}$,
which runs the following mechanism $\cM_{IEIM,j}$ for each item $j$ separately.
Mechanism $\cM_{IEIM,j}$ is similar to $\cM_{IEUD}$ and $\cM'_{IEUD}$, thus we have omitted many details in the analysis.

\begin{algorithm}
\floatname{algorithm}{Mechanism}
  \caption{\hspace{-4pt} $\cM_{IEIM,j}$}
  \label{alg:IEMj}
  \begin{algorithmic}[1]
  \STATE Each player $i$ reports a value $b_{ij}$ and a knowledge $K_{ij} = (\cD^i_{i'j})_{i'\neq i}$.

  \STATE Randomly partition the players into two sets, $N_1$ and $N_2$, where each player is independently put in $N_1$ with probability $q=1-(k+1)^{-\frac{1}{k}}$ and $N_2$ with probability $1-q$.

 \STATE  Let $N_3$ be the set of  players in $N_2$ whose distributions are reported by some players in $N_{1}$, and $\cD'_{N_3, j}$ be the vector of reported distributions.

 % \STATE Reward players in $N_{1}$ using Brier's scoring rule, properly scaled so that the total reward $R$ is at most $\epsilon/m$.

  \STATE Run Myerson's mechanism on the single-good Bayesian instance
  $\hat{\cI}_{N_3, j} = (N_{3}, \{j\}, \cD'_{N_3, j})$ with the values being $(b_{ij})_{i\in N_3}$; and use the resulting allocation and prices to sell to players in $N_3$.

  \end{algorithmic}
\end{algorithm}

For each item $j$,
let $v_j = (v_{ij})_{i\in N}$, $\cD_j = (\cD_{ij})_{i\in N}$,
$\hat{\cI}_j = (N, \{j\}, \cD_j)$ be the corresponding single-good Bayesian instance,
and $\cI_j = (N, \{j\}, \cD_j, G_j)$ be the corresponding single-good information elicitation instance.
Lemma \ref{add:IM:truthful} below is similar to Lemma \ref{ud:truthful} and we provide its statement only.

\begin{lemma}
\label{add:IM:truthful}
For any additive auction instances $\hat{\cI} = (N, M, \cD)$ and $\cI = (N, M, \cD, G)$, mechanism $\cM_{IEIM,j}$ is 2-DST for $\cI_j$ for each $j\in M$, and mechanism $\cM_{IEIM}$ is 2-DST for $\cI$.
\end{lemma}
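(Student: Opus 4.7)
The plan is to mirror the proof of Lemma \ref{ud:truthful} almost verbatim, exploiting the fact that $\cM_{CSIM,j}$ is structurally identical to $\cM_{CSUD}$ restricted to a single item, with Myerson's mechanism playing the role of $\cM_{UD}$. The main observation is the same disentanglement: for item $j$, only the values of players in $N_2$ and the knowledge of players in $N_1$ affect the outcome. So I would first establish 2-DST for each $\cM_{CSIM,j}$ and then lift to $\cM_{CSIM}$ by additivity.

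For the first requirement on $\cM_{CSIM,j}$, I would fix player $i$, true value $v_{ij}$, alternative value $b_{ij}$, knowledge $K_{ij}$, and the others' strategy subprofile, and condition on the random partition together with the reported knowledge. If $i\in N_1$, then $i$ is guaranteed to receive no item and pay $0$, so his utility is identically zero regardless of the reported value. If $i\in N_2$, then whether $i\in N_3$ or not is determined entirely by the partition and the knowledge reported by players in $N_1$, and hence is independent of $i$'s reported value; conditional on $i\in N_3$, Myerson's mechanism is DST on $\hat{\cI}_{N_3,j}$, so truthful reporting of $v_{ij}$ weakly dominates; conditional on $i\in N_3^c\cap N_2$, the reported value is unused. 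Taking expectations over the partition establishes the inequality.

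For the second requirement, I would fix $i$'s true value and true knowledge $K_{ij}$ and an alternative $K'_{ij}$, together with the others' knowledge (which may depend on their true values). Conditional on $i\in N_1$, the mechanism assigns $i$ no item and price $0$, so his utility is zero for any reported knowledge. Conditional on $i\in N_2$, the mechanism never consults $K_{ij}$, so again his utility is invariant to the reported knowledge. Hence equality holds and the 2-DST requirement is trivially met (with strict preference later recovered via scoring rules in Appendix \ref{sec:buyknowledge}).

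Finally, to lift to $\cM_{CSIM}$: since the mechanism runs the $\cM_{CSIM,j}$'s in parallel and the valuations are additive, a player's utility decomposes as $u_i=\sum_{j\in M}u_{ij}$, where $u_{ij}$ is determined solely by the reports $(b_{ij},K_{ij})_{i\in N}$ within component $j$. Applying the single-item 2-DST property coordinate-wise then shows that, for any fixed strategy subprofile of the others, reporting $(v_{ij},K_{ij})$ for each $j$ weakly dominates any alternative $(b_{ij},K'_{ij})$ term by term. The only subtle point worth spelling out is that each item $j$ uses independent random coins for its partition, so the conditioning arguments in the two requirements remain valid per item; this is essentially automatic and I do not expect it to be the main obstacle, since the per-item argument is just the single-good specialization of the proof already carried out for $\cM_{CSUD}$.
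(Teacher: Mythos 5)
Your proposal is correct and matches the paper's intent: the paper explicitly states only that Lemma~\ref{add:IM:truthful} ``is similar to Lemma~\ref{ud:truthful}'' and omits the proof, and your argument is precisely the single-good specialization of that proof together with the standard additivity lift (which the paper itself uses in the proof of Lemma~\ref{add:ola:truthful}). The only simplification worth noting is that in the single-item case the auxiliary step of setting $\cD'_{ij}\equiv 0$ and handling ``partially reported'' players never arises, so the case analysis for $i\in N_2$ is even cleaner than in $\cM_{CSUD}$.
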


Next, we consider the expected revenue of $\cM_{IEIM}$.

\begin{lemma}
\label{lem:add:IM}
$\bE_{v\sim \cD} Rev(\cM_{IEIM}({\cI})) \geq \tau_k  \bE_{v\sim \cD} Rev(IM(\hat{{\cI}}))$.
\end{lemma}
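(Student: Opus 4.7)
The plan is to prove the per-item inequality
$\bE_{v_j \sim \cD_j} Rev(\cM_{CSIM,j}(\cI_j)) \geq \tau_k \cdot \bE_{v_j \sim \cD_j} Rev(Myerson(\hat{\cI}_j))$
for each item $j \in M$, and then sum over $j$: since the instance is additive with independent distributions across items, both $IM(\hat{\cI})$ and $\cM_{CSIM}(\cI)$ decompose as sums of per-item revenues.

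For a fixed item $j$, I would invoke Myerson's virtual-welfare characterization: letting $\phi_i(\,\cdot\,; \cD_{ij})$ denote player~$i$'s (ironed) virtual-value function and writing $x^+ = \max\{x,0\}$, we have $\bE_{v_j} Rev(Myerson(\hat{\cI}_j)) = \bE_{v_j} \max_{i \in N} \phi_i(v_{ij}; \cD_{ij})^+$. Because mechanism $\cM_{CSIM,j}$ runs Myerson on the sub-instance $\hat{\cI}_{N_3, j}$ with the (truthfully reported) distributions $\cD'_{N_3,j} = \cD_{N_3,j}$, the same characterization gives, conditional on the random partition, $\bE_{v_j}[Rev(\cM_{CSIM,j}(\cI_j)) \mid N_3] = \bE_{v_j} \max_{i \in N_3} \phi_i(v_{ij}; \cD_{ij})^+$.

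The first key step is a pointwise membership bound: for every player $i$, $\Pr(i \in N_3) \geq \tau_k$, where the probability is taken over the random partition alone (hence is independent of $v_j$). Indeed, $i \in N_3$ iff $i \in N_2$ and some $i'$ with $(i', i) \in G_j$ lies in $N_1$. Since $G_j$ is $k$-bounded, there are at least $k$ such $i' \neq i$, each placed independently in $N_1$ with probability $q$, so
\[
\Pr(i \in N_3) \;\geq\; (1-q)\bigl(1 - (1-q)^k\bigr).
\]
A short calculus check shows this expression is maximized at $q = 1 - (k+1)^{-1/k}$, giving exactly $\tau_k = k / (k+1)^{(k+1)/k}$, which is precisely the choice of $q$ in $\cM_{CSIM,j}$.

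The second step is a pointwise coupling. Fix any $v_j$ and let $i^\star = i^\star(v_j) \in \argmax_{i \in N} \phi_i(v_{ij}; \cD_{ij})$. Then trivially $\max_{i \in N_3} \phi_i(v_{ij}; \cD_{ij})^+ \geq \mathbf{1}[i^\star \in N_3] \cdot \max_{i \in N} \phi_i(v_{ij}; \cD_{ij})^+$. Taking expectation over the partition (which is independent of $v_j$) and applying the membership bound yields
\[
\bE_{\mathrm{part}} \max_{i \in N_3} \phi_i(v_{ij}; \cD_{ij})^+ \;\geq\; \tau_k \cdot \max_{i \in N} \phi_i(v_{ij}; \cD_{ij})^+.
\]
Taking expectation over $v_j \sim \cD_j$ gives the per-item inequality, and summing over $j \in M$ finishes the proof.

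The only real subtlety is ensuring that the probability bound $\Pr(i \in N_3) \geq \tau_k$ is genuinely independent of $v_j$, which it is because (i) the random partition uses fresh coins independent of the valuations, and (ii) the set of in-neighbors of $i$ in $G_j$ has size at least $k$ regardless of $v_j$. Everything else is a direct application of Myerson's virtual-welfare formula and the linearity of expectation across items.
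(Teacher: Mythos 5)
Your proof is correct, and it reaches the same per-item inequality the paper needs, but by a slightly different route. The paper argues via the projection-lemma machinery it has already set up for Theorem~\ref{thm:unit}: it writes $\bE_{v_j} Rev(\cM_{CSIM,j}(\cI_j)) = \bE_{N_3}\bE\, OPT(\hat{\cI}_{N_3,j})$, invokes the single-good analogue of Lemma~\ref{lem:proj} to lower-bound $OPT(\hat{\cI}_{N_3,j})$ by the \emph{projected} revenue $OPT(\hat{\cI}_j)_{N_3} = \sum_{i\in N_3} P_i(OPT(\hat{\cI}_j))$ (a mechanism-simulation argument: run full Myerson, sample the values of players outside $N_3$, and keep only the payments from $N_3$), swaps the two expectations, and then applies $\Pr(i\in N_3)\geq\tau_k$ to each payment $P_i$. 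You instead open up Myerson's revenue formula $\bE\max_i \phi_i(v_{ij};\cD_{ij})^+$ and use the pointwise bound $\max_{i\in N_3}\phi_i^+ \geq \mathbf{1}[i^\star\in N_3]\cdot\max_i\phi_i^+$ before integrating over the (independent) partition. In the single-item case these are essentially the same calculation in different clothing: the paper's $\sum_{i\in N_3}P_i$ collapses to $\mathbf{1}[i^\star\in N_3]\cdot P_{i^\star}$ since only the winner pays, which is precisely your indicator-coupled term. What your version buys is self-containedness --- it avoids re-invoking the projection lemma by using the explicit virtual-welfare characterization, which is available here because the auction is single-item; the paper's simulation argument is more portable (it is what generalizes to the unit-demand COPIES setting in Lemma~\ref{lem:proj}, where no such closed-form exists). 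One small point worth making explicit if you write this up: the virtual-welfare identity $\bE_{v_j} Rev = \bE_{v_j}\max_i\bar\phi_i(v_{ij})^+$ requires the \emph{ironed} virtual values and relies on the allocation being constant on ironed intervals, which you flag correctly; the membership bound $\Pr(i\in N_3)\geq\tau_k$ is the same one the paper uses in the proof of Theorem~\ref{thm:unit-k}.
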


\begin{proof}
By definition,
$$\bE_{v\sim \cD} Rev(\cM_{IEIM}({\cI})) = \sum_{j\in M} \bE_{v_j \sim \cD_j} Rev(\cM_{IEIM, j}(\cI_j))$$
% where
%$\cI_j = (N, \{j\}, (\cD_{ij})_{i\in N}, G_j)$ is the single-good crowdsourced Bayesian instance with item $j$.
%Also notice that
and
$$\bE_{v\sim \cD} Rev(IM(\hat{{\cI}})) = \sum_{j\in M}  OPT(\hat{\cI}_j).$$
%where $\hat{\cI}_j = (N, \{j\}, (\cD_{ij})_{i\in N})$ is the single-good Bayesian instance with item $j$.
Accordingly, it suffices to show that for each item $j$,
$$\bE_{v_j \sim \cD_j} Rev(\cM_{IEIM, j}(\cI_j)) \geq \tau_k OPT(\hat{\cI}_j).$$

Using ideas and notations similar to those in the proofs of Theorem \ref{thm:unit} and Lemma \ref{lem:proj}, we have

\begin{eqnarray*}
%\label{eq:myerson}
%& &Rev(\cM_{IEIM,j}, {\cI}) \nonumber \\
& & \bE_{v_j \sim \cD_j} Rev(\cM_{IEIM, j}(\cI_j))
%=  \mathop\mathbb{E}\limits_{N_3} \ \mathop\mathbb{E}\limits_{v_{N_3, j}\sim {\cD_{N_{3}, j}}}
%OPT(\hat{{\cI}}_{N_3, j}) - R
\nonumber \\
&=& \mathop\mathbb{E}\limits_{N_3} \ \mathop\mathbb{E}\limits_{v_{N_3, j}\sim {\cD_{N_{3}, j}}}
OPT(\hat{{\cI}}_{N_3, j})  \geq  \mathop\mathbb{E}\limits_{N_3}\ \mathop\mathbb{E}\limits_{v_j\sim \cD_j}
OPT(\hat{{\cI}}_j)_{N_3}  \nonumber \\
&=& \mathop\mathbb{E}\limits_{N_3}\ \mathop\mathbb{E}\limits_{v_j\sim \cD_j}
\sum\limits_{i\in N_3}P_i(OPT(\hat{{\cI}}_j)) =  \mathop\mathbb{E}\limits_{v_j\sim \cD_j} \  \mathop\mathbb{E}\limits_{N_3} \sum\limits_{i\in N_3}
P_i(OPT(\hat{{\cI}}_j))  \nonumber \\
&=& \mathop\mathbb{E}\limits_{v_j \sim \cD_j} \sum\limits_{i}\Pr(i\in N_3) \cdot
P_i(OPT(\hat{{\cI}}_j))  \geq  \tau_k \mathop\mathbb{E}\limits_{v_j \sim \cD_j} \sum\limits_{i}
P_i(OPT(\hat{{\cI}}_j))  \nonumber \\
& = & \tau_k OPT(\hat{{\cI}}_j),
\end{eqnarray*}
as desired. Thus Lemma \ref{lem:add:IM} holds.
%
%The analysis of (\ref{eq:myerson}) is the same with the proof in theorem \ref{thm:unit}.
%Hence, $\cM_{IEIM,j}$ achieves a $(\tau_k,\epsilon)$-approximation for IM on each item $j$,
%which implies a $(\tau_k,\epsilon)$-approximation for $\cM_{IEIM}$.
\end{proof}

\paragraph{The Information Elicitation Individual 1-Lookahead Mechanism.}
%Since the 1-Lookahead mechanism is a 2-approximation to Myerson's mechanism,
Next,
we introduce the information elicitation 1-lookahead mechanism
$\cM_{IE1LA}$, which
runs the following
mechanism $\cM_{IE1LA, j}$ for each item $j$ separately.
We will show that the revenue of $\cM_{IE1LA}$ matches that of mechanism $\cM_{1LA}$ for any $k\geq 1$.

\begin{algorithm}
\floatname{algorithm}{Mechanism}
  \caption{\hspace{-4pt} $\cM_{IE1LA, j}$}
  \label{alg:ola}
  \begin{algorithmic}[1]

  \STATE Each player $i$ reports a value $b_{ij}$ and a knowledge $K_{ij} = (\cD^i_{i'j})_{i'\neq i}$.

 % \STATE Reward each player using Brier's scoring rule, properly scaled so that  the total reward $R$ is at most~$\epsilon/m$.

  \STATE Set $i^* = \argmax_i b_{ij}$ and $p_{second} = \max_{i\neq i^*} b_{ij}$.

  \STATE If $i^*$'s distribution is not reported,
  % then the item is unsold.
  % and every player pays 0.
  sell item $j$ to him at price $p_{second}$ and halt here.

  \STATE Otherwise, let $\cD'_{i^*j}$ be the reported distribution for $i^*$ (if there are many reported distributions for him, take the one by the lexicographically first reporter).

  \STATE Let $p_{i^*} = \max\limits_p \Pr\limits_{v_{i^*j}\sim \cD'_{i^*j}} (v_{i^*j}\geq p \ | \  v_{i^*j}\geq p_{second})\cdot p$. If $b_{i^*j}\geq p_{i^*}$ then sell item $j$ to $i^*$ at price~$p_{i^*}$; otherwise the item is unsold.\label{step6a}

\end{algorithmic}
\end{algorithm}

Note that $\cM_{IE1LA, j}$ does not partition the players into two groups.
Also, it is not exactly using the 1-lookahead mechanism as a blackbox, because it has to handle boundary cases where the players' distributions are not all reported.
%As we will show, the mechanism is 2-DST.
When the players all tell the truth,
all true distributions will indeed be reported.
However, for the mechanism to be well defined, it has to know what to do in all possible cases.
Moreover,
running the 1-lookahead mechanism on the set of players whose distributions are reported is not 2-DST: for example, if the player with the second highest value is the only one who knows the distribution for the player with the highest value, then he may choose not to report his knowledge about the latter, so that he himself has the highest value in the 1-Lookahead mechanism and gets a high utility. That is
why the mechanism only tries to sell to the player with the highest value.
We have the following two lemmas.

\begin{lemma}
\label{add:ola:truthful}
For any additive auction instances $\hat{\cI} = (N, M, \cD)$ and $\cI = (N, M, \cD, G)$,
mechanism $\cM_{IE1LA,j}$ is 2-DST for each $\cI_j$, and mechanism $\cM_{IE1LA}$ is 2-DST for $\cI$.
\end{lemma}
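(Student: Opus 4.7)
\medskip

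The plan is to verify the two conditions of 2-DST separately for $\cM_{CS1LA,j}$, exploiting the same kind of disentanglement used in the proofs of Lemma \ref{ud:truthful} and Lemma \ref{lem:M'csa2-dst}: although the mechanism does not partition the players into ``reporters'' and ``buyers'', the role of a player's own reported knowledge $K_{ij}$ is nevertheless decoupled from the role of his reported value $b_{ij}$. I will then argue that truthfulness of each $\cM_{CS1LA,j}$ implies truthfulness of $\cM_{CS1LA}$ by additivity of utilities.

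For condition (1), fix a player $i$, his true value $v_{ij}$, his reported knowledge $K_{ij}$, and the others' full strategies. Let $M = \max_{i'\neq i} b_{i'j}$. I will split into three cases according to $v_{ij}$ versus $M$, and in each case compare the utility from reporting $v_{ij}$ to that from reporting any $b_{ij}\neq v_{ij}$. The key observation is that whether $i$ ends up as $i^*$ depends only on $b_{ij}$ versus $M$, and the price $p_{i^*}$ in Step \ref{step6a} is computed as a function of the reported distribution $\cD'_{ij}$ (furnished by other players) and $p_{second}=M$ only; it does not depend on $b_{ij}$ itself. From here the argument reduces to standard single-good truthfulness: if $v_{ij}>M$, lying down to a value below $M$ strictly loses a nonnegative surplus, while lying up leaves the outcome unchanged; if $v_{ij}<M$, lying up to become $i^*$ forces $i$ to buy at a price $\geq M>v_{ij}$ (whether or not his distribution is reported), yielding nonpositive utility compared to the zero utility from truthful reporting. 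Ties can be handled by the same tie-breaking convention used throughout the paper. Importantly, this argument is completely independent of $K_{ij}$, so it gives the first (dominance) requirement of 2-DST.

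For condition (2), assume all players report truthfully. The central observation is that $K_{ij} = (\cD^i_{i'j})_{i'\neq i}$ contains no distribution for player $i$ himself. Therefore the quantity $\cD'_{i^*j}$ used by the mechanism to price item $j$ is drawn from the others' reports whenever $i=i^*$, and is irrelevant to $i$'s payoff when $i\neq i^*$ (since then $i$ gets no item and pays 0). In either case, $i$'s own reported knowledge $K_{ij}$ has no effect on $i$'s allocation or payment, and hence does not affect $i$'s utility at all. Consequently reporting the true $K_{ij}$ is weakly optimal, giving the second requirement. I expect this decoupling observation to be the cleanest and most important step; it is analogous to, but even sharper than, the disentanglement used in Lemma \ref{ud:truthful}, since here we do not even need to randomly partition players to achieve it.

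Finally, for the statement about $\cM_{CS1LA}$ itself, the argument is immediate: the mechanism runs $\cM_{CS1LA,j}$ independently for each item $j$, and both the allocation, the payment, and the value reports decompose across items, so each player's total utility is the sum of his per-item utilities. Applying the 2-DST guarantee of each $\cM_{CS1LA,j}$ coordinate-wise (to $(b_{ij})_j$ and $(K_{ij})_j$) then yields 2-DST for $\cM_{CS1LA}$ on $\cI$. The main obstacle to be careful about is the tie-breaking in Step 2 of $\cM_{CS1LA,j}$, which I will handle exactly as in the proof of Theorem \ref{thm:newbvcg} so that all weak inequalities in the case analysis remain valid.
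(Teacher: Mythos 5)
Your proof is correct and follows essentially the same approach as the paper: the decoupling observation for condition~(2) (that $K_{ij}$ never contains a distribution for $i$ himself and hence cannot affect $i$'s allocation or price) and the additivity step for $\cM_{CS1LA}$ are identical to the paper's. For condition~(1) you give an explicit case analysis where the paper simply cites the truthfulness of the second-price and 1-Lookahead mechanisms; note only that your phrase ``lying up leaves the outcome unchanged'' is slightly imprecise when, say, $v_{ij} < p_{i^*} \le b_{ij}$ (the outcome \emph{does} change, but only in a direction that hurts $i$, so the dominance inequality still holds).
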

\begin{proof}
As in mechanism $\cM_{IEA}$, in each mechanism $\cM_{IE1LA,j}$, the fact that it is dominant for each player $i$ to
report his true value no matter what knowledge the players report
follows from the truthfulness of the second-price mechanism
and that of the 1-lookahead mechanism.
Given that all players report their true values, a player $i$'s reported knowledge does not affect whether he is $i^*$ or not. It may affect the other players' utilities, but not his own. Thus reporting his true knowledge never hurts him,
and mechanism $\cM_{IE1LA,j}$ is 2-DST for $\cI_j$.
%
%Arbitrarily fix an item $j$, a player $i$, a strategy subprofile of the other players and a knowledge of $i$. We show that $\cM_{IE1LA,j}$ is {\em monotone} and
%uses the {\em threshold} payment for player $i$. Indeed, assume $i$ gets the item by reporting a value $b_{ij}$ and let him report a higher value $b'_{ij}$.
%Notice that $b_{ij}$ is the highest value among all players, and so is $b'_{ij}$.
%If the other players did not report a distribution for $i$'s value, then $i$'s price is the second highest value, and he still gets the item at the same price by reporting $b'_{ij}$.
%Otherwise, $i$'s price is the reserve price of the 1-Lookahead mechanism as defined in Step \ref{step6a}, which does not depend on his reported value, thus he still gets the item at this price by reporting $b'_{ij}$.
%
%The above analysis also shows
%that player $i$ pays the threshold payment,
%which is the highest value among the other players if his distribution is not reported,
%and is the reserve price of the 1-Lookahead mechanism otherwise.
%Accordingly, it is dominant for $i$ to report his true value.
%
%Given that all players report their true values, it is easy to see that for each player $i$, reporting his true knowledge does not hurt him.
%%Indeed, the only way for a player's reported knowledge to affect his own utility is to set his reward according to Brier's scoring rule, which is maximized by reporting his true knowledge.

Since the players have additive valuations and $\cM_{IE1LA}$ runs each mechanism $\cM_{IE1LA, j}$ separately for item $j$, we have that $\cM_{IE1LA}$ is 2-DST for $\cI$ and Lemma \ref{add:ola:truthful} holds.
\end{proof}

\begin{lemma}
\label{lem:add:ola}
$\bE_{v\sim \cD} Rev(\cM_{IE1LA}({\cI}))
= \bE_{v\sim \cD} Rev(\cM_{1LA}(\hat{{\cI}}))
\geq \frac{1}{2} \bE_{v\sim \cD} Rev(IM(\hat{{\cI}}))$.
\end{lemma}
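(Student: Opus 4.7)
The plan is to establish the two parts separately: first verify the equality $\bE_{v\sim \cD} Rev(\cM_{CS1LA}(\cI)) = \bE_{v\sim \cD} Rev(\cM_{1LA}(\hat{\cI}))$ by showing that, under the truthful 2-DST strategy profile, $\cM_{CS1LA,j}$ produces the same allocation and prices as $\cM_{1LA}$ does on item $j$; and then invoke the classical fact \cite{ronen2001approximating} that the single-item 1-Lookahead mechanism is a $2$-approximation to Myerson's mechanism, summed across items by additivity.

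For the equality, I would fix an item $j$ and argue that $\cM_{CS1LA,j}(\cI_j)$ coincides with the 1-Lookahead mechanism on $\hat{\cI}_j$ with probability 1. The key observation is that the knowledge graph $G_j$ is $k$-bounded with $k\geq 1$, so the highest bidder $i^* = \argmax_i v_{ij}$ has at least one in-neighbor in $G_j$. When all players report their true values and true knowledge (per Lemma \ref{add:ola:truthful}, this is the 2-DST behavior), the reported distribution $\cD'_{i^*j}$ therefore exists and equals $\cD_{i^*j}$, so the branch in Step \ref{step6a} fires: the price $p_{i^*}$ is exactly the 1-Lookahead optimal take-it-or-leave-it price against $i^*$ conditioned on $v_{i^*j}\geq p_{second}$, and the sale happens iff $v_{i^*j}\geq p_{i^*}$. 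Thus $\cM_{CS1LA,j}$'s allocation and price match those of $\cM_{1LA}$ on $\hat{\cI}_j$ realization-by-realization, giving the per-item equality; summing over $j$ and using additivity of valuations and revenues yields the first identity.

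For the inequality, I would apply \cite{ronen2001approximating} item-by-item: for each item $j$, $\bE_{v_j\sim \cD_j} Rev(\cM_{1LA, j}(\hat{\cI}_j)) \geq \tfrac{1}{2} OPT(\hat{\cI}_j) = \tfrac{1}{2}\,\bE_{v_j\sim \cD_j} Rev(M_j(\hat{\cI}_j))$, where $M_j$ denotes Myerson's mechanism on item $j$. Summing over $j\in M$ gives $\bE_{v\sim\cD} Rev(\cM_{1LA}(\hat{\cI})) \geq \tfrac{1}{2}\,\bE_{v\sim\cD} Rev(IM(\hat{\cI}))$, completing the lemma.

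I do not anticipate a genuine obstacle here; the only subtle point is the per-realization match between $\cM_{CS1LA,j}$ and $\cM_{1LA}$, which hinges on the $k\geq 1$ assumption ensuring that the highest bidder's distribution is always recovered from the reports (so the mechanism never falls through to the second-price branch under truthful play). Tie-breaking in defining $i^*$ should be fixed consistently with $\cM_{1LA}$ so that the identification of $i^*$ and $p_{second}$ is literally the same in both mechanisms; once that is done, the rest is a direct unwinding of the definitions.
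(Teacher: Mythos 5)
Your proof is correct and takes essentially the same approach as the paper: you establish the per-realization coincidence of $\cM_{CS1LA,j}$ with the 1-Lookahead mechanism on item $j$ via the $k\geq 1$ boundedness guaranteeing that $\cD_{i^*j}$ is always reported under truthtelling, then invoke the 2-approximation of \cite{ronen2001approximating} item by item. The paper's proof is just terser, asserting the outcome equivalence in one sentence and chaining the resulting equalities and inequality in a single display.
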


\begin{proof}
When the players report their true values and true knowledge,
%despite of the reward given to the players,
the outcome of each $\cM_{IE1LA, j}$ on
the information elicitation instance $\cI_j$
is the same as that of mechanism $\cM_{1LA}$ on the Bayesian instance $\hat{\cI}_j$,
because the distribution for $i^*$ is reported.
% = (N, \{j\}, \cD_j)$.
Accordingly,

\begin{eqnarray*}
& & \bE_{v\sim \cD} Rev(\cM_{IE1LA}({\cI}))
= \sum\limits_{j\in M} \bE_{v_j\sim \cD_j} Rev(\cM_{IE1LA, j}({\cI_j}))
= \bE_{v\sim \cD} Rev(\cM_{1LA}(\hat{{\cI}})) \\
& \geq & \sum\limits_{j\in M} \dfrac{1}{2}OPT(\hat{{\cI}}_j)
= \dfrac{1}{2} \bE_{v\sim \cD} Rev(IM(\hat{{\cI}})),
\end{eqnarray*}
where the inequality is because the 1-lookahead mechanism
is a 2-approximation to the optimal Bayesian mechanism for each item $j$ \cite{ronen2001approximating}.
Thus Lemma \ref{lem:add:ola} holds.
\end{proof}

Note that the approximation ratio of $\cM_{IE1LA}$ does not depend on the specific value of $k$, as long as $k\geq 1$.

\paragraph{The Information Elicitation $BVCG$ Mechanism.}
The mechanism $\cM_{IEBVCG}$ is defined in Mechanism~\ref{alg:bvcg}.
It is similar to $\cM_{IEA}$ and approximates mechanism $BVCG$ in information elicitation settings.
If a player $i$'s value distributions are not all reported, $\cM_{IEBVCG}$ throws $i$ away and leaves
his winning set unsold.
This
simplifies the instructions compared to $\cM_{IEA}$ and still ensures truthfulness. Doing so would seriously damage the revenue if the knowledge graphs can be totally arbitrary.
However, when everything is known by somebody and when the players report their true knowledge,
no player is actually thrown away.
%But again, this instruction is needed to ensure truthfulness.
We have the following two lemmas.

%, and this step is needed for completeness.

\begin{algorithm}
\floatname{algorithm}{Mechanism}
  \caption{\hspace{-4pt} $\cM_{IEBVCG}$}
  \label{alg:bvcg}
  \begin{algorithmic}[1]

  %\STATE Each player $i$ reports a valuation $b_i = (b_{ij})_{j\in M}$ and a knowledge $K_i = (\cD^i_{i'j})_{i'\neq i, j\in M}$.

%  \STATE Reward the players using Brier's scoring rule, such that the total reward $R$ is at most $\epsilon$.

 \STATE  Each player $i$ reports a valuation $b_i = (b_{ij})_{j\in M}$ and a knowledge $K_i = (\cD^i_{i'j})_{i'\neq i, j\in M}$.

  \STATE For each item $j$, set $i^*(j)= \argmax_i b_{ij}$ (ties broken lexicographically) and $p_j = \max_{i\neq i^*} b_{ij}$.

  \FOR{each player $i$}
   \STATE Let $M_i = \{j \ | \  i^*(j) = i\}$ be player $i$'s winning set.

   \STATE If not all $m$ distributions of $i$'s values are reported, $i$ gets no item and items in $M_i$ are unsold.

   \STATE Otherwise, let $\cD'_i$ be the vector of reported distributions for $i$'s values (if there are more than one reporters for an item, take the lexicographically first).

   \STATE Compute the entry fee $e_i(\cD'_i, b_{-i})$ using $BVCG$. Note that different from mechanism $Bund$, $BVCG$ does not compute extra reserve prices for $i$.

   \STATE Sell $M_i$ to player $i$ according to $BVCG$.
   That is, if $\sum_{j\in M_i} b_{ij}\geq e_i(\cD'_i, b_{-i})
   + \sum_{j\in M_i} p_j$
   then $i$ gets $M_i$ with price $e_i(\cD'_i, b_{-i}) + \sum_{j\in M_i} p_j$;
   otherwise $i$ gets no item and the items in $M_i$ are unsold.

\ENDFOR

\end{algorithmic}
\end{algorithm}

\begin{lemma}
\label{add:bvcg:truthful}
Mechanism $\cM_{IEBVCG}$ is 2-DST.
\end{lemma}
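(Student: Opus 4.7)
The plan is to follow the same high-level template as the proof of Lemma \ref{lem:M'csa2-dst} for $\cM_{CSA}$: namely, show that in $\cM_{CSBVCG}$ the use of each player $i$'s reported valuation $b_i$ and the use of his reported knowledge $K_i$ are fully disentangled, and then invoke the truthfulness of $BVCG$ for the valuation part. Concretely, I will verify the two requirements in the 2-DST solution concept in turn.

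For requirement (1), fix any player $i$, any reported knowledge $K_i$, and any strategy subprofile $s_{-i}=(b_j,K_j)_{j\ne i}$. The key observation is that whether $i$'s own value distributions $\cD_{i}$ are all reported depends only on the collection $(K_j)_{j\ne i}$ of knowledge reports of the other players (which is where the values $\cD^{i'}_{ij}$ come from), and not on $b_i$ or $K_i$. Hence the case where $i$ is discarded (getting no item and paying $0$) is determined before $b_i$ is consulted, and in that case $i$'s utility equals $0$ regardless of $b_i$. In the complementary case, $\cD'_i$ is fixed independently of $b_i$, the per-item threshold $p_j=\max_{i'\ne i}b_{i'j}$ does not depend on $b_i$, and the entry fee $e_i(\cD'_i,b_{-i})$ does not depend on $b_i$ either; thus the whole menu faced by $i$ (winning set $M_i$ as a function of $b_i$, versus total price $e_i(\cD'_i,b_{-i})+\sum_{j\in M_i}p_j$) is exactly the menu of the single-player $BVCG$ mechanism parameterized by $\cD'_i$ and $b_{-i}$. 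Since $BVCG$ is DST, reporting $b_i=v_i$ is dominant for $i$ in $\cM_{CSBVCG}$ as well.

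For requirement (2), suppose that all players report their true valuations. I will argue the stronger claim that $i$'s reported knowledge $K_i$ has no effect at all on $i$'s own allocation or payment, so in particular reporting the true $K_i$ never hurts him. Indeed, $i$'s winning set $M_i$ is defined purely from $(b_j)_{j\in N}$; the per-item threshold $p_j$ is again just the second-highest bid; whether $i$ is eligible (i.e.\ whether $\cD_i$ is fully reported) depends on $(K_j)_{j\ne i}$ but not on $K_i$; and the distribution vector $\cD'_i$ used to compute the entry fee $e_i(\cD'_i,b_{-i})$ is extracted from $(K_j)_{j\ne i}$ as well. Hence $K_i$ only influences the allocations and payments of the other players, matching the inequality in Definition~2-DST(2) with equality.

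I do not foresee a serious obstacle here: the analysis is a direct specialization of the disentanglement argument used for $\cM_{CSA}$, and is in fact simpler because $\cM_{CSBVCG}$ does not mix $\hat\cI$ with the projected instance $\cI'$ on a per-item basis. The only point that requires a moment's care is the case where $i$'s distributions are not all reported: one must note that this eligibility check is insensitive to $i$'s own strategy, so that discarding $i$ cannot be manipulated either by shading $b_i$ or by distorting $K_i$.
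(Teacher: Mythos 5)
Your proof is correct and follows essentially the same disentanglement argument as the paper: first note that the ``eligibility'' check and the entry fee $e_i(\cD'_i,b_{-i})$ are insensitive to $b_i$, then invoke the DST property of $BVCG$ (which holds for any entry fee not depending on $i$'s own bid), and finally observe that $K_i$ influences neither $M_i$, nor $e_i$, nor the reserve prices, nor $i$'s eligibility. You make one point slightly more explicit than the paper does --- that whether $i$'s distributions are all reported is determined before $b_i$ or $K_i$ is consulted, so the discard case cannot be gamed --- but this is a minor presentational refinement, not a different route.
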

\begin{proof}
Arbitrarily fix a player $i$, a strategy subprofile
of the other players, and a knowledge of $i$.
If not all $m$ distributions of $i$'s values are reported by the others,
then $i$ gets nothing and pays nothing, so it does not matter what valuation he reports about himself.
Otherwise, $\cM_{IEBVCG}$ sells to player~$i$
in the same way as $BVCG$:
using the other players' highest reported value as
the reserve
price for each item, either player $i$ gets the whole set of items for which his value passes the reserve price (i.e., his winning set), or he gets nothing and those items are unsold to anybody. Following~\cite{yao2015n}, it is dominant for $i$ to report his true values given any entry fee that does not depend on his reported values, so is it
%reporting his true values is still dominant for $i$
when the entry fee is computed based on $\cD'_i$ and $b_{-i}$.
%, the latter of which determines the reserve prices.

Moreover, a player $i$'s reported knowledge
$K_i$ about others affects
neither $M_i$ nor~$e_i$, nor the reserve prices for him.
Thus
%$K_i$ is only used to compute player $i$'s reward based on Brier's scoring rule, and $i$'s expected reward is maximized by
reporting his true knowledge never hurts him and Lemma \ref{add:bvcg:truthful} holds.
\end{proof}

\begin{lemma}
\label{lem:add:bvcg}
$\mathop\mathbb{E}\limits_{v\sim \cD}Rev(\cM_{IEBVCG}({\cI})) =  \mathop\mathbb{E}\limits_{v\sim \cD}Rev(BVCG(\hat{{\cI}}))$.
\end{lemma}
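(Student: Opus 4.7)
The plan is to show that, under the 2-DST strategy profile given by Lemma~\ref{add:bvcg:truthful}, the execution of $\cM_{CSBVCG}$ on $\cI$ coincides with that of $BVCG$ on $\hat{\cI}$ on every realization of $v\sim\cD$. Once this sample-by-sample equivalence is established, the equality of expected revenues is immediate.

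First, I would invoke the 2-DST property to assume each player $i$ reports $b_i = v_i$ and his true knowledge $K_i$. Using the $k$-boundedness of $G$ with $k\geq 1$, every node in every $G_j$ has in-degree at least one, so for every pair $(i,j)$ some other player $i'$ reports $\cD^{i'}_{ij} = \cD_{ij}$. Consequently, the ``not all $m$ distributions of $i$'s values are reported'' branch in $\cM_{CSBVCG}$ is never triggered, and moreover $\cD'_i = \cD_i$ for every $i$.

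Next, I would verify step by step that the surviving branch of $\cM_{CSBVCG}$ literally reproduces $BVCG$. The quantities $i^*(j)$, $p_j$, and $M_i$ depend only on the reported valuation profile $b = v$, so they match their $BVCG$ counterparts on $\hat{\cI}$. Since $\cD'_i = \cD_i$ and $b_{-i} = v_{-i}$, the entry fee $e_i(\cD'_i, b_{-i})$ is exactly the $BVCG$ entry fee $e_i(\cD_i, v_{-i})$, and the allocation/pricing rule for $M_i$ in the last step is the $BVCG$ rule verbatim. Therefore $\cM_{CSBVCG}(\cI)$ and $BVCG(\hat{\cI})$ produce identical allocations and prices on every $v$, and the two expected revenues coincide.

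The lemma is essentially bookkeeping rather than a substantive argument, so there is no real obstacle; the only subtlety is noting that the $k\geq 1$ hypothesis is precisely what eliminates the ``throw-away'' case, which is why the lemma is stated in Section~\ref{sec:partial} rather than in Section~\ref{sec:k=0} where $\cM_{CSA}$ had to be introduced instead.
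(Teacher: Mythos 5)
Your argument is correct and is essentially the paper's own (the paper just compresses it to one sentence: since $\cM_{CSBVCG}$ retrieves the whole $\cD$ from the players' truthful reports, its outcome coincides with that of $BVCG$ on $\hat{\cI}$). You have simply made explicit the two ingredients the paper leaves implicit: that truthful play plus $k\geq 1$ rules out the throw-away branch, and that the remaining branch reproduces $BVCG$ verbatim.
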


\begin{proof}
Since $\cM_{IEBVCG}$ retrieves the whole distribution $\cD$ from the players, its outcome is exactly the same as that of $BVCG$ under the Bayesian instance $\hat{\cI}$.
%Thus, its expected revenue is that of $BVCG$ minus the total reward, which is at most $\epsilon$.
\end{proof}

\vspace{-10pt}
\paragraph*{Remark.} Similar to $\cM_{IE1LA}$,
the revenue of $\cM_{IEBVCG}$ does not
depend on the specific value of~$k$, as long as $k\geq 1$.
Indeed, notice the special structures of
the two Bayesian mechanisms $\cM_{1LA}$ and $BVCG$:
the winning set of a player $i$
solely depends on the players' values;
the distribution $\cD_i$ is only used to
compute better reserve prices or entry fee
to increase revenue;
and the distribution $\cD_{-i}$ is irrelevant to $i$.
Therefore, in the information elicitation setting we can allow a player to be both a reporter about the others' distributions and a potential buyer of some items.
In some other mechanisms such as Myerson's mechanism,
all players' distributions are used both to choose the potential winner and to set his price,
thus in the information elicitation setting we must separate the knowledge reporters and the potential winners.
%Indeed, if player $i$ would have gotten some items when he reports nothing about player $j$, but those items will be given to $j$ if he reports truthfully,
%then $i$ may be better off reporting nothing. That is why random partitioning is needed in these cases.
%

\medskip

We are now ready to prove Theorem \ref{thm:additive}.

\vspace{-5pt}
\paragraph*{Theorem \ref{thm:additive}.} (restated) {\em
\sloppy
For any $k \in [n-1]$, any additive auction instances $\hat{\cI} = (N, M, \cD)$ and $\cI = (N, M, \cD, G)$ where $G$ is $k$-informed,
the mechanism $\cM'_{IEA}$ is 2-DST and
$$\bE_{v\sim \cD} Rev(\cM'_{IEA}(\cI)) \geq \max\{\frac{1}{11}, \frac{\tau_k}{6+2\tau_k}\} \cdot OPT(\hat{\cI}).$$
}

\begin{proof}
%[Proof of Theorem \ref{thm:additive}]
Recall that mechanism $\cM'_{IEA}$ is defined as follows:
when $k\leq 7$, it runs $\cM_{IEBVCG}$ with probability $\frac{2}{11}$ and
$\cM_{IE1LA}$ with probability $\frac{9}{11}$;
when $k> 7$, it runs $\cM_{IEBVCG}$ with probability $\frac{\tau_k}{3+\tau_k}$ and
$\cM_{IEIM}$ with probability $\frac{3}{3+\tau_k}$.
%The choice of the two cases is to achieve the best approximation ratio for each $k$.
%It flips a fair coin; if heads comes up then it runs $\cM_{IEBVCG}$; and if tails comes up, then it runs $\cM_{IE1LA}$ when $k\leq 3$ and $\cM_{IEIM}$ when $k>3$.

The mechanism $\cM'_{IEA}$ is clearly 2-DST, since all the sub-mechanisms are 2-DST, and which mechanism is chosen does not depend on the players' strategies.

When $k\leq 7$, we have $\frac{\tau_k}{6+2\tau_k} < \frac{1}{11}$. By Lemmas \ref{lem:add:ola} and \ref{lem:add:bvcg},
\begin{eqnarray}
&&\mathop\mathbb{E}\limits_{v\sim \cD} Rev(\cM'_{IEA}( {\cI}))
= \dfrac{2}{11}\mathop\mathbb{E}\limits_{v\sim \cD}Rev(\cM_{IEBVCG}( {\cI})) +
\dfrac{9}{11}\mathop\mathbb{E}\limits_{v\sim \cD} Rev(\cM_{IE1LA}( {\cI})) \nonumber \\
& \geq& \dfrac{2}{11}\mathop\mathbb{E}\limits_{v\sim \cD}Rev(BVCG(\hat{{\cI}})) +
 \frac{3}{11}
\mathop\mathbb{E}\limits_{v\sim \cD}Rev(IM( \hat{{\cI}}))
+ \dfrac{3}{11}\mathop\mathbb{E}\limits_{v\sim \cD} Rev(\cM_{1LA}( {\hat{\cI}})). \label{equ:10_0}
%\\
%& \geq& \dfrac{1}{14} OPT_{B}(\hat{{\cI}}) - \epsilon.
%
%&=& \dfrac{1}{2}\mathop\mathbb{E}\limits_{v\sim \cD}Rev(BVCG(\hat{{\cI}})) +
% \max\left\{\frac{1}{4}, \frac{\tau_k}{2}\right\}
%\mathop\mathbb{E}\limits_{v\sim \cD}Rev(IM( \hat{{\cI}})) - \epsilon.
\end{eqnarray}
When $k>7$, we have $\frac{\tau_k}{6+2\tau_k} > \frac{1}{11}$. By Lemmas \ref{lem:add:IM} and \ref{lem:add:bvcg},
\begin{eqnarray}
\mathop\mathbb{E}\limits_{v\sim \cD} Rev(\cM'_{IEA}( {\cI}))
 & = & \frac{\tau_k}{3+\tau_k}\mathop\mathbb{E}\limits_{v\sim \cD}Rev(\cM_{IEBVCG}( {\cI})) +
\frac{3}{3+\tau_k}\mathop\mathbb{E}\limits_{v\sim \cD} Rev(\cM_{IEIM}( {\cI})) \nonumber \\
& \geq& \frac{\tau_k}{3+\tau_k}\mathop\mathbb{E}\limits_{v\sim \cD}Rev(BVCG(\hat{{\cI}})) +
 \frac{3\tau_k}{3+\tau_k}
\mathop\mathbb{E}\limits_{v\sim \cD}Rev(IM( \hat{{\cI}})). \label{equ:10_1}
\end{eqnarray}
By \cite{cai2016duality},
\begin{eqnarray*}
OPT(\hat{\cI}) &\leq&
2\mathop\mathbb{E}\limits_{v\sim \cD}Rev(BVCG(\hat{{\cI}}))
+ 3\mathop\mathbb{E}\limits_{v\sim \cD}Rev(IM(\hat{{\cI}}))
+ 3\mathop\mathbb{E}\limits_{v\sim \cD}Rev(\cM_{1LA}(\hat{{\cI}})) \\
&\leq& 2\mathop\mathbb{E}\limits_{v\sim \cD}Rev(BVCG(\hat{{\cI}})) + 6\mathop\mathbb{E}\limits_{v\sim \cD}Rev(IM(\hat{{\cI}})).
\end{eqnarray*}
Combined the first inequality with Inequality \ref{equ:10_0}
and the second with Inequality \ref{equ:10_1}, we have
$$
\mathop\mathbb{E}\limits_{v\sim \cD}Rev(\cM'_{IEA}({\cI}))
\geq \max\left\{\frac{1}{11}, \dfrac{\tau_k}{6+2\tau_k}\right\} \cdot
OPT(\hat{{\cI}}),
$$
and Theorem \ref{thm:additive} holds.
\end{proof}

\subsection{An Illustration of Mechanism $\cM_{IEM}$}
\label{app:figure:2con}

The sets of players involved in the first round of mechanism $\cM_{IEM}$
are illustrated in Figure~\ref{fig:IEM}.

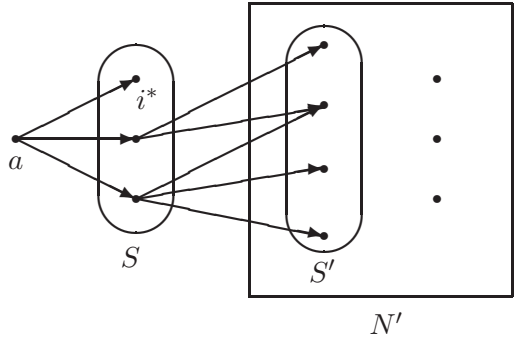
\begin{figure}[htbp]
\begin{center}
\setlength{\unitlength}{1cm}
\thicklines
\begin{picture}(7,4.8)

\put(0.4,2.7){\circle*{0.1}}
\put(2,1.9){\circle*{0.1}}
\put(2,2.7){\circle*{0.1}}
\put(2,3.5){\circle*{0.1}}

\put(4.5,1.4){\circle*{0.1}}
\put(4.5,2.3){\circle*{0.1}}
\put(4.5,3.15){\circle*{0.1}}
\put(4.5,3.95){\circle*{0.1}}

\put(6,1.9){\circle*{0.1}}
\put(6,2.7){\circle*{0.1}}
\put(6,3.5){\circle*{0.1}}

\put(0.4,2.7){\vector(1,0){1.6}}
\put(0.4,2.7){\vector(2,-1){1.6}}
\put(0.4,2.7){\vector(2,1){1.6}}

\put(2,1.9){\line(5,-1){2.5}}
\put(4.5,1.4){\vector(4,-1){0}}
\put(2,1.9){\line(6,1){2.5}}
\put(4.5,2.3){\vector(1,0){0}}
\put(2,1.9){\line(2,1){2.5}}
\put(4.5,3.15){\vector(2,1){0}}

\put(2,2.7){\line(6,1){2.5}}
\put(4.5,3.15){\vector(4,1){0}}
\put(2,2.7){\line(2,1){2.5}}
\put(4.5,3.95){\vector(2,1){0}}

\put(2,2.7){\oval(1,2.5)}
\put(4.5,2.7){\oval(1,3)}

\put(3.5,0.6){\line(0,1){3.9}}
\put(7,0.6){\line(0,1){3.9}}
\put(3.5,0.6){\line(1,0){3.5}}
\put(3.5,4.5){\line(1,0){3.5}}

\put(0.3,2.3){$a$}
\put(2,3.1){$i^*$}
\put(1.8,1){$S$}
\put(4.3,0.8){$S'$}
\put(5.1,0.1){$N'$}

\end{picture}
\caption{The sets of players involved in the first round of Mechanism $\cM_{IEM}$. The edges in the figure correspond to distributions reported by the players.
In each round, the mechanism keeps
in $S$ the player with the highest virtual value so far, drops
all the other players from $S$, and adds the players whose distributions are reported
for the first time by the dropped ones.}
\label{fig:IEM}
\end{center}
\end{figure}

\subsection{Proof of Theorem \ref{thm:myerson}}\label{app:proofwarm}
%Figure \ref{fig:IEM} illustrates the sets of players involved in the first round of
%mechanism $\cM_{IEM}$.

\begin{lemma}
\label{lem:2dst}
$\cM_{IEM}$ is 2-DST.
%can guarantee that each player's distribution can be reported by the end.
\end{lemma}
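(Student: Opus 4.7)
My plan is to verify the two parts of the 2-DST definition separately, exploiting the clean separation built into the mechanism between where $b_i$ and $K_i$ enter: $b_i$ is used only through $\phi_i(b_i;\cD'_i)$ in the $\argmax$ computation (Step~\ref{step10}), while $K_i$ is consulted (in Step~\ref{step6}) only in rounds where $i \in S \setminus \{i^*\}$. For Condition~(1), fix any strategy subprofile $s_{-i}$, any $K_i$ that $i$ reports, and any realization of the coin from Step~\ref{csm:2}. The case $i = a$ is immediate: by Steps~\ref{step11}--\ref{step13a}, player $a$ gets no item and pays $0$, so any $b_i$ is optimal. Otherwise I would argue that the mechanism effectively runs Myerson's auction on the set $\Pi := N \setminus (\{a\} \cup N')$ (where $N'$ is the set at termination) with distribution $\cD'$: the winner, if any, is $\argmax_{j \in \Pi} \phi_j(b_j;\cD'_j)$ and pays $\phi^{-1}(\max\{\phi_{\text{second}},0\};\cD'_{i^*})$. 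Since $\cD'_i$ is set exclusively by other players' reports, $b_i$ enters only through $\phi_i(b_i;\cD'_i)$. I would then prove allocation monotonicity by a round-by-round induction: the execution is independent of $b_i$ up to the round $i$ first joins $S$; if $i$ wins at some $b_i$, then $i$ must be the chosen $i^*$ in every round $i$ is in $S$, for otherwise $i$ is dropped by the update $S := \{i^*\} \cup S'$ and, because $N'$ only shrinks, $i$ cannot return; and raising $b_i$ to any $b'_i > b_i$ preserves this property round by round, so $i$ still wins. Because the execution (and hence $\Pi$, $\cD'$, and $\phi_{\text{second}}$) remains constant above the winning threshold, the pricing formula coincides with the threshold bid, and the standard Myerson argument closes Condition~(1).

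For Condition~(2), I will assume all players report their values truthfully and fix $K_{-i}$ and the coin. The pivotal observation is that $K_i$ is consulted for the first time at the earliest round $\tau$ in which $i \in S_\tau \setminus \{i^*_\tau\}$; hence the execution up to and including the computation of $S_\tau$ and $i^*_\tau$ is determined by $v$, $K_{-i}$, and the coin alone, not by $K_i$. Two cases follow. (i) If this round $\tau$ never occurs, the entire execution is independent of $K_i$ and $i$'s utility is unchanged by any misreport. (ii) If it does occur, the end-of-round update sets $S := \{i^*_\tau\} \cup S'_\tau$ and removes $i$ from $S$; since $N'$ only shrinks, $i$ cannot re-enter, so $i \neq i^*_{\text{final}}$ and $i$ receives no item and pays $0$, regardless of the content of $K_i$. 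In either case $i$'s utility is invariant to $K_i$, so truthful reporting is weakly optimal, completing Condition~(2).

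The main obstacle I anticipate is the allocation-monotonicity claim in Condition~(1). A priori, raising $b_i$ could reshuffle the trajectory $(S_t, i^*_t)$ and thereby change which other players' knowledge is consulted in Step~\ref{step6}, altering $\Pi$ and $\phi_{\text{second}}$. My plan to neutralize this is to argue that (a) $b_i$ cannot affect the execution before $i$ joins $S$, since $\phi_i$ is only consulted while $i \in S$; and (b) whenever $i$ wins at some bid, $i$ is $i^*$ in every round $i$ is in $S$, so $K_i$ is never used and, as $b_i$ is raised further, the other players' consulted knowledge is exactly the same. This freezes $\Pi$, $\cD'$, and $\phi_{\text{second}}$ above the threshold and reduces Condition~(1) to the standard monotone-allocation plus threshold-pricing argument for Myerson's mechanism on a fixed player set.
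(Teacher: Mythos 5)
Your proof follows essentially the same route as the paper's: disentangle when $b_i$ and $K_i$ are read, prove value-monotonicity plus threshold pricing for Condition~(1), and observe for Condition~(2) that $K_i$ is only consulted at the moment $i$ is about to be dropped from $S$. The paper states Claim~\ref{clm:1d} by first noting that whether $\cD'_i$ is ever defined is independent of $i$'s own strategy, then invoking monotonicity of $\phi_i$ and the threshold price; your round-by-round induction unpacks the same facts in slightly more detail, and the observation you flag as the ``main obstacle'' (that the trajectory $(S_t,i^*_t)$, and hence $\Pi$ and $\phi_{\text{second}}$, is frozen once $i$ is $i^*$ in every round, because $K_i$ is then never read) is exactly the paper's point~(3).

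The one gap is in Condition~(2): you omit the case $a=i$. Your ``pivotal observation'' --- that $K_i$ is first consulted at the earliest round $\tau$ with $i\in S_\tau\setminus\{i^*_\tau\}$ --- is false when $i=a$, since $K_a$ is read immediately in Step~\ref{csm:2} and $a$ is never placed in $S$ at all. Your Case~(i), as written, then wrongly asserts the execution is independent of $K_i$. The conclusion nevertheless holds for the same reason as in Condition~(1): $a$ is never a candidate and so never receives the item, making $u_a=0$ under any reported knowledge. You should add this case explicitly (as the paper does). With that patch the argument is complete.
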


\begin{proof}
Similar to Lemma \ref{ud:truthful}, the proof takes two steps.
\begin{claim}\label{clm:1d}
For any player $i$, true value $v_i$,
%fixing any reported knowledge $K_i$, it is {\em (weakly) dominant} for $i$ to report his true valuation $v_i$. That is, for
 value $b_i$, knowledge $K_i$, and strategy subprofile $s_{-i} = (b_j, K_j)_{j\neq i}$ of the other players,
% in mechanism $\cM_{IEUD}$ we have
$\bE_{\cM_{IEM}}u_i((v_i, K_i), s_{-i}) \geq \bE_{\cM_{IEM}}u_i((b_i, K_i), s_{-i})$,
where the expectation is taken over the mechanism's random coins.
%For any player $i$, knowledge $K_i$ and true valuation $v_i$, fixing the second component of $i$'s strategy to be $K_i$, it is dominant for $i$ to report $v_i$.
%
%For any player $i$, knowledge $K_i$ and true value $v_i$, fixing the second component of $i$'s strategy to be $K_i$, it is dominant for $i$ to report $v_i$.
\end{claim}
\begin{proof}
%Arbitrarily fix a value $b_i$ and a strategy $s_j = (b_j, K_j)$ for each player $j\neq i$.
%We need to compare
%$\bE_{\cM_{IEM}} u_i(v_i, K_i)$ and $\bE_{\cM_{IEM}} u_i(b_i, K_i)$,
% where  the expectation is taken over the mechanism's random coins.%
%%\footnote{Strictly speaking, each $s_j$ should depend on $v_j$ and the expectation should also be taken over $\cD_{-i}$. However,
%%reporting $v_i$ is dominant for player~$i$ no matter what $v_{-i}$ is, thus there is no need to consider $\cD_{-i}$.}

First, conditional on $a=i$, player $i$ does not get the item and his reported value is not used by the mechanism.
% is only used in Step \ref{step13} to compute the reward of
%the other players who have reported about $i$'s distribution,
Thus $u_i((v_i, K_i), s_{-i}) = u_i((b_i, K_i), s_{-i}) = 0$ in this case.
%$i$'s own utility is the same no matter whether he reports $v_i$ or $b_i$.

Second, we compare the two utilities conditional on $a\neq i$.
Notice that when $a\neq i$,
whether or not player $i$'s distribution is reported---that is, whether or not $\cD'_i$ is defined--- only depends on $s_{-i}$. Thus $\cD'_i$ is defined under $(v_i, K_i)$ if and only if it is defined under~$(b_i, K_i)$.

If $\cD'_i$ is not defined, then $i\in N'$ at the end of the mechanism,
he does not get the item, and his reported value is not used.
%his utility is equal to the total reward he gets
%from Brier's scoring rule in Step \ref{step13}, which solely depends on $K_i$ and $b_{-i}$.
Therefore $u_i((v_i, K_i), s_{-i}) = u_i((b_i, K_i), s_{-i}) = 0$  again.
%player $i$'s utility is again the same, whether he reports $v_i$ or $b_i$.

If $\cD'_i$ is defined, then it is defined in the same round of the mechanism under both $(v_i, K_i)$ and $(b_i, K_i)$, which we refer to as round $r$. Also, $\cD'_i$ is the same in both cases and the mechanism's execution is the same till this round.
Notice that
\begin{itemize}
\item[(1)] $\phi_i(\cdot ; \cD'_i)$ is monotone in its input;

\item[(2)] $i$ gets the item if and only if
$i = i^*$ in all rounds $\ell$ with $\ell\geq r$ and his virtual value is at least~0; and

\item[(3)] when $i$ gets the item, $K_i$ is never used by the mechanism and
thus does not affect the execution of any round $\ell$ with $\ell\geq r$.
\end{itemize}
Accordingly, the mechanism is {\em monotone} in player $i$'s reported value: if $i$ gets the item by reporting some value, then he still gets it by reporting a higher value.
Moreover, when $i$ gets the item, his price in Step \ref{step13a} is the {\em threshold} payment. Following standard characterizations of single-parameter DST mechanisms,
%we have that
%if $\cD'_i$ is defined then
it is the best for player $i$ to report his true value $v_i$.
That is,
$u_i((v_i, K_i), s_{-i}) \geq u_i((b_i, K_i), s_{-i})$ when $\cD'_i$ is defined.
% for all $s_{-i}$ such that $\cD'_i$ is defined.
%, and the inequality is strict for some of them.

Combining the above cases together, we have $\bE_{\cM_{IEM}} u_i((v_i, K_i), s_{-i})\geq \bE_{\cM_{IEM}} u_i((b_i, K_i), s_{-i})$
%for all $s_{-i}$,
% and the inequality is strict for some~$s_{-i}$.
%Thus, fixing $K_i$ in player $i$'s strategy, it is dominant for $i$ to report $v_i$
and Claim \ref{clm:1d} holds.
\end{proof}

%Now we move to the second step.

\begin{claim}\label{clm:2d}
For any player $i$, true value $v_i$, true knowledge $K_i$,
% $i$'s true knowledge by $TK_i$, then
% = \times_{i'\neq i, j\in [m]} \cD^i_{i'j}$, where $\cD^i_{i'j} = \cD_{i'j}$ if edge $(i, i')$ is in the knowledge graph for item $j$ and $\cD^i_{i'j} = \bot$ otherwise.
%for any true valuation $v_i$,
knowledge $K'_i$, and knowledge subprofile $K'_{-i}(v_{-i}) = (K'_j(v_j))_{j\neq i}$ of the other players,
where each $K'_j(v_j)$ is a function of player $j$'s true value~$v_j$,
$\bE_{v_{-i}\sim \cD_{-i}} u_i((v_i, K_i), (v_{-i}, K'_{-i}(v_{-i}))) \geq \bE_{v_{-i}\sim \cD_{-i}} u_i((v_i, K'_i), (v_{-i}, K'_{-i}(v_{-i})))$.
%
%
%Given that all players report their true values,
%for a player $i$, it does not hurt him to report his true knowledge
%%it is dominant for a player $i$ to report truthfully
%$K_i = (\cD^i_j)_{j\neq i}$ as defined by the knowledge graph $G$:
%that is, $\cD^i_j = \cD_j$ for all~$j$ such that $(i, j)\in G$, and $\cD^i_j = \bot$ otherwise.
\end{claim}

\begin{proof}
%\sloppy
%Arbitrarily fix a knowledge $K'_i = (\cD'^i_j)_{j\neq i}$ that is different from $K_i$,
%%Note that, under the no-bluff assumption, $\cD'^i_j = \bot$ for all $j$ such that $(i, j)\notin G$.
%and a strategy $s_j(v_j) = (v_j, K_j(v_j))$ for each $j\neq i$ and true value $v_j$.
%%Note that $j$'s reported knowledge depends on his true value and need not be truthful.
%We now compare
%$\bE_{\cM_{IEM}, \cD_{-i}} u_i(v_i, K_i)$ and $\bE_{\cM_{IEM}, \cD_{-i}} u_i(v_i, K'_i)$,
% where  the expectation is taken over both the mechanism's random coins and the distributions of the other players' true values.

Similar to Claim \ref{clm:1d}, conditional on $a=i$, player $i$ does not get the item no matter what knowledge he reports.
%the utility of $i$ is equal to the reward he gets in Step~\ref{step13}. Since this step is reached by the mechanism with probability 1, given that it is reached, from $i$'s point of view, $v_{-i}$ is still distributed according to $\cD_{-i}$, thus his expected reward is maximized only by reporting $K_i$.
Thus
$$\bE_{v_{-i}\sim \cD_{-i}} [u_i((v_i, K_i), (v_{-i}, K'_{-i}(v_{-i}))) \  | \ a=i] = \bE_{v_{-i}\sim \cD_{-i}} [u_i((v_i, K'_i), (v_{-i}, K'_{-i}(v_{-i}))) \ | \ a=i] = 0.$$
%where the inequality is because $BSR$ is strictly proper.

Next, we compare the two utilities conditional on $a\neq i$.
%Again, since Step \ref{step13} is reached with probability 1,
%the expected reward $i$ gets in that step is strictly larger by reporting $K_i$ than $K'_i$.
%That is, we compare $i$'s expected utilities at the end of Step \ref{step13a}, under $(v_i, K_i)$ and $(v_i, K'_i)$.
Again similar to Claim \ref{clm:1d},
$\cD'_i$ is the same under both strategies of $i$, and the mechanism's execution is also the same till the round $r$ where $\cD'_i$ is defined (or till the end if $\cD'_i$ is not defined). There are three cases:
\begin{itemize}
\item
If $\cD'_i$ is not defined, then neither $K_i$ nor $K'_i$ is used by the mechanism, and $i$ has utility 0 under both strategies.

\item
If $\cD'_i$ is defined and $i = i^*$ from round $r$ to the end of the mechanism, then again $K_i$ and $K'_i$ are not used.
Thus $i$ has the same utility (maybe non-zero) under both strategies.

\item
If $\cD'_i$ is defined and $i\neq i^*$ starting from some round $r'\geq r$, then $i$ does not get the item under either strategy, thus his utility is 0 under both of them.
\end{itemize}
In sum, $\bE_{v_{-i}\sim \cD_{-i}} u_i((v_i, K_i), (v_{-i}, K'_{-i}(v_{-i}))) = \bE_{v_{-i}\sim \cD_{-i}} u_i((v_i, K'_i), (v_{-i}, K'_{-i}(v_{-i})))$
and reporting his true knowledge does not hurt player $i$.
% $i$'s utility.
%Summing it up with $i$'s expected reward in Step \ref{step13}, we have
%$$\bE_{v_{-i}\sim \cD_{-i}} [u_i(v_i, K_i) \  | \ a\neq i]>\bE_{v_{-i}\sim \cD_{-i}} [u_i(v_i, K'_i) \ | \ a\neq i].$$
%
%Combining the two inequalities above, we have
%$$\bE_{\cM_{IEM}, \cD_{-i}} u_i(v_i, K_i) > \bE_{\cM_{IEM}, \cD_{-i}} u_i(v_i, K'_i),$$
%thus Claim \ref{clm:2d} holds.
\end{proof}

%
%If the knowledge $K'_a(v_a)$ causes the mechanism to stop in Step \ref{step3},
%then $i$'s utility is 0 under both $K_i$ and $K'_i$. Otherwise, the mechanism stops in Step \ref{step13}, and below we compare player $i$'s expected utilities by the end of Step \ref{step13a} and his expected rewards in Step \ref{step13} separately. Notice that $i$'s final expected utility under a strategy is the sum of the two under the same strategy.
%
%\begin{claim}
%Conditional on $a\neq i$, for any true value subprofile $v_{-i}$ of the other players and their reported knowledge under $v_{-i}$ such that the mechanism stops in Step \ref{step13}, player $i$'s utility by the end of Step \ref{step13a} under $(v_i, K_i)$ is the same as that under $(v_i, K'_i)$.
%\end{claim}
%
%\begin{proof}
%As before, $\cD'_i$ is defined under $(v_i, K_i)$ if and only if it is defined under $(v_i, K'_i)$.
%If $\cD'_i$ is not defined, then $i\in N'$ in both cases and both utilities are 0.
%
%\end{proof}

Lemma \ref{lem:2dst} follows directly from Claims \ref{clm:1d} and \ref{clm:2d}.
\end{proof}

\vspace{-10pt}
\paragraph*{Theorem \ref{thm:myerson}.} (restated) {\em
For any single-good auction instances $\hat{\cI} = (N, M, \cD)$ and $\cI = (N, M, \cD, G)$ where $G$ is 2-connected,
$\cM_{IEM}$ is 2-DST and
$\bE_{v\sim \cD} Rev(\cM_{IEM}(\cI)) \geq (1-\frac{1}{n})OPT(\hat{{\cI}})$.
}

\begin{proof}%[Proof of Theorem \ref{thm:myerson}]
Following Lemma \ref{lem:2dst}, it remains to show $\bE_{v\sim \cD} Rev(\cM_{IEM}(\cI)) \geq (1-\frac{1}{n})OPT(\hat{{\cI}})$
under the players' truthful strategies.
The key is to explore the structure of the knowledge graph to make sure that the player with the highest virtual value is found by the mechanism with high probability.

More specifically, arbitrarily fix the player $a$ chosen by the mechanism. Notice that throughout the mechanism, $N'$ is the set of players $i\in N\setminus\{a\}$ such that $\cD'_i$ is not defined. We show that $N'= \emptyset$ at the end of the mechanism.
Indeed, since $G$ is 2-connected,
the out-degree of $a$ in $G$ is at least 2: otherwise, either $a$ cannot reach any other node in $G$, or this becomes the case after removing the unique node $j$ with $(a, j)\in G$, contradicting 2-connectedness.
Since $a$ reports his true knowledge $K_a$,
we have $|S|\geq 2$ in Step \ref{step3} and the mechanism does not stop there.
Moreover, at the beginning of each round, we have $|S|\geq 2$ and thus $S\setminus \{i^*\}\neq \emptyset$: otherwise $S'=\emptyset$ in the previous round, and the mechanism would not have reached this round.

Assume, for the sake of contradiction that the mechanism finally reaches a round~$r$ where $N'\neq \emptyset$ at the beginning but  $S' = \emptyset$ in Step \ref{step6}.
Since all players report their true knowledge, by the definition of $S'$ we have that, in graph $G$, all neighbors of $S\setminus\{i^*\}$ are in $N\setminus N'$.
Furthermore,
for any player $i\in (N\setminus N')\setminus S$,
%we have that in graph $G$,
all neighbors of $i$ are also in $N\setminus N'$: indeed, $i$ has been moved from $N'$ to $S$ and then dropped from $S$ (except player $a$, whose neighbors are in $N\setminus N'$ by definition); and when $i$ is dropped from $S$, all his neighbors in $N'$ are moved to $S$.
Accordingly, all the edges going from  $N\setminus N'$
to $N'$ are from player $i^*$,
and $G$ becomes disconnected after removing $i^*$, again contradicting 2-connectedness.
Thus $S'\neq \emptyset$ in all rounds and $N'=\emptyset$ in the end,
 as we wanted to show.

Because all players report their true values and true knowledge, we have $\cD'_{-a} = \cD_{-a}$ and $\phi_i(v_i; \cD'_i) = \phi_i(v_i; \cD_i)$ for all $i\neq a$.
Letting $\hat{\cI}_a = (N\setminus\{a\}, M, \cD_{-a})$, we claim
\begin{equation}\label{equ:1}
\bE_{v\sim \cD} [Rev(\cM_{IEM}(\cI)) | a] = OPT(\hat{\cI}_a).
\end{equation}
To see why this is true, note that by construction, in each round the mechanism keeps the player with the highest virtual value in $S$.
Thus, the final player $i^*$ has the highest virtual value in $N\setminus\{a\}$,
and $\phi_{second}$ is the second highest virtual value in $N\setminus\{a\}$.
Accordingly, the outcome of Step \ref{step13a} is the same as that of Myerson's mechanism on $\hat{\cI}_a$, so is the revenue.
%Since Brier's scoring rule is bounded in $[0, 2]$, the reward each player $i$ gets in the last step is no more than $\frac{\epsilon}{n}$ and the total reward given to the players is no more than $\epsilon$.
%It is easy to see that the total reward given to the players is no more than $\epsilon$ always.
Therefore Equation~\ref{equ:1} holds.

Finally, it remains to show that, by throwing away a random player $a$,
the mechanism does not lose much revenue.
For each player $i$, letting $P_i(OPT(\hat{\cI}))$ be the
expected price paid by $i$ in Myerson's mechanism under $\hat{\cI}$,
we have $OPT(\hat{\cI}) = \sum_{i\in N} P_i(OPT(\hat{\cI}))$.
Similar to the proof of Lemma \ref{lem:proj},
consider the following Bayesian mechanism $\cM'$ on $\hat{\cI}_a$:
it runs Myerson's mechanism on $\hat{\cI}$ and
then projects the outcome to players $N\setminus\{a\}$.
It is easy to see that $\cM'$ is DST,
thus it cannot generate more revenue than $OPT(\hat{\cI}_a)$.
As the expected revenue of $\cM'$ is $\sum_{i\neq a} P_i(OPT(\hat{\cI}))$, we have
\begin{equation}\label{equ:2}
OPT(\hat{\cI}_a) \geq \bE_{\cD_{-a}} Rev(\cM'(\hat{\cI}_a)) = \sum_{i\neq a} P_i(OPT(\hat{\cI})).
\end{equation}
Combining Equations \ref{equ:1} and \ref{equ:2}, we have
\begin{eqnarray*}
& & \bE_{v\sim \cD} Rev(\cM_{IEM}(\cI)) = \sum_{a\in N} \frac{1}{n} \bE_{v\sim \cD} [Rev(\cM_{IEM}(\cI)) | a] = \sum_{a\in N} \frac{1}{n} \left(OPT(\hat{\cI}_a)\right) \\
& \geq &  \sum_{a\in N} \frac{1}{n} \left(\sum_{i\neq a} P_i(OPT(\hat{\cI}))\right)  =  \left(\sum_{i\in N} \frac{n-1}{n} P_i(OPT(\hat{\cI}))\right)
= (1-\frac{1}{n}) OPT(\hat{\cI}),
\end{eqnarray*}
and
%Lemma \ref{lem:IEMrev} holds.
Theorem \ref{thm:myerson} holds.
\end{proof}

%\paragraph{Remark.}
%For additive auctions, when the knowledge graphs are 2-connected, instead of using mechanism
% $\cM_{IE1LA}$ or $\cM_{IEIM}$, one can use $\cM_{IEM}$
% for each item $j$.
% %we can use mechanism $\cM_{IEM}$ from Section \ref{sec:warm:myerson}
%%for single-good auctions to replace $\cM_{IE1LA}$ and $\cM_{IEIM}$, we can make
%%to improve the approximation ratio:.
%We thus have the following corollary, obtained by running  $\cM_{IEM}$ with probability $\frac{3}{4}$ and $\cM_{IEBVCG}$ with probability~$\frac{1}{4}$.
%
%% in additive auction if the knowledge graph of each item is 2-connected.
%
%\begin{corollary}\label{col:additive}
%When the knowledge graphs are 2-connected,
%the revised mechanism $\cM_{IEA}$ is 2-DST and
%$\mathop\mathbb{E}\limits_{v\sim \cD}Rev(\cM_{IEA}({\cI}))
%\geq \frac{1}{8}(1-\frac{1}{n})
%OPT(\hat{{\cI}}) - \epsilon$.
%%
%%$(\frac{1}{12}(1-\frac{1}{n}),\epsilon)$-approximation.
%\end{corollary}

\section{Using Scoring Rules to Buy Knowledge from Players}
%\label{sec:bluffing}
\label{sec:buyknowledge}

In this section we use proper scoring rules to reward the players for their knowledge, so that it is strictly better
for them to report truthfully.
More precisely,
a  {\em scoring rule} is a function~$f$ that takes as inputs a distribution $\cD'$ over a state space $\Omega$
 and a random sample $\omega$ from an underlying true distribution $\cD$ over $\Omega$, and outputs a real number.
Scoring rule $f$ is {\em proper} if
$$\bE_{\omega\sim \cD} f(\cD, \omega)\geq \bE_{\omega\sim \cD} f(\cD', \omega)$$
for any $\cD$ and $\cD'$,
and {\em strictly proper}
if the inequality is strict for any $\cD'\neq \cD$.
Moreover, $f$ is {\em bounded} if there exist constants $c_1, c_2$ such that $c_1\leq f(\cD', \omega) \leq c_2$ for any $\cD'$ and $\omega$.
Our mechanisms can use any strictly proper scoring rules that are bounded. For concreteness, we use Brier's scoring rule~\cite{brier1950verification}:
$$
BSR(\cD',\omega) =  2-(\sum_{s\in \Omega}(\delta_{\omega,s}-\cD'(s))^2) = 2\cD'(\omega) - ||\cD'||_{2}^{2}  +1,
$$
where $\cD'(s)$ is the probability of $s$ according to $\cD'$, and $\delta_{\omega, s}$ is the indicator for $\omega = s$.
Note that $BSR(\cD', \omega) \in [0,2]$ for any $\cD'$ and $\omega$.%
\footnote{The original version of $BSR$ is bounded by $-1$ and $1$, and we have shifted it up by $2$.}

In all our mechanisms, when player $i$ reports $\cD^i_{i'j}\neq \bot$ for player $i'$ and item $j$,
% distribution $\cD'_{ij}$ for player $i$ and item $j$ is defined,
the seller rewards~$i$
based on $BSR(\cD^{i}_{i'j}, b_{i'j})$.
If there are more than one reporters for the same distribution,
the seller can
either reward all of them or
randomly choose one.
%To upper-bound the total reward given to the players,
We can scale the reward for each distribution so that the total reward given to the players is at most some constant $\epsilon$,
which is an $\epsilon$ additive loss to the revenue.
For example, in Mechanism $\cM_{IEUD}$,
%to reward player $i$ for reporting player $i'$'s distribution on item $j$,
the seller could reward each player $i$ with
$$r^{i}_{i'j} = \frac{\epsilon}{2mn^{2}}BSR(\cD^i_{i'j}, b_{i'j})$$
for reporting the value distribution of player $i'$ on item $j$.
%where $\cD'_{i'j}$ is the distribution reported by $i$ and $b_{i'j}$ is $i'$'s bid on item $j$.

%Thus all the weakly 2-DST crowdsourced mechanisms can be transformed into a 2-DST mechanisms using proper scoring rules to reward players.

Although scoring rules help breaking utility-ties, they cause another problem:
a player who does not know a distribution may report something he made up, just to receive a reward.
Therefore we start by considering our mechanisms under the {\em no-bluff} assumption:
that is, a player will not report anything about a distribution that he does not know.
%transforming any 2-DST crowdsourced mechanism into a strictly 2-DST mechanism using properly scaled Brier's scoring rule
%will reduce revenue by at most $\epsilon$.
%
%Formally, the  {\em no-bluff} assumption is defined as:
%% in the main body of the paper,
%% in Sections \ref{sec:warm:myerson}-\ref{sec:player-wise};
More precisely, a player $i$ in an information elicitation auction is {\em no-bluff} if, for any knowledge graph $G_j$ and player $i'$ with $(i, i')\notin G_j$, and for any strategy $(b_i, K_i)$ of $i$,
$i$ reports $\bot$ for the corresponding distribution of $i'$.
%Notice that this assumption applies to both the partial information setting and the player-wise information setting.
Note that for a player~$i'$ with $(i, i')\in G_j$, $i$ may report any distribution about $i'$, including~$\bot$.
In some sense, the no-bluff assumption is the analogy of the no-overbidding assumption adopted in budget-constrained auctions: a player
will not bid higher than his true value or budget, even if doing so may not lead to a price higher than the latter.

For all our mechanisms, it is easy to see that the reward will only affect the players' incentives for reporting their knowledge, not their incentives for reporting their values.
Accordingly, it is still dominant for the players to report their true values, no matter what knowledge they report.
Given that the players all report their true values, the reported values are distributed according to the prior.
Thus reporting his true knowledge is now strictly better than lying for a player, because it maximizes his reward.
%This is because for each player $i$, he will be rewarded at most $mn$ times for reporting other players' distributions
%and there are $n$ players in total.
Rather than restating all our previous theorems, we summarize them in the theorem below.
%the following theorem is a summarization

\begin{theorem}\label{thm:nobluff}
Under the no-bluff assumption, for any information elicitation
mechanism in previous sections,
the revised mechanism with proper scoring rules
is 2-DST, and reporting his true knowledge is strictly better than lying for each player $i$.
Moreover, the mechanism's revenue is the same as before with an $\epsilon$ additive loss.
%there exists a BIC crowdsourced mechanism $\cM'$ with the same expected revenue in the same auction settings, without making the no-bluff assumption.
%Moreover, all players reporting their true values and true knowledge is the unique Bayesian Nash equilibrium in $\cM'$, besides the unachieveable ones where each player reports the unknown true distributions.
\end{theorem}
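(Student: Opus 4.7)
The plan is to verify three properties in turn: value-truthfulness, strict incentive to report true knowledge, and the $\epsilon$ bound on total rewards. The pivotal observation I would make at the outset is that the reward $r^{i}_{i'j}$ paid to player $i$ for reporting $\cD^{i}_{i'j}$ depends only on $i$'s reported distribution and on $i'$'s reported value $b_{i'j}$; it is independent of $i$'s own reported valuation $b_i$. Because of this, adding the scoring-rule payments to $i$'s utility leaves his value-optimization problem exactly as in the original mechanism, so the proofs of Lemmas~\ref{ud:truthful}, \ref{lem:M'csa2-dst}, \ref{add:IM:truthful}, \ref{add:ola:truthful}, \ref{add:bvcg:truthful}, and \ref{lem:2dst} transfer verbatim to establish that reporting $v_i$ is still dominant regardless of the reported knowledge profile.

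For strict knowledge-truthfulness, I would first appeal to value-truthfulness (step~1) to conclude that each $b_{i'j}$ is drawn from the true $\cD_{i'j}$. In each of the original mechanisms, $K_i$ affected only the \emph{other} players' outcomes and never $i$'s own allocation or price, so the sole residual channel through which $K_i$ enters $i$'s expected utility is the scoring-rule reward. For a pair $(i,i')\in G_j$, strict properness of Brier's rule gives that $\bE_{b_{i'j}\sim \cD_{i'j}}[BSR(\cD^{i}_{i'j}, b_{i'j})]$ is uniquely maximized at $\cD^{i}_{i'j}=\cD_{i'j}$; and since $BSR\geq 0$, reporting $\bot$ yields expected reward $0$, which is strictly dominated by the truthful report. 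The no-bluff assumption disposes of the remaining case $(i,i')\notin G_j$, where $i$ is forced to report $\bot$ and cannot deviate anyway. Summing the strict gains across all pairs $(i',j)$ shows that $K_i$ is $i$'s uniquely optimal knowledge report, yielding the desired strict version of 2-DST.

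Finally, to bound the additive revenue loss, I would use $BSR\in[0,2]$ together with the scaling $r^{i}_{i'j}=\frac{\epsilon}{2mn^{2}}\cdot BSR(\cdot)\leq \frac{\epsilon}{mn^{2}}$ and the fact that there are at most $mn(n-1)\leq mn^{2}$ reporter-report triples $(i,i',j)$, so the total reward paid out is at most $\epsilon$. The step I expect to be the main obstacle is the strict-incentive argument: its correctness hinges on first isolating the effect of $K_i$ from the allocation and pricing machinery---so that the scoring term becomes the sole driver of $i$'s knowledge-related utility---and then invoking no-bluff to close the gap that would otherwise tempt a player to fabricate distributions just to collect positive rewards. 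Without no-bluff one would need a calibrated penalty or cross-reporter corroboration, an interesting extension lying beyond the present statement.
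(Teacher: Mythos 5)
Your proposal is correct and follows essentially the same route as the paper's (brief) argument: the reward is additive in utility, independent of $b_i$, so the original dominance of truthful value-reporting transfers; given that others report true values, the sampled $b_{i'j}\sim\cD_{i'j}$ makes the strictly proper Brier score (together with $BSR\geq 0$, giving $\bot$ a zero reward strictly below the positive expected reward of the truth, and the no-bluff assumption covering unknown distributions) yield strict knowledge-truthfulness; and the $\frac{\epsilon}{2mn^2}$ scaling with at most $mn(n-1)$ reporter-pair-item triples bounds the total payout by $\epsilon$. You merely spell out explicitly a few facts the paper compresses into a couple of sentences (that the reward does not depend on $b_i$, and that $K_i$ never touched $i$'s own allocation or price in any of the base mechanisms), so the proposal matches the intended proof.
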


Next, we show how to remove the no-bluff assumption when everything is known.
%
%
%\paragraph{Removing the no-bluff assumption.}
%\label{sec:bluffing}
%%\vspace{-5pt}
%
%To remove the no-bluff assumption,
% from crowdsourced Bayesian auctions.
%Recall that, given a knowledge graph $G$, a player $i$'s true knowledge consists of the true distribution for each player $i'$ such that $(i, i')\in G$ and ``$\bot$'' for all other players.
%Without this assumption,
Without this assumption, player $i$
may report a distribution for another player $i'$'s value for an item~$j$,
even if $(i, i')\notin G_j$.
However, if there exists a third player $\hat{i}$ who knows $i'$'s distribution $\cD_{i'j}$,
and if player~$i$ is also rewarded for player
$\hat{i}$'s report, then intuitively player $i$ would have no incentive to bluff about~$i'$.
That is, {\em not only a player is paid for reporting the distributions he knows,
but he is also paid for keeping quiet about the distributions he does not know and letting the experts speak.}
Surely reporting $\cD_{i'j}$ maximizes player $i$'s expected reward, but he does not have the information to decide what $\cD_{i'j}$ is.%
\footnote{Using the standard language from epistemic game theory, player $i$'s information set contains at least two different distributions for $i'$'s value for $j$.}
Therefore, as long as reporting ``$\bot$''
 gives player $i$ the same utility as the unknown strategy of reporting $\cD_{i'j}$, and as long as reporting any distribution other than $\cD_{i'j}$ gives him a strictly smaller utility,
 %(which can be guaranteed by a proper scoring rule),
player $i$ will report ``$\bot$'' about $\cD_{i'j}$.

%More precisely, t
Taking mechanism $\cM'_{IEUD}$ in Section~\ref{sec:partial} as an example,
the players are rewarded as follows:
%we can change the reward steps to the following:
\begin{itemize}
%\vspace{-5pt}
\item
For each player $i'$ and item $j$, let $R_{i'j}$ be the set of players who did not
report ``$\bot$'' about the value of $i'$ for $j$.
Randomly select a player $\hat{i}$ from $R_{i'j}$ and
let $r_{i'j} = BSR(\cD^{\hat{i}}_{i'j}, b_{i'j})$.
%, where $\cD^{i'}_{ij}$ is the distribution reported by $i'$ and $b_{ij}$ is $i$'s reported value for $j$.
Reward {\em every} player $i\neq i'$ using $r_{i'j}$, properly scaled.
%, scaled properly so that the total reward given by the mechanism is at most $\epsilon$.
%\vspace{-5pt}
\end{itemize}
%Other parts of the mechanism remains the same.
Note that the reward $r_{i'j}$ is given to player $i$ even if he has reported $\cD^i_{i'j} = \bot$.

It is easy to see that, for any $k$-informed information elicitation instance with $k\geq 1$,
if all players except $i$ report their true values and true knowledge, then player $i$'s best strategy is to tell the truth about his own.
Indeed, reporting his true values is still dominant no matter what knowledge the players' report.
Moreover, for each player $i'$ and item $j$, there are two ways for player~$i$ to maximize
the reward $r_{i'j}$ he receives: (1) reporting $\cD^i_{i'j} = \bot$, so that $\hat{i}$ is chosen with probability 1
from the set of players who actually know $\cD_{i'j}$; or (2) successfully guessing $\cD_{i'j}$ and reporting it, so that
$\hat{i}$'s report is still $\cD_{i'j}$ with probability 1.
Note that the latter is not a well-defined Bayesian strategy, because $i$ does not have enough information to carry it out.
Accordingly, the resulting information elicitation mechanism is Bayesian incentive compatibility (BIC):
%% is defined for crowdsourced Bayesian mechanisms
%%in a similar way to that for Bayesian mechanisms:
that is, all players reporting their true
values and true knowledge is a Bayesian Nash equilibrium.
In fact, this is the only Bayesian Nash equilibrium in the mechanism,
besides the unachieveable ones where each player reports the unknown true distributions.
%%
%All the other mechanisms with knowledge graphs being at least 1-informed
%%%The mechanisms in Sections \ref{sec:warm:myerson} and \ref{sec:player-wise}
%%can be changed similarly.
%%except that the distributions are for different players rather than player-item pairs.
%We again summarize
%
%have the following theorem, whose proof has been omitted.
We again summarize our results in the theorem below.

\begin{theorem}\label{thm:bluff}
For any information elicitation
mechanism in previous sections where the knowledge graph is at least 1-informed,
the revised mechanism does not rely on the no-bluff assumption, and
%there exists a BIC crowdsourced mechanism $\cM'$ with the same expected revenue in the same auction settings,
%without making the no-bluff assumption.
%Moreover,
all players reporting their true values and true knowledge is the unique Bayesian Nash equilibrium.
Moreover, the mechanism's revenue is the same as before with an $\epsilon$ additive loss.
%, with the same expected revenue as before except
%, besides the unachieveable ones where each player reports the unknown true distributions.
\end{theorem}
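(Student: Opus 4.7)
The plan is to instrument each crowdsourced Bayesian mechanism of Sections \ref{sec:k=0} and \ref{sec:partial} with the modified scoring-rule rewards described just before the theorem, and then verify four things in sequence: truthful value reporting is still dominant; truthful knowledge reporting is a Bayesian Nash equilibrium for every player; bluffing is ruled out so the truthful BNE is the unique achievable one; and the revenue guarantee from the underlying theorem carries over up to an additive $\epsilon$.

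I would first observe that the reward $r_{i'j}$ handed to player $i$ is a function of the reported knowledge and of $b_{i'j}$, but never of $b_i$ itself. Adding such a reward cannot distort $i$'s incentive to report his valuation, so the dominant-strategy arguments for $b_i$ carry over verbatim from Lemmas \ref{ud:truthful}, \ref{lem:M'csa2-dst}, and \ref{lem:2dst}. Assuming the other players use the truthful strategy, I would then handle the two cases for $i$'s report about $(i', j)$. If $(i, i') \in G_j$, every truthful co-reporter in $R_{i'j}$ submits the correct $\cD_{i'j}$, so reporting $\cD^i_{i'j} = \cD_{i'j}$ yields the maximum $\bE[BSR(\cD_{i'j}, b_{i'j})]$, while any non-$\bot$ misreport $\cD' \neq \cD_{i'j}$ is strictly worse by strict properness: with positive probability $\hat{i} = i$, in which case the reward drops to $\bE[BSR(\cD', b_{i'j})]$. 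If $(i, i') \notin G_j$, then in any Bayesian strategy $\cD^i_{i'j}$ must be measurable with respect to $i$'s private information, which does not pin down $\cD_{i'j}$; reporting $\bot$ defers to a genuine knower (who exists because $G_j$ is $k$-bounded with $k \geq 1$) and achieves the strictly-proper optimum in expectation, whereas any fixed $\cD'$ is strictly worse, since it cannot coincide with every distribution in $i$'s information set and with positive probability $\hat{i} = i$.

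Putting both cases together, the truthful profile is a BNE, and more importantly any achievable profile in which some player bluffs about an unknown distribution fails the best-response test for that player, so the only achievable BNE is the truthful one (the ``unachievable'' BNEs referenced in the theorem are precisely those requiring an ignorant player to correctly report a prior he does not know). For the revenue bound, under truthful play the reported knowledge coincides with the true knowledge, so the underlying mechanism's allocation and prices are identical to those analyzed in Theorems \ref{thm:unit}, \ref{thm:newbvcg}, \ref{thm:unit-k}, \ref{thm:additive}, \ref{thm:myerson}, and Corollary \ref{col:additive}; the total scoring-rule payout is scaled to be bounded by $\epsilon$, so each revenue guarantee survives with only an $\epsilon$ additive loss.

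The main obstacle will be the tie in Case A when more than one player knows $\cD_{i'j}$: strict properness of $BSR$ alone only makes truthful reporting weakly preferred to reporting $\bot$, since a knower who defects to $\bot$ simply delegates to the remaining truthful knowers and earns the same expected reward. Handling this cleanly requires the same tie-breaking convention already invoked in Theorem \ref{thm:nobluff}, namely that ties in expected reward are resolved in favor of reporting true knowledge; once this is made explicit the uniqueness claim follows without further work.
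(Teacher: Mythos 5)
Your proposal takes essentially the same route as the paper's (informal) argument preceding the theorem statement: the reward $r_{i'j}$ is independent of $b_i$ so value truthfulness is preserved; for a non-knower, reporting $\bot$ achieves the same expected reward as the unachievable strategy of correctly naming $\cD_{i'j}$, while any bluff $\cD'$ is strictly worse because with positive probability $\hat{i}=i$ and $\cD'$ cannot match $\cD_{i'j}$ on all of $i$'s information set; and under truthful play the allocation and prices coincide with the underlying mechanism's, so revenue survives up to the $\epsilon$ the rewards cost. The two-case analysis, the use of the positive-probability-of-being-selected event, and the appeal to strict properness are all the same moves the paper sketches.

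What you do \emph{better} than the paper is flag the knower's tie in Case~A explicitly. The paper itself notes that ``there are two ways to maximize the reward $r_{i'j}$'' (reporting $\bot$, or reporting the true $\cD_{i'j}$), which means a knower who defects to $\bot$ when at least one other truthful knower exists earns the same expected reward -- so, contrary to the theorem's wording, truthful reporting is not the unique achievable BNE absent some additional convention. Your proposed fix (break ties in favor of reporting true knowledge) is the right one, but your attribution is slightly off: Theorem~\ref{thm:nobluff} does not state such a convention -- there the rewards are $r^i_{i'j}$, paid only for $i$'s own report, so under no-bluff a knower is strictly better off reporting; the tie only arises in the everyone-gets-$r_{i'j}$ scheme used here. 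The convention you actually want to invoke is the one stated in the Discussions of the introduction: ``a player tells the truth as long as lying does not strictly benefit him.'' With that citation corrected, your write-up is a faithful -- and somewhat more careful -- reconstruction of the paper's argument.
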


%\paragraph{Remark.}
%By changing the mechanisms to extensive-form, let the players report their knowledge one by one and reward them
%according to
%the last report that is not ``$\bot$'',
%the same results hold under the unique subgame-perfect equilibrium.

When not everything is known and the players may bluff,
a player may not report ``$\bot$'' about a distribution he does not know,
because he may still get some reward in case nobody knows that distribution. It is an interesting open problem to design information elicitation mechanisms when not everything is known and without the no-bluff assumption.

\section{Information Elicitation Mechanisms with Efficient Communication}
\label{app:efficient}

%If communication complexity is a concern, the mechanisms in Sections \ref{sec:k=0} and \ref{sec:partial} are not efficient,
%especially when the distributions are continuous or if their supports are exponentially large.
%%It is easy to see that, if the distributions are continuous or if their supports are exponentially large in $m$ and $n$,
%%then it may not be feasible for the players to report the distributions in their entirety.
%In this section,
%When the players' value distributions have exponentially large supports, or when the distributions are
%continuous but the density functions do not have a succinct representation,
%having the players report the distributions in their entirety could be
%communicationally costly or infeasible.
To improve the communication complexity of our mechanisms, rather than
%
%we define two efficient ways for the seller to query the distributions
%and both of them still approximate the optimal revenue.
%
%Instead of letting
asking each player to report his known distributions in their entirety,
the seller can make specific queries to the players about the distributions.
Indeed, the {\em query complexity} of Bayesian auctions has been studied by \cite{chen2018bayesian} very recently,
where the seller does not know the prior distributions but is given oracle accesses to them.
More precisely, for any distribution $D$ over reals,
%could query the distribution via two types of single-value communication:
%{\em value queries} and {\em quantile queries}.
%Given a distribution $\cD$ over reals,
in a {\em value query}
the seller sends a value~$v$ and the oracle returns the corresponding quantile $q(v)=\Pr_{x\sim D} [x\geq v]$.
In a {\em quantile query}, the seller sends a quantile $q \in [0,1]$ and the oracle returns the corresponding value $v(q)$ such that $\Pr_{x\sim D} [x\geq v(q)] = q$.

In information elicitation auctions, as the players have knowledge about the distributions,
it is very natural for the seller to use them as oracles.
However, it is important to ensure that
the queries to the players
do not destroy their incentives
to be truthful: both to report their true values and to report their true knowledge.%
\footnote{Here a player reporting his true knowledge no longer means that
he reports the true distributions,
but that he answers the seller's queries truthfully.}
%
%a player is never hurt by
%truthfully answering queries about the other players' distributions,
%so that the mechanisms are still 2-DST.
%For example, if after player $i'$ reported $b_{i'j}$ as his value for $j$,
%the seller asks ...
Fortunately, truthfulness in our mechanisms can be easily guaranteed by {\em non-adaptive queries},
where all the queries are made together, before the players report their values.
As shown by \cite{chen2018bayesian},
when the players' value distributions are bounded within $[1, H]$ for a given value $H$,
the number of non-adaptive queries enough
to approximate $OPT$ in Bayesian auctions
is polynomial in $m$ and $n$, but only logarithmic in $H$, which is very efficient. Moreover,
only value queries are needed in this case.
When the distributions have unbounded supports but satisfy small-tail assumptions,
non-adaptive quantile queries are enough,
and the query complexity
is polynomial in $m, n$ and logarithmic in the cut-off value of the tail.
We make the same queries in our mechanisms as in \cite{chen2018bayesian}.

%In particular, when the seller wants to learn a distribution $D$ from a player,
%the seller can provide a value $v\in \bR$ to the player and let the player response the corresponding quantile $q=\Pr_{x\sim D}[x\geq v]$;
%or the seller can provide a quantile $q\in [0,1]$ to the player and let the player response the corresponding value such that $q=\Pr_{x\sim D}[x\geq v]$.

%If the queries are answered by an oracle,
%%if the player aways tells the truth (like an oracle),
%for any Bayesian instance $\hat{\cI}=(N,M,\cD)$ which is either single-good auction, unit-demand auction or additive auction,
%\cite{chen2017query} shows that with polynomial number of oracle queries to the distribution $\cD$,
%the seller can ensure an $(1+\epsilon)$-approximation
%to the revenue that the seller can generate with the exact distribution.
%Since all of the mechanisms in Sections \ref{sec:k=0} and \ref{sec:partial} are 2-DST,
%reporting the true knowledge does not hurt them.
Below we show how to revise our information elicitation mechanisms
to query the players,
%following the design of the queries in \cite{chen2017query},
using mechanism $\cM_{IEUD}$ as an example and for bounded distributions.
The mechanism now has a parameter $\epsilon>0$, which affects its approximation ratio.

\begin{itemize}
\item In Step \ref{step1}, given $\epsilon>0$, let $k = \lceil \log_{1+\epsilon}H\rceil$ and
$\nu=(\nu_{0}, \nu_{1},\dots, \nu_{k-1}, \nu_{k})=(1,(1+\epsilon), (1+\epsilon)^{2}, \dots, (1+\epsilon)^{k-1}, H).$

%For any distribution $\cD_{i'j}$ where $i'\in N_{2}$, let  and

Each player $i$ reports, for each player $i'\neq i$ and item $j$, either $\bot$ or
%Query each player $i\in N_{1}$ in the reporting group for $\cD_{i'j}$ with $v$, and receive
a non-increasing quantile vector
$q^i_{i'j} = (q^i_{i'j;0},\dots, q^i_{i'j;k})$, where $q^i_{i'j;0}=1$.
Allegedly, $q^i_{i'j;l}  = q_{i'j}(\nu_l)$ for each $l\in \{0, \dots, k\}$,
where $q_{i'j}(\cdot)$ is defined by $\cD_{i'j}$. That is, if $(i, i')\in G_j$ then
player $i$ answers the value queries for distribution $\cD_{i'j}$ and value vector $\nu$.

Simultaneously, each player $i$ also reports a valuation $b_i = (b_{ij})_{j\in M}$.

\item
In Step \ref{step5}, if player $i\in N_1$ is the reporter for player $i'\in N_2$ and item $j$, then
construct a discrete distribution $\cD'_{i'j}$ as follows:
$\cD'_{i'j}(\nu_{l}) = q^i_{i'j;l} - q^i_{i'j;l+1}$ for every $l\in \{0, \dots, k\}$, where $q^i_{i'j;k+1}\triangleq 0$.
\end{itemize}
The other parts of the mechanism remain unchanged.

In the revised mechanism, it is still dominant for the players to report their true values, no matter how the queries are answered. Indeed, the fact that distribution $\cD'_{ij}$ is now different from $\cD_{ij}$ does not affect the players' truthfulness in the Bayesian mechanism $\cM_{UD}$.
Moreover,
having player $i$ answer the value queries for $\cD_{i'j}$ is equivalent to
first having him report $\cD^i_{i'j}$ and then having the seller answer the value queries accordingly.
In the latter, reporting $\cD^i_{i'j}$ truthfully never hurts player $i$, because $i\in N_1$ when his knowledge is used.
Thus answering the value queries truthfully never hurts $i$ either, and the mechanism is still 2-DST.
Because the value queries for a distribution $\cD_{i'j}$ may be answered by all the other $n-1$ players (when they all know $\cD_{i'j}$), the query complexity and thus the communication complexity of our mechanisms have
an extra factor $n$ compared with the query complexity in \cite{chen2018bayesian}.

More precisely, we state the following theorem for arbitrary knowledge graphs and bounded distributions.
The proof is relatively easy following those for Section \ref{sec:k=0} and those in  \cite{chen2018bayesian}, thus
has been omitted.
%The results for Section \ref{sec:partial} are similar, thus omitted here.

\begin{theorem}
$\forall \epsilon>0, H>1$ and for any auction instances $\hat{\cI} = (N, M, \cD)$ and
$\cI = (N, M, \cD, G)$,
where each $\cD_{ij}$'s support is bounded in $[1,H]$,
our revised information elicitation mechanisms are 2-DST and make non-adaptive value queries.
Moreover,
\begin{itemize}
\item for single-good auctions,
         with $O(n^2\log_{1+\epsilon} H)$ queries,
	 the mechanism achieves revenue at least $\frac{OPT_K(\cI)}{4(1+\epsilon)}$;
\item for unit-demand auctions,
         with $O(mn^2\log_{1+\epsilon} H)$ queries,
         the mechanism achieves revenue at least $\frac{OPT_K(\cI)}{96(1+\epsilon)}$; and
\item for additive auctions,
        with $O(mn^2\log_{1+\epsilon} H)$ queries,
	the mechanism achieves revenue at least $\frac{OPT_K(\cI)}{70(1+\epsilon)}$.
\end{itemize}
\end{theorem}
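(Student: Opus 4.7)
The plan is to establish the three claimed properties of the revised mechanisms (2-DST, query count, revenue) in that order, leveraging the results already proven for the full-report versions in Sections \ref{sec:k=0} and \ref{sec:partial} together with a coupling argument for the discretization loss.

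First I would prove 2-DST. The key observation, foreshadowed right before the theorem, is that having player $i$ answer the non-adaptive value queries $\{q_{i'j}(\nu_l)\}_{l=0}^k$ is outcome-equivalent to first having $i$ report the full distribution $\cD^i_{i'j}$ and then having the seller compute the answers itself. Thus any strategy in the query-based mechanism corresponds to a strategy in the full-report mechanism that induces the same allocation and payments, and vice versa. Since the full-report mechanisms are 2-DST (Lemmas \ref{ud:truthful}, \ref{lem:M'csa2-dst}, \ref{add:IM:truthful}, \ref{add:ola:truthful}, \ref{add:bvcg:truthful}, \ref{lem:2dst}), the revised mechanisms inherit 2-DST. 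The only subtlety is that the ``reported distribution'' now lies in the $(k+1)$-point discrete family $\{\nu_0,\dots,\nu_k\}$, but this restriction is a constraint on the message space, not on the solution concept, so the argument goes through verbatim.

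Next I would bound the query count. Each queried distribution needs exactly $k+1 = O(\log_{1+\epsilon} H)$ value queries to build the quantile vector. In the single-good setting there are $n$ distributions and each is queried from at most $n-1$ players, giving $O(n^2 \log_{1+\epsilon} H)$ non-adaptive queries; in the unit-demand and additive settings there are $nm$ distributions, giving $O(mn^2\log_{1+\epsilon} H)$. The queries are fixed upfront (they depend only on $H$ and $\epsilon$, not on the reports), hence non-adaptive.

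The main technical step is the revenue bound, and the heart of it is a discretization lemma. For each distribution $\cD_{ij}$, the constructed $\cD'_{ij}$ is the law of the random variable $X' = \nu_l$ whenever $X \in [\nu_l, \nu_{l+1})$, where $X\sim \cD_{ij}$. This gives a coupling with $X' \le X \le (1+\epsilon)X'$ pointwise. I would show that under this coupling, for any DST mechanism $\cM$, $\mathop\bE_{\cD} Rev(\cM) \le (1+\epsilon)\mathop\bE_{\cD'} Rev(\cM)$, and, by applying the same coupling to Myerson/BIC mechanisms on the projected instance, $OPT_K$ under the true distributions is at most $(1+\epsilon)$ times $OPT_K$ under the discretized distributions. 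Since the revised mechanism is exactly the original mechanism applied to a crowdsourced Bayesian instance whose reported distributions are $\cD'$ rather than $\cD$ (apart from the players' true value reports, which equal $b_{ij}$ drawn from $\cD_{ij}$), one more factor of $(1+\epsilon)$ will appear when converting bids drawn from $\cD$ to bids drawn from $\cD'$. Composing with Theorem~\ref{thm:unit} (factor $96$), Theorem~\ref{thm:newbvcg} (factor $70$), and Myerson's mechanism substituted into $\cM_{CSUD}$ for the single-good case (factor $4$, via the remark at the end of Section \ref{subsec:unit}) yields the three bounds.

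The main obstacle I anticipate is in the revenue step: the mechanism computes reserve prices, virtual values, entry fees and $\beta$-exclusive/adjusted revenues on the discretized $\cD'$, yet charges players based on bids drawn from the true $\cD$. One must therefore verify that each ingredient used in the proofs of Theorems \ref{thm:unit} and \ref{thm:newbvcg} (the COPIES reduction and projection lemma for unit-demand; the adjusted-revenue decomposition, Lemma \ref{lem:csa:A}, Lemma \ref{lem:csa:B}, Lemma \ref{lem:csa:C}, and the $\cM_{1LA}$ bound for additive) is robust to replacing $\cD$ by its geometric-grid rounding, losing only an extra multiplicative $(1+\epsilon)$. This is precisely the style of robustness argument carried out in \cite{chen2017query}, so I would import their per-distribution rounding bound and apply it uniformly to each distribution involved, which compounds additively across players/items in the revenue sum but only multiplicatively in the final ratio.
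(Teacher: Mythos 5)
Your plan (2-DST via the message-space equivalence, query count by direct counting, revenue via geometric-grid rounding and a coupling argument in the style of \cite{chen2017query}) is the right high-level approach, and the paper itself only gestures at this. The 2-DST argument and the query count are fine. But the revenue step as written does not give the claimed bound, and one of its ingredients is false.

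First, the bookkeeping. You propose to lose $(1+\epsilon)$ twice: once in $OPT_K(\cD)\leq (1+\epsilon)OPT_K(\cD')$ and once more ``when converting bids drawn from $\cD$ to bids drawn from $\cD'$.'' Composed with the base ratios $\alpha\in\{4,96,70\}$, that gives $\frac{OPT_K}{\alpha(1+\epsilon)^2}$, not $\frac{OPT_K}{\alpha(1+\epsilon)}$. The second loss should not occur. Under the coupling $v'_{ij}\leq v_{ij}\leq(1+\epsilon)v'_{ij}$, the revised mechanism runs the inner Bayesian sub-mechanisms on prior $\cD'$ but with true bids $v\sim\cD$, which pointwise dominate bids from $\cD'$; for revenue-monotone (posted-price/threshold) sub-mechanisms this can only help, so the revenue with bids from $\cD$ is at least the revenue the original analysis certifies with bids from $\cD'$. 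The sole $(1+\epsilon)$ must come from the benchmark comparison.

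Second, the auxiliary lemma ``for any DST mechanism $\cM$, $\bE_{\cD}Rev(\cM)\leq(1+\epsilon)\bE_{\cD'}Rev(\cM)$'' is false. Take a single bidder with $v$ deterministically $1.5$, a posted price of $1.5$, and $\epsilon=1$: the grid is $\{1,2,\dots\}$ so $v'=1$, giving $\bE_{\cD}Rev=1.5$ but $\bE_{\cD'}Rev=0$. What is true is the claim about the optima, $OPT_K(\cD)\leq(1+\epsilon)OPT_K(\cD')$, but its proof exhibits, for any mechanism achieving $OPT_K(\cD)$, a \emph{modified} mechanism (prices rounded down to the grid or scaled by $1/(1+\epsilon)$) whose revenue on $\cD'$ is at least a $1/(1+\epsilon)$ fraction --- not the same mechanism applied to $\cD'$. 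Finally, revenue-monotonicity of $\cM_{UD}$ (a sequential posted-price mechanism for unit-demand) in the bid profile is plausible but nontrivial --- raising one player's bid can change which item he takes and displace another buyer --- and is precisely the step you flag as an ``obstacle''; it needs to be verified (or circumvented via the COPIES virtual-value argument) rather than imported by analogy.
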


The case of unbounded distributions with small-tail assumptions, as well as
the cases of $k$-informed knowledge graphs with $k\geq 1$ and bounded/unbounded distributions,
are similar. Indeed, the approximation ratios of our main results
and the query complexity of Bayesian auctions in \cite{chen2018bayesian} combine nicely with each other, resulting in
crowdsoruced Bayesian mechanisms with very efficient communication.

To further improve the communication complexity of our mechanisms,
the seller can change them into extensive-form mechanisms
and ask each player $i$ to first report a bit about each pair $(i',j)$, indicating whether $i$ knows $\cD_{i'j}$ or not.
The seller then selects one reporter and only asks him to answer the oracle queries. By doing so,
the extra factor $n$ in the query complexity of our mechanisms can be dropped, with the players communicating
 $O(n^2 m)$ bits besides the queries.

If scoring rules are used to buy the players' knowledge, then we can use the following {\em value scoring rule} $g_V$
to reward value queries, which follows directly from Brier's scoring rule~\cite{brier1950verification}.
More precisely, for any value query $v\in \bR$, letting $x$ be a sample from the underlying value distribution and~$q$ be the answer of a reporter to the query, then
$$g_V(x,q; v) \triangleq 1 + 2q {\bf I}_{x \leq v} - q^2.$$
To reward the players' answers to quantile queries, we define the following {\em quantile scoring rule} $g_Q$, which is a variant of the one in  \cite{cervera1996proper}.
More precisely, for any quantile query $q\in [0, 1]$, letting $x$ be a sample from the underlying value distribution and $z$ be the answer of a reporter to the query, then
$$g_{Q}(x,z; q) \triangleq q\arctan z - (\arctan z - \arctan x){\bf I}_{z\geq x}.$$
Both scoring rules are strictly proper scoring rules with bounded ranges.

\section{Aggregating the Players' Refined Insider Knowledge}\label{app:refine}
As mentioned in the introduction of the paper,
since the common prior assumption implies that every player has
correct and exact (that is, no more, no less) knowledge
about all the distributions,
in the main body of this paper we do not consider scenarios where the players have ``insider'' knowledge.
% and can
%refine the prior known to them.
Incorrect insider knowledge has been studied in
\cite{bergemann2012robust, artemov2013robust, chen2013mechanism, chen2015tight, bergemann2015informational}
and is not the focus of this paper.
However, sometimes
each player may have {\em correct} insider knowledge
about the other players' value
distributions%
%, and such knowledge is actually
%correct%
 \footnote{A similar scenario in the literature of contracts was considered in \cite{carroll2016robust}
with different concerns.}:
 that is, his knowledge is a refinement of the prior.
%Each player's extra knowledge naturally depends  on the other players' private values.

Different players' knowledge, although all correct, may refine the prior in different
ways.
For example, when the prior distribution of a player $i$'s value for an item $j$ is
 uniform over $[0, 100]$, after $v_{ij}$ is drawn,
another player $i'$ may observe whether
$v_{ij} \geq 50$ or not,
and a third player $i''$ may observe whether $v_{ij}\in [20, 80]$.
Thus, player $i'$ knows whether $v_{ij}$ is uniform over $[0, 50]$ or $(50, 100]$,
depending on
his signal; and player $i''$ knows whether $v_{ij}$ is uniform over $[20, 80]$ or $[0, 20)\cup(80, 100]$, depending on the signal $i''$ observes.

The players' correct knowledge must be consistent with each other and
one can obtain an even better refinement of the prior
by combining their knowledge together.
In the example above, if player $i'$ observes $v_{ij}< 50$ and player $i''$ observes $v_{ij}\notin [20, 80]$,
then it must be that $v_{ij}\in [0, 20)$ and the posterior distribution is uniform in this range.
However, neither $i'$ nor $i''$ knows this fact.

\paragraph{Enhanced knowledge graphs.}
To model the players' insider knowledge,
we equip the knowledge graphs in the information elicitation setting
with {\em information sets}.
To begin with, for any two players $i, i'$ and item $j$ such that $(i, i')\in G_j$,
there is a partition $\cP^{i}_{i'j}$ of the support of $\cD_{i'j}$, representing the possible signals player $i$ will observe about $\cD_{i'j}$.
After the true value $v_{i'j}$ is drawn, letting $S(v_{i'j})$ be the unique set
in the partition that contains $v_{i'j}$,
player $i$ learns the fact that player~$i'$'s true value for $j$ falls
into $S(v_{i'j})$ and the posterior distribution is $\cD_{i'j}|S(v_{i'j})$.
More generally, the partitions may  even depend on player $i$'s own true valuation $v_i$,
because $v_i$ is part of the information he has.
Because different values are independently drawn,
given
$v_i$ and
the information sets observed by $i$ for different distributions of
the other players,
$i$ considers their posterior distributions to be independent.

All our mechanisms remain 2-DST with respect to the players' refined knowledge,
%
%It is not hard to see that for unit-demand auctions with arbitrary knowledge graphs,
%when the players have refined knowledge
%our mechanisms remain 2-DST,
where a player's true knowledge is now the posterior distributions known by him.
Since the optimal Bayesian revenue increases
when the distributions are refined \cite{azar2012crowdsourced},
the expected revenue of our mechanisms also increases,
where the expectation is further taken over the private signals observed by the players.
%
%For additive auctions the same holds when everything is known,
%and a more careful analysis will be needed to understand how the revenue of our mechanism changes
%with respect to the players' refined knowledge when not everything is known.
%For single-good auctions when the knowledge-graph is 2-connected,
%one needs to compute a player's virtual value
%only when his distribution is reported for the first time, otherwise
%the players may not report their true knowledge.
%%Still, it remains to be understood how the revenue of the corresponding mechanism changes.
However, the revenue benchmark is still defined as before: that is, with respect to the prior and the knowledge graphs, without considering the refinements.
%This is also the case in \cite{azar2012crowdsourced}, where each player may privately refine the prior.
%Finally, although our mechanisms and the ones in
%allow each player to privately refine the prior,
%the revenue benchmarks are still defined with respect to the prior (and the knowledge graphs),
%without considering the refinements.
An interesting open problem is to design information elicitation mechanisms
whose revenue approximates a more demanding benchmark ---
the optimal revenue based on
the ``aggregated refinement'' obtained by combining all players' refinements together.

\section{Information Elicitation Mechanisms for Combinatorial Auctions}
\label{sec:player-wise}

%\vspace{-5pt}

In our main results, the knowledge graphs for different items can be totally different from each other:
player 1's value distribution for item 2 may be known by player 3,
while his value distribution for item 4 may be known by player 5, etc.
% may know player 2's value distribution for item 3, while .
When all the knowledge graphs are the same, we say that the auction has {\em player-wise information}:
all value distributions of a player have the same ``knower''.

%
%the problem becomes a special case which is also interesting:
%that is, a player knows another player's value distributions for all items.
%We denote this setting as , in contrast with previous settings which can be considered as {\em (player, item)-wise information}.

With player-wise information, only one knowledge graph $G$ is needed: an edge $(i, i')$ is in $G$ if and only if player $i$
knows the distribution $\cD_{i'} = \cD_{i'1}\times\cdots\times\cD_{i'm}$.
Such an information setting can also model arbitrary {\em combinatorial auctions}, where
%a player's valuation function may not be defined by his item values.
%combinatorial auctions.
%Recall that i
%In such an auction,
each player $i$'s valuation function $v_i$
maps each subset of items to a non-negative real, with $v_i(\emptyset) = 0$.
Thus the distribution $\cD_i$ is over such functions, and
$i$'s values for two subsets of items can be arbitrarily correlated.
%% in crowdsourced Bayesian settings with player-wise information.
%Here a Bayesian instance is denoted by $\hat{\cI} = (N, M, \cD = \times_{i\in N} \cD_i)$, where $\cD_i$ is the distribution of $i$'s valuation function, and $i$'s values for different items or subsets of items can be arbitrarily correlated.
%In particular, our results here apply to all unit-demand auctions and additive auctions.%
%\footnote{Note that in unit-demand auctions and additive auctions studied in the literature and in previous sections, a player $i$'s values for different subsets of items are correlated, but his values for different items are independent.}
%, where player has arbitrarily correlated values towards different bundles of items.
Given a combinatorial Bayesian auction instance $\hat{\cI} = (N, M, \cD = \times_{i\in N} \cD_i)$,
a corresponding information elicitation instance is denoted by $\cI= (N, M, \cD, G)$,
where $G$ is a single knowledge graph rather than a vector of graphs.
%Optimal $\epsilon$-BIC and BIC mechanisms have been designed in \cite{cai2012algorithmic, cai2012optimal} for combinatorial auctions, and
%little is known in the literature about how to construct (approximately) optimal DST Bayesian mechanisms here.

Player-wise information is a much stronger assumption than {\em (player, item)-wise information}, the information settings considered in the main body of this paper.
We mention this model here mainly for completeness and to facilitate the comparison of our results with the literature.
%we show that
%: an edge $(i, i')\in G$ means that player $i$ knows $\cD_{i'}$.
Indeed, with player-wise information and
%when the knowledge graph is at least 1-informed,
by randomly partitioning the players
into reporters and potential buyers as we have done in mechanisms $\cM_{IEUD}$ and $\cM'_{IEUD}$,
we have a simple
blackbox reduction from Bayesian mechanisms to information elicitation mechanisms.
Our mechanism $\cM_{IEB}$ is shown
%as shown in mechanism
%where the latter can be obtained almost for free from the former.
%$\cM_{IEB}$ below.
in Mechanism \ref{alg:general},
where $\cM_B$ can be any Bayesian mechanism.
% $\cM_B$ as a black-box.
We have the following theorem, whose proof is similar to that of Theorem \ref{thm:unit-k} and thus omitted.

\vspace{-5pt}

\begin{algorithm}
\floatname{algorithm}{Mechanism}
  \caption{\hspace{-4pt}  $\cM_{IEB}$}
  \label{alg:general}
  \begin{algorithmic}[1]

  \STATE Each player $i$ reports a valuation function $b_i$ and a knowledge $K_i = (\cD^i_{i'})_{i'\neq i}$.

  \STATE Randomly partition the players into two sets, $N_1$ and $N_2$, where each player is independently put in $N_1$ with probability $q=1-(k+1)^{-\frac{1}{k}}$ and $N_2$ with probability $1-q$.

  \STATE Let $N_3$ be the set of players in $N_2$ whose distributions are reported by some players in $N_1$, and let $\cD'_{N_3}$ be the vector of reported distributions.

%  \STATE Reward players in $N_1$ using Brier's scoring rule, with total reward $R\leq \epsilon$.

  \STATE Run $\cM_{B}$ on the Bayesian instance $\hat{\cI}_{N_3} = (N_3, M, \cD'_{N_3})$ and the valuation functions $b_{N_3}$;
  and use the resulting allocation and prices to sell to players in $N_3$.

\end{algorithmic}
\end{algorithm}

\begin{theorem}\label{thm:fullinfo}
For any $k\in [n-1]$, for any combinatorial auction instances $\hat{\cI} = (N, M, \cD)$ and $\cI = (N, M, \cD, G)$ where $\cI$ has player-wise information
and $G$ is $k$-informed,
if $\cM_B$ is DST then $\cM_{IEB}$ is 2-DST; and if $\cM_B$ is BIC then $\cM_{IEB}$ is BIC. Moreover, if $\cM_B$ is a $\sigma$-approximation to $OPT$, then $\cM_{IEB}$ is a $\tau_k \sigma$-approximation to $OPT$.
%$\bE_{v\sim \cD} Rev(\cM_{IEB}(\cI)) \geq \tau_k \bE_{v\sim \cD} Rev(\cM_B(\hat{\cI})) - \epsilon$.
\end{theorem}
With player-wise information,
a player $i$'s valuation distribution $\cD_i$
is either ``completely known'' or
``completely unknown''.
Thus for unit-demand auctions there is no need to
adopt the COPIES setting to handle the scenario where only part of $\cD_i$
is reported.
As before, the approximation ratio of $\cM_{IEB}$ increases
as $k$ gets larger
and
converges to that of the Bayesian mechanism.
When $k=1$, it is a 4-approximation to $OPT$
using
the optimal Bayesian mechanism as a black-box.
Note that the model studied in \cite{azar2012crowdsourced} is a very special case even compared with the player-wise information setting:
that is, $G$ is the complete graph and $k=n-1$.
Since $\tau_{n-1} = \frac{n-1}{n^{n/(n-1)}} \rightarrow 1-\frac{1}{n}$ when $n$ gets larger,
the revenue of our mechanism essentially matches that of \cite{azar2012crowdsourced} for combinatorial auctions.
Moreover, if the players can observe private signals and refine their knowledge about the prior,
then our mechanism can also aggregate such refinements as in \cite{azar2012crowdsourced}.
Finally, we have the following corollary for arbitrary knowledge graphs (i.e., $k$ may be 0).
%by setting $q=\frac{1}{2}$ in $\cM_{IEB}$,

%we can approximate the knowledge-based revenue benchmark $OPT_K$.
%consider the Bayesian instance projected to the knowledge graph,
%and benchmark against the optimal revenue there. The following corollary holds
%simply by setting $q=\frac{1}{2}$ in $\cM_{IEB}$.
%
%benchmark which is the optimal
%revenue on the Bayesian instance projected to the knowledge graph.
%

\begin{corollary}
For any combinatorial auction instances $\hat{\cI} = (N, M, \cD)$ and $\cI = (N, M, \cD, G)$ with player-wise information,
%and $\cI' = (N, M, \cD')$
%where $\cD'$ is $\cD$ projected on $G$, for any Bayesian mechanism $\cM_B$,
mechanism $\cM_{IEB}$ with $q=\frac{1}{2}$ is a $\frac{\sigma}{4}$-approximation to $OPT_K(\cI)$
%such that:
% if $\cM_B$ is DST then $\cM_{IEB}$ is 2-DST; if $\cM_B$ is BIC then $\cM_{IEB}$ is BIC.
% Moreover,
if $\cM_B$ is a $\sigma$-approximation to $OPT$.
% then $\cM_{IEB}(\cI)$ is.
\end{corollary}

\newpage

\bibliographystyle{abbrv}% the recommended bibstyle
\bibliography{ref}

\end{document}